\newtheorem{theorem}{Theorem}
\newtheorem{lemma}[theorem]{Lemma}
\newcommand{\true}{true}
\newcommand{\false}{false}
\newcommand{\ifcode}{{\bf if} \,}
\newcommand{\thencode}{{\bf then} \,}
\newcommand{\elsecode}{{\bf else} \,}
\newcommand{\goto}{{\bf go to} \,}
\newcommand{\waittill}{{\bf wait till} \,}
\newcommand{\forcode}{{\bf for} \,}
\newcommand{\foreachcode}{{\bf for each} \,}
\newcommand{\newcode}{{\bf new} \,}
\newcommand{\continue}{{\bf continue} \,}
\newcommand{\cas}{\mbox{CAS}}
\newcommand{\fas}{\mbox{FAS}}
\newcommand{\swap}{\fas}
\newcommand{\nil}{\mbox{NIL}}
\newcommand{\spclnode}{\mbox{\sc SpecialNode}}
\newcommand{\fail}{\mbox{\sc Crash}}
\newcommand{\token}{\mbox{\sc Exit}}
\newcommand{\key}{\token}
\newcommand{\incs}{\mbox{\sc InCS}}
\newcommand{\cssignal}{\mbox{\sc CS\_Signal}}
\newcommand{\nonnilsignal}{\mbox{\sc NonNil\_Signal}}
\newcommand{\qnode}{\mbox{QNode}}
\newcommand{\signalobj}{\mbox{Signal}}
\newcommand{\done}{\mbox{\sc Done}}
\newcommand{\pid}{\mbox{\sc{Pid}}}
\newcommand{\tail}{\mbox{\sc Tail}}
\newcommand{\nodes}{\mbox{\sc Node}}
\newcommand{\inrepair}{\mbox{\sc InRepair}}
\newcommand{\pred}{\mbox{\sc Pred}}
\newcommand{\gotocs}{\mbox{\sc GoToCS}}
\newcommand{\successor}{\mbox{\sc Succ}}
\newcommand{\repairer}{\mbox{\sc Repairer}}
\newcommand{\status}{\mbox{\sc State}}
\newcommand{\bit}{\mbox{\sc Bit}}
\newcommand{\goaddr}{\mbox{\sc GoAddr}}
\newcommand{\pc}[1]{\mbox{\ensuremath{PC_{#1}}}}
\newcommand{\paths}[1]{\mbox{$Paths_{#1}$}}
\newcommand{\pathspi}{\paths{\pi}}
\newcommand{\mynode}{\ensuremath{mynode}}
\newcommand{\cur}{\ensuremath{cur}}
\newcommand{\mynodepi}{\ensuremath{\mynode_{\pi}}}
\newcommand{\curpi}{\ensuremath{\cur_{\pi}}}
\newcommand{\mseq}{\mbox{$\sigma$}}
\newcommand{\mseqpi}{\ensuremath{\mseq_{\pi}}}
\newcommand{\V}{\ensuremath{V}}
\newcommand{\E}{\ensuremath{E}}
\newcommand{\Vpi}{\ensuremath{\V_{\pi}}}
\newcommand{\Epi}{\ensuremath{\E_{\pi}}}
\newcommand{\pcpi}{\pc{\pi}}
\newcommand{\idx}{\ensuremath{i}}
\newcommand{\idxpi}{\ensuremath{\idx_\pi}}
\newcommand{\mypred}{\ensuremath{mypred}}
\newcommand{\mypredpi}{\ensuremath{\mypred_{\pi}}}
\newcommand{\tailpi}{\ensuremath{tail_{\pi}}}
\newcommand{\tailpath}{\ensuremath{tailpath}}
\newcommand{\tailpathpi}{\ensuremath{\tailpath_\pi}}
\newcommand{\headpath}{\ensuremath{headpath}}
\newcommand{\headpathpi}{\ensuremath{\headpath_\pi}}
\newcommand{\curpred}{\ensuremath{curpred}}
\newcommand{\curpredpi}{\ensuremath{\curpred_\pi}}
\newcommand{\mypath}{\ensuremath{mypath}}
\newcommand{\mypathpi}{\ensuremath{\mypath_\pi}}
\newcommand{\go}{\ensuremath{go}}
\newcommand{\gopi}{\ensuremath{\go_\pi}}
\newcommand{\addr}{\ensuremath{addr}}
\newcommand{\addrpi}{\ensuremath{\addr_\pi}}
\newcommand{\first}{\mbox{first}}
\newcommand{\wait}{\ensuremath{{\tt wait}}}
\newcommand{\signal}{\ensuremath{{\tt signal}}}
\renewcommand{\signal}{\ensuremath{{\tt set}}} 
\newcommand{\applywait}{\ensuremath{\wait}}
\newcommand{\applysignal}{\ensuremath{\signal}}
\newcommand{\front}{\ensuremath{{\tt end}}} 
\newcommand{\rear}{\ensuremath{{\tt start}}} 
\newcommand{\owner}{\ensuremath{{\tt owner}}}
\newcommand{\pch}[1]{\mbox{\ensuremath{\widehat{PC_{#1}}}}}
\newcommand{\port}{\ensuremath{port}}
\newcommand{\porth}[1]{\ensuremath{\widehat{\port_{#1}}}}
\newcommand{\portpih}{\ensuremath{\widehat{\port_{\pi}}}}
\newcommand{\pcpih}{\pch{\pi}}
\newcommand{\node}{\ensuremath{\widehat{node}}}
\newcommand{\nodepi}{\ensuremath{\node_\pi}}
\newcommand{\calp}{\ensuremath{{\cal P}}}
\newcommand{\caln}{\ensuremath{{\cal N}}}
\newcommand{\calq}{\ensuremath{{\cal Q}}}
\newcommand{\fragment}{\texttt{fragment}}
\newcommand{\fraghead}{\texttt{head}}
\newcommand{\fragtail}{\texttt{tail}}
\newcommand{\limplies}{\Rightarrow}
\newcommand{\lbicond}{\Leftrightarrow}
\newcommand{\rlock}{\mbox{\sc RLock}}
\newcommand{\gh}{\mbox{GH}}
\newcommand{\present}{\ensuremath{{\tt 1}}} 
\newcommand{\absent}{\ensuremath{{\tt 0}}} 
\newcounter{linecounter}
\newcommand*{\procline}[1]{{\bf \refstepcounter{linecounter}\thelinecounter\label{ln:#1}.}}
\newcommand*{\refln}[1]{{\bf \ref{ln:#1}}}
\def\final{True}
\newcommand{\cond}[1]{%
	\ifnum\pdfstrcmp{\final}{True}=0 \unskip\else #1 \fi\ignorespaces}
\title{A Recoverable Mutex Algorithm with Sub-logarithmic RMR on Both CC and DSM}
\author{Prasad Jayanti\footnote{Dartmouth College, Hanover NH 03755, USA} 
	\and Siddhartha Jayanti\footnote{Massachusetts Institute of Technology, Cambridge MA 02139, USA} 
	\and Anup Joshi\footnote{Dartmouth College, Hanover NH 03755, USA}} 
\date{}
\begin{document}
\maketitle
\begin{abstract}
In light of recent advances in non-volatile main memory technology,
Golab and Ramaraju reformulated the traditional mutex problem into the novel {\em Recoverable Mutual Exclusion} (RME) problem.
In the best known solution for RME, due to Golab and Hendler from PODC 2017,
a process incurs at most $O(\frac{\log n}{\log \log n})$ remote memory references (RMRs) per passage,
where a passage is an interval from when a process enters the Try section to when it subsequently returns to Remainder.
Their algorithm, however, guarantees this bound only for cache-coherent (CC) multiprocessors,
leaving open the question of whether a similar bound is possible 
for distributed shared memory (DSM) multiprocessors.

We answer this question affirmatively by designing an algorithm that 
satisfies the same complexity bound as Golab and Hendler's for both CC and DSM multiprocessors.
Our algorithm has some additional advantages over Golab and Hendler's:
(i) its Exit section is wait-free,
(ii) it uses only the Fetch-and-Store instruction, and
(iii) on a CC machine our algorithm needs each process to have a cache of only $O(1)$ words, while their algorithm needs $O(n)$ words .
\end{abstract}

\section{Introduction}

In light of recent advances in non-volatile main memory technology,
Golab and Ramaraju reformulated the traditional mutex problem into the novel
{\em Recoverable Mutual Exclusion} (RME) problem.
The best known algorithm for RME, due to Golab and Hendler \cite{Golab:rmutex2},
has sub-logarithmic remote memory reference (RMR) complexity for Cache-Coherent (CC) multiprocessors,
but unbounded RMR complexity for Distributed Shared Memory (DSM) multiprocessors.
In this paper, we present an algorithm that ensures the same sublogarithmic bound
as theirs for both CC and DSM multiprocessors, 
besides possessing some additional desirable properties. 
In the rest of this section, we describe the model,
the RME problem, the complexity measure, and then describe this paper's contribution in the context of prior work.

\subsection{The Model}

The advent of {\em Non-Volatile Random Access Memory} (NVRAM) \cite{nvmm:pcm}\cite{nvmm:memristor}\cite{nvmm:mram} --- memory whose contents remain intact
despite process crashes --- has led to a new and natural model of a multiprocessor
and spurred research on the design of algorithms for this model.
In this model, asynchronous processes communicate by applying operations on shared variables stored in an NVRAM.
A process may crash from time to time.
When a process $\pi$ crashes, all of $\pi$'s registers lose their contents: specifically,
$\pi$'s program counter is reset to point to a default location $\ell$ in $\pi$'s program,
and all other registers of $\pi$ are reset to $\bot$;
however, the shared variables stored in the NVRAM are unaffected by a crash and retain their values.
A crashed process $\pi$ eventually restarts, executing the program
beginning from the instruction at the default location $\ell$,
regardless of where in the program $\pi$ might have previously crashed.

When designing algorithms for this model, informally the goal is to ensure that 
when a crashed process restarts, it reconstructs the lost state by consulting the shared variables in NVRAM.
To appreciate that this goal can be challenging, suppose that a process $\pi$ crashes 
when it is just about to perform an operation such as $r$ $\leftarrow$ \fas$(X,5)$,
which fetches the value of the shared variable $X$ into $\pi$'s register $r$
and then stores 5 in $X$.
If a different process $\pi'$ performs \fas$(X,10)$ and then $\pi$ restarts,
$\pi$ cannot distinguish whether it crashed immediately before
or immediately after executing its \fas\ instruction.

\subsection{The Recoverable Mutual Exclusion (RME) problem}

In the Recoverable Mutual Exclusion (RME) problem, there are $n$ asynchronous processes, where 
each process repeatedly cycles through four sections of code---Remainder, Try, Critical, and Exit sections.
An {\em algorithm} (for RME) specifies the code for the Try and Exit sections of each process.
Any process can execute a {\em normal step} or a {\em crash step} at any time.
In a normal step of a process $\pi$, $\pi$ executes the instruction pointed by its program counter $\pcpi$.
We assume that if $\pi$ executes a normal step when in Remainder, $\pi$ moves to Try;
and if $\pi$ executes a normal step when in CS, $\pi$ moves to Exit.
A crash step models the crash of a process and can occur regardless of which section of code the process is in.
A crash step of $\pi$ sets $\pi$'s program counter to point to its Remainder section 
and sets all other registers of $\pi$ to $\bot$.

A {\em run} of an algorithm is an infinite sequence of steps.
We assume every run satisfies the following conditions:
(i) if a process is in Try, Critical, or Exit sections, it later executes a (normal or crash) step, and
(ii) if a process enters Remainder because of a crash step, it later executes a (normal or crash) step.

An algorithm {\em solves} the RME problem if all of the following conditions are met in every run of the algorithm
(Conditions (1), (3), (4) are from Golab and Ramaraju \cite{Golab:rmutex}, and (2) and (5)
are two additional natural conditions from Jayanti and Joshi \cite{jayanti:fcfsmutex}):

\begin{enumerate}
	\item
	\underline{Mutual Exclusion}: {\em At most one process is in the CS at any point.}
	
	\item
	\underline{Wait-Free Exit}: {\em There is a bound $b$ such that, if a process $\pi$ is in the Exit section,
		and executes steps without crashing, $\pi$ completes the Exit section in at most $b$ of its steps.}
	
	\item
	\underline{Starvation Freedom}: {\em If the total number of crashes in the run is finite
		and a process is in the Try section and does not subsequently crash, it later enters the CS.}
	
	\item
	\underline{Critical Section Reentry (CSR) \cite{Golab:rmutex}}: 
	{\em If a process $\pi$ crashes while in the CS, then no other process enters the CS
		during the interval from $\pi$'s crash to the point in the run when $\pi$ next reenters the CS.}
	
	\item
	\underline{Wait-Free Critical Section Reentry (Wait-Free CSR) \cite{jayanti:fcfsmutex}}: 
	Given that the CSR property above mandates that after a process $\pi$'s crash in the CS 
	no other process may enter the CS until $\pi$ reenters the CS,
	it makes sense to insist that no process should be able to obstruct $\pi$ from reentering the CS. Specifically:
	
	{\em There is a bound $b$ such that, if a process crashes while in the CS,
		it reenters the CS before completing $b$ consecutive steps without crashing.}
	
	(As observed in \cite{jayanti:fcfsmutex}, Wait-Free CSR, together with Mutual Exclusion, implies CSR.)
	
\end{enumerate}

\subsection{Passage Complexity}

In a CC machine each process has a cache.
A read operation by a process $\pi$ on a shared variable $X$ fetches a copy of $X$ from 
shared memory to $\pi$'s cache, if a copy is not already present.
Any non-read operation on $X$ by any process invalidates copies of $X$ at all caches.
An operation on $X$ by $\pi$ counts as a
{\em remote memory reference} (RMR) if either the operation is not a read
or $X$'s copy is not in $\pi$'s cache.
When a process crashes, we assume that its cache contents are lost.
In a DSM machine, instead of caches, shared memory is partitioned, 
with one partition residing at each process, and each shared variable resides in exactly one partition.
Any operation (read or non-read) by a process on a shared variable $X$ is counted as an RMR if
$X$ is not in $\pi$'s partition.

A {\em passage} of a process $\pi$ in a run starts when $\pi$ enters Try (from Remainder) and 
ends at the earliest later time when $\pi$ returns to Remainder
(either because $\pi$ crashes or because $\pi$ completes Exit and moves back to Remainder).

A {\em super-passage} of a process $\pi$ in a run starts when $\pi$ 
either enters Try for the first time in the run
or when $\pi$ enters Try for the first time after the previous super-passage has ended,
and it ends when $\pi$ returns to Remainder by completing the Exit section.

The {\em passage complexity} (respectively, {\em super-passage complexity})
of an RME algorithm is the worst-case number of RMRs that a process incurs
in a passage (respectively, in a super-passage).

\subsection{Our contribution}

The passage complexity of an RME algorithm can, in general,
depend on $n$, the maximum number of processes the algorithm is designed for.
The ideal of course would be to design an algorithm whose complexity is independent of $n$,
but is this ideal achievable?
It is well known that, for the traditional mutual exclusion problem,
the answer is yes: MCS and many other algorithms that use FAS and CAS instructions
have $O(1)$ passage complexity \cite{craig:mcs}\cite{dvir:mutex}\cite{MCS:mutex}.
For the RME problem too, algorithms of $O(1)$ passage complexity are possible,
but they use esoteric instructions not supported on real machines, 
such as Fetch-And-Store-And-Store (FASAS) and Double Word CAS,
which manipulate two shared variables in a single atomic action \cite{Golab:rmutex2}\cite{jayanti:fasasmutex}.
The real question, however, is how well can we solve RME using only operations
supported by real machines.

With their tournament based algorithm, Golab and Ramaraju showed that $O(\log n)$ passage complexity
is possible using only read and write operations \cite{Golab:rmutex}.
In fact, in light of Attiya et al's lower bound result \cite{Attiya:lbound},
this logarithmic bound is the best that one can achieve even with the additional support
of comparison-based operations such as CAS.
However in PODC '17, by using FAS along with CAS,
Golab and Hendler \cite{Golab:rmutex2} succeeded in breaching this logarithmic barrier for CC machines:
their algorithm has $O(\frac{\log n}{\log \log n})$ passage complexity for CC machines,
but unbounded passage complexity for DSM machines.
In this paper we close this gap with the design of an algorithm 
that achieves the same sub-logarithmic complexity bound as theirs for {\em both} CC and DSM machines.
Some additional advantages of our algorithm over Golab and Hendler's are:

\begin{enumerate}
	\item
	Our algorithm satisfies the Wait-Free Exit property.
	\item
	On a CC machine, Golab and Hendler's algorithm requires 
	a cache of $\Theta(n)$ words at each process, but our algorithm needs a cache of only $O(1)$ words.
	(We explain the reason in the next subsection.)
	\item
	Our algorithm needs only the FAS instruction (whereas Golab and Hendler's needs both FAS and CAS).
	\item
	Our algorithm eliminates the race conditions present in Golab and Hendler's algorithm
	that cause processes to starve.\footnote{We communicated the issues described in Appendix~\ref{app:issues}
		with the authors of \cite{Golab:rmutex2} who acknowledged the bugs and after a few weeks informed us
		that they were able to fix them.}
	(We describe these issues in detail in Appendix~\ref{app:issues}.)
\end{enumerate}

\subsection{Comparison to Golab and Hendler \cite{Golab:rmutex2}: Similarities and differences}

Golab and Hendler \cite{Golab:rmutex2} derived their sublogarithmic RME algorithm
in the following two steps, of which the first step is the intellectual workhorse:

\begin{itemize}
	\item
	The first step is the design of an RME algorithm, henceforth referred to as \gh, of
	$O(n)$ passage complexity and $O(1 + fn)$ super-passage complexity,
	where $f$ is the number of times that a process crashes in the super-passage.
	The exciting implication of this result is that, in the common case where a process does not fail
	in a super-passage, the process incurs only $O(1)$ RMRs in the super-passage.
	
	\item
	The second step is the design of an RME algorithm where the $n$ processes compete by working their way up
	on a tournament tree. This tree has $n$ leaves and each of the tree nodes is implemented by an instance of \gh\ 
	in which $\log n / \log \log n$ processes compete.
	(thus, the degree of each node is $\log n / \log \log n$, which makes the tree's height $O(\log n / \log \log n)$).
	The resulting algorithm has the desired $O(\frac{\log n}{\log \log n})$ passage complexity and
	$O((1+f)\frac{\log n}{\log \log n})$ super-passage complexity.
\end{itemize}

The \gh\ algorithm is designed by converting the standard MCS algorithm \cite{MCS:mutex} into a recoverable algorithm.
As we now explain, this conversion is challenging because MCS uses 
the FAS instruction to insert a new node at the end of a queue.
The queue has one node for each process waiting to enter the CS, 
and a shared variable $\tail$ points to the last node in the queue.
When a process $\pi$ enters the Try section, it inserts its node $x$ into the queue 
by performing \fas$(\tail, x)$.
The FAS instruction stores the pointer to $x$ in $\tail$ and returns $\tail$'s previous value {\em prev} to $\pi$ in a single atomic action.
The value in {\em prev} is vital because it points to $x$'s predecessor in the queue.
Suppose that $\pi$ now crashes, thereby losing the {\em prev} pointer.
Further suppose that a few more processes enter the Try section and insert their nodes behind $\pi$'s node $x$.
If $\pi$ now restarts, it cannot distinguish whether it crashed just before performing the FAS instruction or just after performing it.
In the former case, $\pi$ will have to perform FAS to insert its node, but in the latter case it would be disastrous
for $\pi$ to perform FAS since $x$ was already inserted into the queue.
Yet, there appears no easy way for $\pi$ to distinguish which of the scenarios it is in.
Notice further that, like $\pi$, many other processes might have failed just before or after their FAS,
causing the queue to be disconnected into several segments.
All of these failed processes, upon restarting, have to go through the contents of the shared memory
to recognize whether they are in the queue or not and, if they are in, piece together their fragment
with other fragments without introducing circularity or other blemishes in the queue.
Since concurrent ``repairing'' by multiple recovering processes can lead to races,
Golab and Hendler make the recovering processes go through an RME algorithm \rlock\ so that
at most one recovering process is doing the repair at any time.
This \rlock\ does not have to be too efficient since it is executed by only a failing process, which can afford to perform $O(n)$ RMRs.
Thus, this \rlock\ can be implemented using one of the known RME algorithms.
However, while a process is trying to repair, correct processes can be constantly changing the queue, 
thereby making the repair task even more challenging.

The broad outline of our algorithm is the same as what we have described above for \gh,
but our algorithm differs substantially in important technical details, as we explain below.

\begin{itemize}
	\item
	In \gh\ a recovering process raises a {\sc fail} flag only after confirming
	that there is evidence that its FAS was successful.
	This check causes \gh\  to deadlock
	(see Scenario 1 in Appendix~\ref{app:issues}).
	Our algorithm eliminates this check.
	
	\item
	Since the shared memory can be constantly changing while a repairing process $\pi$ is scanning the memory to 
	compute the disjointed fragments (so as to connect $\pi$'s fragment to another fragment),
	the precise order in which the memory contents are scanned can be crucial to algorithm's correctness.
	In fact, we found a race condition in \gh \ that can lead to segments being incorrectly pieced together: 
	two different nodes can end up with the same predecessor,
	leading to all processes starving from some point on (see Scenario 2 in Appendix~\ref{app:issues}).
	
	\item
	When a repairing process explores from each node $x$, \gh \ does a ``deep'' exploration, meaning that 
	the process visits $x$'s predecessor $x_1$, $x_1$'s predecessor $x_2$,  $x_2$'s predecessor $x_3$, and so on until the chain is exhausted.
	Our algorithm instead does a shallow exploration: it simply visits $x$'s predecessor and stops there.
	The deep exploration of \gh \ from each of the $n$ nodes leads to $O(n^2)$ local computation steps per passage
	and requires each process to have a large cache of $O(n)$ words in order to ensure the desired 
	$O(n)$ passage-RMR-complexity.
	With our shallow exploration, we reduce the number of local steps per passage to $O(n)$ and 
	the RMR complexity of $O(n)$ is achieved with a cache size of only $O(1)$ words.
	
	\item
	How an exiting process hands off the ownership of CS to the next waiting process
	is done differently in our algorithm so as to ensure a wait-free Exit section and eliminate the need for
	the CAS instruction. 
\end{itemize}


\subsection{Related research}

Beyond the works that we discussed above,  
Golab and Hendler \cite{Golab:rmutex3} presented an algorithm at last year's PODC that has the ideal $O(1)$ passage complexity,
but this result applies to a different model of system-wide crashes,
where a crash means that {\em all} processes in the system {\em simultaneously} crash. 
Ramaraju \cite{ramaraju:rglock} and Jayanti and Joshi \cite{jayanti:fcfsmutex} design RME algorithms that also satisfy the FCFS property \cite{Lamport:fcfsmutex}.
These algorithms have $O(n)$ and $O(\log n)$ passage complexity, respectively.
Attiya, Ben-Baruch, and Hendler present linearizable implementations of recoverable objects \cite{attiya:rlin}.

\section{A Signal Object} \label{sec:sigobj}
Our main algorithm, presented in the next section, relies on a ``Signal'' object, which we specify and implement in this section.
The Signal object is specified in Figure~\ref{fig:signalspec}, which includes a description of two procedures ---  $\signal$ and $\wait$ --- through which the object is accessed.

\begin{figure}[!hb]
	\begin{footnotesize}
		\hrule
		\vspace{-0.1in}
		\begin{minipage}[t]{.95\linewidth}
			\begin{tabbing}
				\hspace{0.2in} \= \hspace{0.2in} \= \hspace{0.2in} \=  \hspace{0.2in} \= \hspace{0.2in} \= \hspace{0.2in} \=\\
				\> $X.\status \in \{ \present, \absent \}$, initially $\absent$.\\
				\> $\bullet$ \> $X.\signal()$ sets $X.\status$ to $\present$. \\
				\> $\bullet$ \> $X.\wait()$ returns when $X.\status$ is $\present$.
			\end{tabbing}
		\end{minipage} 
		\vspace*{-2mm}
		\captionsetup{labelfont=bf}
		\caption{Specification of a Signal object $X$.}
		\label{fig:signalspec}
		\hrule
	\end{footnotesize}
\end{figure}

\subsection{An implementation of Signal Object} \label{sec:sigimpl}
It is trivial to implement this object on a CC machine using a boolean variable $\bit$, initialized to $\absent$.
To execute $\signal()$, a process writes $\present$ in $\bit$,
and to execute $\wait()$, a process simply loops until $\bit$ has $\present$.
With this implementation, both operations incur just $O(1)$ RMRs on a CC machine.
Realizing $O(1)$ RMR complexity on a DSM machine is less trivial,
especially because the identity of the process executing $\wait()$ is unknown to the process executing $\signal()$.
Figure~\ref{algo:signal} describes our DSM implementation ${\cal X}$ of a Signal object $X$, 
which assumes that no two processes execute the $\wait()$ operation concurrently on the Signal object.
Our implementation provides two procedures: ${\cal X}.\applysignal()$ and ${\cal X}.\applywait()$.
Process $\pi$ executes ${\cal X}.\applysignal()$ to perform $X.\signal()$ and ${\cal X}.\applywait()$ to perform $X.\wait()$.
Our implementation ensures that a call to ${\cal X}.\applysignal()$ and ${\cal X}.\applywait()$ incur only $O(1)$ RMR.

\begin{figure}[h]
	\begin{footnotesize}
		\hrule
		\vspace{0.05in}
		\begin{tabbing}
			\hspace{0in} \= {\bf Shared variables (stored in NVMM)} \hspace{0.2in} \= \hspace{0.2in} \=  \hspace{0.2in} \= \hspace{0.2in} \= \hspace{0.2in} \=\\
			\> \hspace{0.1in} \= $\bit \in \{ \present, \absent \}$, initially $\absent$. \\
			\> \> $\goaddr$ is a {\bf reference to} a {\bf boolean}, initially $\nil$.
		\end{tabbing}
		\vspace{-0.2in}
		\begin{tabular}{l r}
			\begin{minipage}[t]{.5\linewidth}
				\begin{tabbing}
					\hspace{0.4in} \= \hspace{0.2in} \= \hspace{0.2in} \=  \hspace{0.2in} \= \hspace{0.2in} \=\\
					\> \> \underline{procedure ${\cal X}.\applysignal()$}\\
					\> \procline{sig:set:1} \> $\bit \gets \present$ \\
					\> \procline{sig:set:2} \> $\addrpi \gets \goaddr$ \\
					\> \procline{sig:set:3} \> \ifcode $\addrpi \neq \nil$ \thencode \\
					\> \procline{sig:set:4} \> \> $* \addrpi \gets \true$ 
				\end{tabbing} 
			\end{minipage} &
			\begin{minipage}[t]{.5\linewidth}
				\begin{tabbing}
					\hspace{0.4in} \= \hspace{0.2in} \= \hspace{0.2in} \=  \hspace{0.2in} \= \hspace{0.2in} \=\\
					\> \> \underline{procedure ${\cal X}.\applywait()$}\\
					\> \procline{sig:wait:1} \> $\gopi \gets \newcode \textnormal{Boolean}$ \\
					\> \procline{sig:wait:2} \> $* \gopi \gets \false$ \\
					\> \procline{sig:wait:3} \> $\goaddr \gets \gopi$ \\
					\> \procline{sig:wait:4} \> \ifcode $\bit == \absent$ \thencode \\
					\> \procline{sig:wait:5} \> \> \waittill $* \gopi == \true$ 
				\end{tabbing}  
			\end{minipage}
		\end{tabular} 		
		\vspace*{-2mm}
		\caption{Implementation of a Signal object specified in Figure~\ref{fig:signalspec}. Code shown for a process $\pi$.}
		\label{algo:signal}
		\hrule
	\end{footnotesize}
\end{figure}

When $\pi$ invokes ${\cal X}.\applysignal()$, 
at Line~\refln{sig:set:1} it records for future ${\cal X}.\applywait()$ calls that $X.\status = \present$, hence those calls can return without waiting.
Thereafter, $\pi$ finds out if any process is already waiting for $X.\status$ to be set to $\present$.
It does so by checking if any waiting process has supplied the address of its own local-spin variable to $\pi$ on which it is waiting (Lines~\refln{sig:set:2}-\refln{sig:set:3}).
If $\pi$ finds that a process is waiting (i.e., $\addrpi \neq \nil$), then it writes $\true$ into that process's spin-variable to wake it up from the wait loop (Line~\refln{sig:set:4}).

When a process $\pi'$ invokes ${\cal X}.\applywait()$,
at Line~\refln{sig:wait:1} it creates a new local-spin variable that it hosts in its own memory partition (Line~\refln{sig:wait:1}).
It initializes that variable for waiting (Line~\refln{sig:wait:2}) and notifies the object about its address (Line~\refln{sig:wait:3})
so that the caller of ${\cal X}.\applywait()$ can wake $\pi'$ up as described above.
Then $\pi'$ checks if $\bit == \present$ (Line~\refln{sig:wait:4}), in which case $X.\status = \present$ already and $\pi'$ can return without waiting.
Otherwise, $\pi'$ waits for $* \go_{\pi'}$ to turn $\true$ (Line~\refln{sig:wait:5}).

\begin{theorem} \label{thm:signal}
	${\cal X}.\applysignal()$ and ${\cal X}.\applywait()$ described in Figure~\ref{algo:signal} implement a Signal object ${\cal X}$ (specified in Figure~\ref{fig:signalspec}).
	Specifically, the implementation satisfies the following properties provided no two executions of ${\cal X}.\applywait()$ are concurrent:
	(i) ${\cal X}.\applysignal()$ is linearizable, i.e., 
	there is a point in each execution of ${\cal X}.\applysignal()$ when it appears to atomically set ${\cal X}.\status$ to $\present$,
	(ii) When ${\cal X}.\applywait()$ returns, ${\cal X}.\status$ is $\present$,
	(iii) A process completes ${\cal X}.\applysignal()$ in a bounded number of its own steps,
	(iv) Once ${\cal X}.\status$ becomes $\present$, any execution of ${\cal X}.\applywait()$ by a process $\pi$ completes in a bounded number of $\pi$'s steps,
	(v) ${\cal X}.\applysignal()$ and ${\cal X}.\applywait()$ incur $O(1)$ RMR on each execution.
\end{theorem}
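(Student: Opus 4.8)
The plan is to hang the entire proof on one monotonicity invariant: the shared variable $\bit$ is written only the value $\present$ (at Line~\refln{sig:set:1}) and is never reset, so once $\bit = \present$ it remains $\present$ for the rest of the run. Since the abstract state $\status$ is $\present$ exactly when $\bit = \present$, this invariant disposes of two properties immediately. For (i), I would take Line~\refln{sig:set:1} as the linearization point of every ${\cal X}.\applysignal()$ execution: before that step the execution has not touched $\bit$, after it $\bit = \present$, and by monotonicity its net effect is precisely to set $\status$ to $\present$; Lines~\refln{sig:set:2}--\refln{sig:set:4} only wake a waiter and leave the abstract state unchanged. For (iii), ${\cal X}.\applysignal()$ is straight-line code with no loop, so it finishes within four of the caller's steps.

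For the safety of waiting (property (ii)) I would examine the two exits of ${\cal X}.\applywait()$. If it returns at Line~\refln{sig:wait:4} on reading $\bit = \present$, then $\status = \present$ and stays so by monotonicity. If it returns from the busy-wait at Line~\refln{sig:wait:5}, then $*\gopi = \true$; the only writes to this cell are the initializing $\false$ at Line~\refln{sig:wait:2} and a $\true$ written by some ${\cal X}.\applysignal()$ at Line~\refln{sig:set:4}. A small but essential sub-step is to rule out a stale wake-up: because Line~\refln{sig:wait:1} allocates a \emph{fresh} Boolean, the cell's address differs from that of every earlier waiter, so the $\true$ must come from a signal that read $\goaddr = \gopi$ at Line~\refln{sig:set:2}. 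Since Line~\refln{sig:set:1} precedes Line~\refln{sig:set:2} in program order, that signal had already set $\bit = \present$, and monotonicity gives $\status = \present$ when the wait returns.

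I expect the main obstacle to be the liveness of waiting (property (iv)): showing that a waiter is never ``lost.'' I would case-split on $\pi$'s position when $\status$ first becomes $\present$. If $\pi$ has not yet executed Line~\refln{sig:wait:4}, then when it does it reads $\bit = \present$ and returns without entering the loop. The hard case is $\pi$ already spinning at Line~\refln{sig:wait:5}; there $\pi$ read $\bit = \absent$ at Line~\refln{sig:wait:4}, so that read precedes the first signal's Line~\refln{sig:set:1}, which precedes that signal's read of $\goaddr$ at Line~\refln{sig:set:2}. The crux is that $\goaddr$ still equals $\gopi$ at that read, and this is exactly where the hypothesis that no two ${\cal X}.\applywait()$ executions run concurrently is indispensable: $\pi$ installed $\gopi$ into $\goaddr$ at Line~\refln{sig:wait:3}, and since no other waiter is running while $\pi$ spins, nothing overwrites $\goaddr$; hence the signal reads $\gopi \neq \nil$ and writes $\true$ into $*\gopi$ at Line~\refln{sig:set:4}, freeing $\pi$. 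I would explicitly flag the lost-wakeup scenario --- a concurrent waiter clobbering $\goaddr$ between Line~\refln{sig:wait:3} and the signal's Line~\refln{sig:set:2} --- as the pathology that the non-concurrency assumption is designed to preclude.

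Finally, for the RMR bound (property (v)) I would argue line by line, stressing the DSM case. The decisive point is that the spin cell $\gopi$ lives in $\pi$'s own partition (Line~\refln{sig:wait:1}), so the busy-wait at Line~\refln{sig:wait:5} reads a local variable and costs zero RMRs however many iterations it runs; every other line accesses $\bit$ or $\goaddr$ a constant number of times, and Line~\refln{sig:set:4} performs a single remote write into a waiter's partition. Summing yields $O(1)$ RMRs per call of ${\cal X}.\applysignal()$ and ${\cal X}.\applywait()$ on both CC and DSM machines.
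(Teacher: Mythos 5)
Your treatment of properties (i), (ii), (iii) and (v) is sound and matches the substance of the paper's argument (monotonicity of $\bit$, freshness of the spin cell, the non-concurrency assumption protecting $\goaddr$, and the locality of $\gopi$ for the DSM bound). The gap is in property (iv), and it comes from ignoring crashes, which are the defining feature of this paper's model: a process may crash at any point inside ${\cal X}.\applysignal()$, in particular after Line~\refln{sig:set:1} but before Line~\refln{sig:set:4}. Your argument for the hard case assumes that the signal whose Line~\refln{sig:set:1} is first goes on to execute Lines~\refln{sig:set:2}--\refln{sig:set:4}; if it crashes in between, the spinning waiter is never written to. Because you linearize \emph{every} ${\cal X}.\applysignal()$ at Line~\refln{sig:set:1}, $\status$ becomes $\present$ at that first write to $\bit$, and in the run just described the waiter then takes unboundedly many steps at Line~\refln{sig:wait:5} without completing --- so property (iv), as you have set it up, is false.

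The paper's proof is structured precisely to avoid this. It splits on whether the first write of $\present$ to $\bit$ precedes the first Line~\refln{sig:wait:4} test by a non-failing waiter. In the first case it linearizes as you do. In the second case (waiter already committed to spinning) it considers the first ${\cal X}.\applysignal()$ \emph{to complete} and linearizes that execution at its Line~\refln{sig:set:4} --- the very step that releases the waiter --- so that ``$\status$ becomes $\present$'' never happens before the waiter's release is assured; incomplete (crashed) signal executions are not assigned linearization points at all. To repair your proof you would need either to adopt this case split and the late linearization point, or to restrict your claim for (iv) to runs in which some ${\cal X}.\applysignal()$ completes and measure the bound from that completion. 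As written, the uniform Line~\refln{sig:set:1} linearization cannot support (iv) in the crash-prone model.
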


\section{The Algorithm}
Our RME algorithm for $k$ ports is presented in Figures~\ref{algo:recdsmlock}-\ref{algo:auxdsmlock}.
We assume that all shared variables are stored in non-volatile main memory,
and process local variables (subscripted by $\pi$) are stored in respective processor registers.
We assume that if a process uses a particular port during its super-passage in a run,
then no other process will use the same port during that super-passage.
The process decides the port it will use inside the Remainder section itself.
Therefore, the algorithm presented in Figures~\ref{algo:recdsmlock}-\ref{algo:auxdsmlock} is designed for use by a process $\pi$ on port $p$.

\renewcommand{\cas}{\mbox{\bf CAS}}
\renewcommand{\fas}{\mbox{\bf FAS}}

\begin{figure}[!ht]
	\begin{footnotesize}
		\hrule
		\vspace{0.0in}
		\begin{tabbing}
			\hspace{0in} \= {\bf Types} \hspace{0.2in} \= \hspace{0.1in} \=  \hspace{0.2in} \= \hspace{0.2in} \= \hspace{0.2in} \=\\
			\> \hspace{0.1in} \= $\qnode = {\textbf{record}} \{\pred: \textnormal{\bf reference to } \qnode,$ \\
			\> \> \> $\nonnilsignal: \textnormal{\signalobj\ object}, \cssignal: \textnormal{\signalobj\ object}$ $\}\, { \textbf{end record}}$ \\
			\> {\bf Shared objects (stored in NVMM)} \\
			\> \> $\fail$, $\incs$, and $\token$ are distinct $\qnode$ instances, such that, \\
			\> \> \> $\fail.\pred = \& \fail$, $\incs.\pred = \& \incs$, and $\token.\pred = \& \token$. \\
			\> \> $\spclnode$ is a $\qnode$ instance, such that, $\spclnode.\pred = \& \key$, \\
			\> \> \> $\spclnode.\nonnilsignal = \present$, and $\spclnode.\cssignal = \present$.\\
			\> \> \rlock\ is a $k$-ported starvation-free RME algorithm \\
			\> \> \>that incurs $O(k)$ RMR per passage on CC and DSM machines. \\
			\> {\bf Shared variables (stored in NVMM)} \\
			\> \> $\tail$ is a {\bf reference to} a \qnode, initially $\& \spclnode$. \\
			\> \> $\nodes$ is an {\bf array}$[0 \ldots k-1]$ of {\bf reference to} \qnode. Initially,  $\forall i, \nodes[i] = \nil$. 
		\end{tabbing}
		\vspace{-0.25in}
		\begin{minipage}[t]{.95\linewidth}
			\begin{tabbing}
				\hspace{0.2in} \= \hspace{0.2in} \= \hspace{0.2in} \=  \hspace{0.2in} \= \hspace{0.2in} \= \hspace{0.2in} \=\\
				\> \> \texttt{\underline{Try Section}} \\
				\> \procline{dsm:try:1} \> \ifcode $\nodes[p] = \nil$ \thencode \\
				\> \procline{dsm:try:2} \> \> $\mynodepi \gets \newcode \qnode$ \\
				\> \procline{dsm:try:3} \> \> $\nodes[p] \gets \mynodepi$ \\
				\> \procline{dsm:try:4} \> \> $\mypredpi \gets \fas(\tail, \mynodepi)$ \\
				\> \procline{dsm:try:5} \> \> $\mynodepi.\pred \gets \mypredpi$ \\
				\> \procline{dsm:try:6} \> \> $\mynodepi.\nonnilsignal.\applysignal ()$ \\
				\> \procline{dsm:try:7} \> {\bf else} \\
				\> \procline{dsm:try:8} \> \> $\mynodepi \gets \nodes[p]$ \\
				\> \procline{dsm:try:9} \> \> \ifcode $\mynodepi.\pred = \nil$ \thencode $\mynodepi.\pred \gets \& \fail$ \\
				\> \procline{dsm:try:10} \> \> $\mypredpi \gets \mynodepi.\pred$ \\
				\> \procline{dsm:try:11} \> \> \ifcode $\mypredpi = \& \incs$ \thencode \goto Critical Section \\
				\> \procline{dsm:try:12} \> \> \ifcode $\mypredpi = \& \key$ \thencode \\
				\> \procline{dsm:try:13} \> \> \> Execute Lines~\refln{dsm:exit:2}-\refln{dsm:exit:3} of Exit Section and \goto Line~\refln{dsm:try:1} \\
				\> \procline{dsm:try:14} \> \> $\mynodepi.\nonnilsignal.\applysignal ()$ \\
				\> \procline{dsm:try:15} \> \> Execute $\rlock$ \\
				\> \procline{dsm:try:16} \> $\mypredpi.\cssignal.\applywait ()$ \\
				\> \procline{dsm:try:17} \> $\mynodepi.\pred \gets \& \incs$
			\end{tabbing}
		\end{minipage} \vspace*{-.1in}\\
		\begin{minipage}[t]{.95\linewidth}
			\begin{tabbing}
				\hspace{0.2in} \= \hspace{0.2in} \= \hspace{0.2in} \=  \hspace{0.2in} \= \hspace{0.2in} \= \hspace{0.2in} \=\\
				\> \> \texttt{\underline{Exit Section}} \\
				\> \procline{dsm:exit:1} \> $\mynodepi.\pred \gets \& \key$ \\
				\> \procline{dsm:exit:2} \> $\mynodepi.\cssignal.\applysignal ()$ \\
				\> \procline{dsm:exit:3} \> $\nodes[p] \gets \nil$ 
			\end{tabbing}  
		\end{minipage}
		\vspace*{-2mm}
		\captionsetup{labelfont=bf}
		\caption{$k$-ported $n$-process RME algorithm for CC and DSM machines. 
			Code shown for a process $\pi$ that uses port $p \in  \{0, \ldots, k-1\}$.
			(Code continued in Figure~\ref{algo:auxdsmlock}.)}
		\label{algo:recdsmlock}
		\hrule
	\end{footnotesize}
\end{figure}

\begin{figure}[!ht]
	\begin{footnotesize}
		\hrule
		\vspace{-0.1in}
		\begin{minipage}[t]{.95\linewidth}
			\begin{tabbing}
				\hspace{0.2in} \= \hspace{0.2in} \= \hspace{0.2in} \=  \hspace{0.2in} \= \hspace{0.2in} \=\\
				\> \> \texttt{\underline{Critical Section of $\rlock$}}\\
				\> \procline{dsm:rep:1} \> \ifcode $\mypredpi \neq \& \fail$ \thencode \goto \texttt{Exit Section} of $\rlock$ \\
				\> \procline{dsm:rep:2} \> $\tailpi \gets \tail$; $\Vpi \gets \phi$; $\Epi \gets \phi$; $\tailpathpi \gets \nil$; $\headpathpi \gets \nil$ \\
				\> \procline{dsm:rep:3} \> \forcode $\idxpi \gets  0 \, \mbox{\bf to } k-1$ \\
				\> \procline{dsm:rep:4} \> \> $\curpi \gets \nodes[\idxpi]$ \\
				\> \procline{dsm:rep:5} \> \> \ifcode $\curpi = \nil$ \thencode \continue \\
				\> \procline{dsm:rep:6} \> \> $\curpi.\nonnilsignal.\applywait () $\\
				\> \procline{dsm:rep:7} \> \> $\curpredpi \gets \curpi.\pred$ \\
				\> \procline{dsm:rep:8} \> \> \ifcode $\curpredpi \in \{ \& \fail, \& \incs, \& \token \}$ \thencode $\Vpi \gets \Vpi \cup \{ \curpi \}$ \\
				\> \procline{dsm:rep:9} \> \> \elsecode $\Vpi \gets \Vpi \cup\ \{ \curpi, \curpredpi \}; \Epi \gets \Epi \cup\ \{ (\curpi, \curpredpi) \}$ \\
				\> \procline{dsm:rep:10} \> Compute the set \pathspi\ of maximal paths in the graph $(\Vpi , \Epi)$ \\
				\> \procline{dsm:rep:11} \> Let $\mypathpi$ be the unique path in \pathspi\ that contains $\mynodepi$ \\
				\> \procline{dsm:rep:12} \> \ifcode $\tailpi \in \Vpi$ \thencode let $\tailpathpi$ be the unique path in \pathspi\ that contains $\tailpi$ \\	
				\> \procline{dsm:rep:13} \> \foreachcode $\mseqpi \in $ \pathspi \\
				\> \procline{dsm:rep:14} \> \> \ifcode $\front(\mseqpi).\pred \in \{ \& \incs, \& \token \}$ \thencode \\
				\> \procline{dsm:rep:15} \> \> \> \ifcode $\rear(\mseqpi).\pred \neq  \& \token$ \thencode \\
				\> \procline{dsm:rep:16} \> \> \> \> $\headpathpi \gets \mseqpi$ \\
				\> \procline{dsm:rep:17} \> \ifcode $\tailpathpi = \nil \vee \front(\tailpathpi).\pred \in \{ \& \incs, \& \token \}$ \thencode \\
				\> \procline{dsm:rep:18} \> \> $\mypredpi \gets \swap(\tail, \rear(\mypathpi))$ \\
				\> \procline{dsm:rep:19} \> \elsecode \ifcode $\headpathpi \neq \nil$ \thencode $\mypredpi \gets \rear(\headpathpi)$ \elsecode $\mypredpi \gets \& \spclnode$ \\
				\> \procline{dsm:rep:20} \> $\mynodepi.\pred \gets \mypredpi$ 
			\end{tabbing}
		\end{minipage}
		\vspace*{-2mm}
		\captionsetup{labelfont=bf}
		\caption{(Code continued from Figure~\ref{algo:recdsmlock}.) $k$-ported $n$-process recoverable mutual exclusion algorithm for CC and DSM machines. 
			Code shown for a process $\pi$ that uses port $p \in  \{0, \ldots, k-1\}$.
			Vertex names in $\Vpi$ are node references, hence the ``.'' symbol dereferences the address and accesses the members of the node.
			The functions $\rear(\sigma)$ and $\front(\sigma)$ used at Lines~\refln{dsm:rep:14}, \refln{dsm:rep:15}, \refln{dsm:rep:17}-\refln{dsm:rep:19} return the start and end vertices of the path $\sigma$.}
		\label{algo:auxdsmlock}
		\hrule
	\end{footnotesize}
\end{figure}
\renewcommand{\cas}{\mbox{CAS}}
\renewcommand{\fas}{\mbox{FAS}}

\subsection{Informal description} \label{sec:infdesc}

The symbol $\&$ is the usual ``address of'' operator, prefixed to a shared object to obtain the address of that shared object.
The symbol \mbox{``.''}\ (dot) dereferences a pointer and accesses a field from the record pointed to by that pointer.
When invoked on a path $\sigma$ in a graph, the functions $\rear(\sigma)$ and $\front(\sigma)$ return the start and end vertices of the path $\sigma$.
We assume that a process $\pi$ is in the Remainder section when $\pc{\pi} = \refln{dsm:try:1}$ and is in the CS when $\pc{\pi} = \refln{dsm:exit:1}$.

Our algorithm uses a queue structure as in the MCS lock \cite{MCS:mutex} and $\qnode$ is the node type used in such a queue.
We modify the node structure in the following way to suit our needs.
The node of a process $\pi$ has, apart from a $\pred$ pointer, two instances of a Signal object: $\cssignal$ and $\nonnilsignal$.
$\pi$'s successor process will use the $\cssignal$ instance from $\pi$'s node to wait on $\pi$ before entering the CS.
The $\nonnilsignal$ instance is used by any repairing process to wait till $\pi$ sets the $\pred$ pointer of its node to a value other than $\nil$.
Every node has a unique instance of these Signal objects.
We ensure that the call to $\cssignal.\applywait()$ happens from a single predecessor and the call to $\nonnilsignal.\wait()$ is made in a mutually exclusive manner,
thus ensuring that no two executions of $\applywait()$ are concurrent on the same object instance.  
We also use an array of references to \qnode s called $\nodes[]$.
This is a reference to a $\qnode$ that is used by some process on port $p$ to complete a passage.
In essence $\nodes[p]$ binds a process $\pi$ to the port $p$ through the $\qnode$ $\pi$ uses for its passage.

We first describe how $\pi$ would execute the Try and Exit section in absence of a crash as follows, and then proceed to explain the algorithm if a crash is encountered anywhere.
When a process $\pi$ wants to enter the CS through port $p$ from the Remainder section,
it starts executing the Try section.
At Line~\refln{dsm:try:1} it checks if any previous passage ended in a crash.
If that is not the case, $\pi$ finds $\nodes[p] = \nil$.
It then executes Line~\refln{dsm:try:2} which allocates a new $\qnode$ for $\pi$ in the NVMM, 
such that the $\pred$ pointer holds $\nil$, and the objects $\cssignal$ and $\nonnilsignal$ have $\status = \absent$ (i.e., their initial values).
At Line~\refln{dsm:try:3} the process stores a reference to this new node in $\nodes[p]$ so that it can reuse this node in future in case of a crash.
$\pi$ then links itself to the queue by swapping $\mynodepi$ into $\tail$ (Line~\refln{dsm:try:4})
and stores the previous value of $\tail$ (copied in $\mypredpi$) into $\mynodepi.\pred$ (Line~\refln{dsm:try:5}).
The value of $\mypredpi$, from Line~\refln{dsm:try:4} onwards, is an address of $\pi$'s predecessor's node.
At Line~\refln{dsm:try:6} $\pi$ announces that it has completed inserting itself in the queue by setting $\mynodepi.\nonnilsignal$ to $\present$ (more later on why is this announcement important).
$\pi$ then proceeds to Line~\refln{dsm:try:16} where it waits for $\mypredpi.\cssignal$ to become $\present$.
If the owner of the node pointed by $\mypredpi$ has already left the CS, then $\mypredpi.\cssignal$ is $\present$;
otherwise, $\pi$ has to wait for a signal from its predecessor (see description of Signal object in previous section).
Once $\pi$ comes out of the call to $\mypredpi.\cssignal.\applywait()$, it makes a note in $\mynodepi.\pred$ that it has ownership of the CS (Line~\refln{dsm:try:17}).
$\pi$ then proceeds to the CS.

When $\pi$ completes the CS, it first makes a note to itself that it no longer needs the CS by writing $\& \token$ in $\mynodepi.\pred$ (Line~\refln{dsm:exit:1}).
It then wakes up any successor process that might be waiting on $\pi$ to enter the CS (Line~\refln{dsm:exit:2}).
$\pi$ then writes $\nil$ into $\nodes[p]$ at Line~\refln{dsm:exit:3}, which signifies that the passage that used this node has completed.

When $\pi$ begins a passage after the previous passage ended in a crash,
$\pi$ starts by checking $\nodes[p]$ at Line~\refln{dsm:try:1}.
If it has the value $\nil$, then $\pi$ crashed before it put itself in the queue, hence it treats the situation as if $\pi$ didn't crash in the previous passage and continues as described above.
Otherwise, $\pi$ moves to Line~\refln{dsm:try:8} where it recovers the node it was using in the previous passage.
If $\pi$ crashed while putting itself in the queue (i.e., right before executing Lines~\refln{dsm:try:4} or \refln{dsm:try:5}),
it treats the crash as if it performed the \fas\ at Line~\refln{dsm:try:4} and crashed immediately.
Hence, it makes a note to itself that it crashed by writing $\& \fail$ in $\mynodepi.\pred$ (Line~\refln{dsm:try:9}).
It then reads the value of $\mynodepi.\pred$ into $\mypredpi$ (Line~\refln{dsm:try:10}).
At Line~\refln{dsm:try:11} $\pi$ checks if it crashed while in the CS, in which case it moves to the CS.
At Line~\refln{dsm:try:12} it checks if it already completed executing the CS, in which case recovery is done by executing Lines~\refln{dsm:exit:2}-\refln{dsm:exit:3}
and then re-executing Try from Line~\refln{dsm:try:1}.
If $\pi$ reaches Line~\refln{dsm:try:14}, it is clear that it crashed before entering the CS in the previous passage.
In that case repairing the queue might be needed if $\pi$ didn't set $\mynodepi.\pred$ to point to a predecessor node.
In any case, $\pi$ announces that $\mynodepi.\pred$ no longer has the value $\nil$ setting $\mynodepi.\nonnilsignal$ to $\present$.
$\pi$ then goes on to capture $\rlock$ so that it gets exclusive access to repair the queue if it is broken at its node.

\subsubsection*{High level view of repairing the queue after a crash}
Before diving into the code commentary of the CS of $\rlock$, where $\pi$ repairs the queue broken at its end,
we describe how the repairing happens at a high level.
$\pi$ uses the $\rlock$ to repair the queue, if it crashed around the $\fas$\ operation (Lines~\refln{dsm:try:4}-\refln{dsm:try:5}) in the Try section.
A crash by a process on Lines~\refln{dsm:try:4}-\refln{dsm:try:5} can give rise to the following scenarios: 
(i) the queue is not affected by the crash (crash at Line~\refln{dsm:try:4} or at Line~\refln{dsm:try:5} but the queue was already broken), 
(ii) the queue is broken due to the crash (crash at Line~\refln{dsm:try:5}).
Therefore consider the following configuration\footnote{Please refer to Figure~\ref{fig:scenarios} of Section~\ref{app:ill} in the Appendix for a visual illustration.}.
Assume there is a node $x$ that was used by some process in its passage and the process has completed that passage succesfully so that $x.\pred = \& \key$.
Process $\pi_1$, $\pi_3$, and $\pi_5$ have crashed at Line~\refln{dsm:try:5}.
Process $\pi_2$, $\pi_4$, and $\pi_6$ are executing the procedure $\applywait()$ at Line~\refln{dsm:try:16}, 
such that, $\pi_2$'s predecessor is $\pi_1$, $\pi_4$'s predecessor is $\pi_3$, and $\pi_6$'s predecessor is $\pi_5$.
Process $\pi_7$ and $\pi_8$ have crashed at Line~\refln{dsm:try:4}.
We describe the repair by each of these crashed processes as follows.

Each of the crashed processes executes the $\rlock$ and waits for its turn to repair the queue in a mutually exclusive manner.
Assume that the repair is performed by the processes in the order: $(\pi_1, \pi_7, \pi_5, \pi_8, \pi_3)$.
When $\pi_1$ performs the repair, it first scans the $\nodes$ array and notices that the queue is broken at process $\pi_4$ and $\pi_5$'s nodes (it notices that by reading $\& \fail$ in the $\pred$ pointer of the process nodes).
$\nodes$ array also gives an illusion to $\pi_1$ that queue is broken at $\pi_7$ and $\pi_8$'s node although these processes didn't perform a \fas\ prior to their crash.
$\pi_1$ also notices that no node has a predecessor node whose $\pred$ pointer is set to $\& \incs$ or $\& \token$, hence, no process is in the CS or is poised to enter it. 
Therefore $\pi_1$ sets its own node's predecessor to be $\spclnode$ (from Figure~\ref{algo:recdsmlock}, $\spclnode.\pred = \& \key$).
Note, no other crashed process will set their own node's $\pred$ pointer to point to $\spclnode$ simultaneously because repair operation is performed in a mutually exclusive manner by $\pi_1$.
Also, the $\pred$ pointer of each node has a non-$\nil$ value (if not, then $\pi_1$ waits till it sees a non-$\nil$ value before doing the actual repair).
This way $\pi_1$ completes the repair operation on the queue and is now poised to enter the CS.

When $\pi_7$ (crashed at Line~\refln{dsm:try:4}) performs the repair, it first scans the $\nodes$ array and notices that the queue is broken at process $\pi_4$, $\pi_5$, $\pi_7$, and $\pi_8$'s nodes.
Since it notices that no process points to $\pi_2$, it sets the $\pred$ pointer of its own node to point to $\pi_2$'s node.
Thereby $\pi_7$ finishes the repair by placing itself in the queue, without ever performing the $\fas$, and gives up its control over $\rlock$ to return to the Try section.

When $\pi_5$ (crashed at Line~\refln{dsm:try:10}) performs the repair, it follows an approach similar to that of $\pi_7$'s.
It sees that the queue is broken at process $\pi_3$, $\pi_5$, and $\pi_8$.
It then notices that no process points to $\pi_7$ and therefore sets the $\pred$ pointer of its own node to point to $\pi_7$'s node.
This way $\pi_5$ and $\pi_6$ are now attached to the queue in a way that there is a path from their node to a node containing the address $\& \token$.
Also, $\tail$ points to $\pi_6$'s node, so it appears as if the queue is unbroken if a traversal was done starting at the $\tail$ pointer.

Now that a traversal from $\tail$ would lead to a node used by a process that is in Critical section ($\pi_1$ in this case), the queue is partially in place.
In order to fix the remaining broken fragments the queue might need to be broken somehow to fit the remaining fragments.
However, $\pi_3$ and $\pi_8$ can do the repair from here on without affecting the existing structure of the queue.
$\pi_8$ can put itself in the queue by performing the \fas\ operation on the $\tail$ with its own node.
Whereas $\pi_3$ first identifies the fragment its node is part of, and thereby all the nodes that are part of its fragment.
It then performs a \fas\ one more time on $\tail$ with the last node in its own fragment (i.e., $\pi_4$'s node)
and sets the $\pred$ pointer of its own node to the previous value of $\tail$ that is returned by the \fas\ (address of $\pi_8$'s node).
This ends the repair operation for $\pi_3$ and thereby the repair for all the process.

\subsubsection*{Informal description of CS of $\rlock$}
We proceed to give a description of the CS of $\rlock$ that does the above mentioned repair.
At Line~\refln{dsm:rep:1} $\pi$ checks if it was already in the queue before its last crash 
(such a situation may occur either when $\pi$ crashes after executing the CS of $\rlock$ to completion but before executing the Exit section of $\rlock$, or when $\pi$ crashes in the Try after performing Line~\refln{dsm:try:5}).
If so, it notices that there is no need for repair, hence, it goes to the Exit section of $\rlock$.
Otherwise, at Line~\refln{dsm:rep:2} $\pi$ reads the reference to the node pointed to by $\tail$ into the variable $\tailpi$ and initializes other variables used during the repair procedure.
Thereafter $\pi$ constructs a graph that models the queue structure. 
To this purpose, it reads each node pointed to by the $\nodes$ array in order to construct the graph (Lines~\refln{dsm:rep:3}-\refln{dsm:rep:9}). 
The graph is constructed as follows.
First a cell from the $\nodes$ array is read into $\curpi$ (i.e., $\nodes[\idxpi]$) at Line~\refln{dsm:rep:4} and checked if it is a node of some process (Line~\refln{dsm:rep:5}).
If $\nodes[\idxpi] = \nil$, $\pi$ moves on to the next cell in the array.
Otherwise, at Line~\refln{dsm:rep:6}
$\pi$ waits till $\nodes[\idxpi].\pred$ assumes a non-$\nil$ value (i.e., wait for the owner of that node to have executed either Line~\refln{dsm:try:6} or \refln{dsm:try:14}).
Once $\curpi$'s $\pred$ pointer has a non-$\nil$ value, that value is read into $\curpredpi$ (Line~\refln{dsm:rep:7}).
There are now two possibilities:
(i) the $\pred$ pointer points to one of $\& \fail$, $\& \incs$, or $\& \token$, or
(ii) the $\pred$ pointer points to another node.
The purpose of waiting for $\curpi.\nonnilsignal =\present$ is simple: we want to be sure which of the above two cases is true about $\curpi$.
In the first case only $\curpi$ is added as a vertex to the graph (the name of that vertex is the value of $\curpi$).
In the second case $\curpi$ and $\curpredpi$ are added as vertices and a directed edge $(\curpi, \curpredpi)$ is added to the graph 
(we consider this as a simple graph, so repeated addition of a vertex counts as adding it once).
This process is repeated until all cells from the $\nodes$ array are read.
Once all the nodes are read from the cells of $\nodes$ array, including nodes not yet in the queue ($\pi_7$ and $\pi_8$ in the above example),
we have the graph $(\Vpi, \Epi)$ that models the broken queue structure such that each maximal path in the graph models a broken queue fragment.
Note, such a graph is a directed acyclic graph.
At Line~\refln{dsm:rep:10} set $\pathspi$ of maximal paths in the graph $(\Vpi, \Epi)$ is created and
at Line~\refln{dsm:rep:11} a path $\mypathpi$ is picked from $\pathspi$ such that $\mynodepi$ appears in it.
At Line~\refln{dsm:rep:12} a path $\tailpathpi$ containing the node $\tailpi$ is picked from $\pathspi$ if $\tailpi$ appears in the graph.
In Lines~\refln{dsm:rep:13}-\refln{dsm:rep:16} we try to find a path in the graph such that 
its start vertex belongs to a process that has not finished the critical section
but a traversal on that path leads to a node holding one of the addresses $\& \incs$ or $\& \token$ (i.e., it leads to a node in or out of CS).
If such a path is found, $\headpathpi$ is set to point to that path, otherwise, $\headpathpi$ remains $\nil$.
In Line~\refln{dsm:rep:17} we first check if the queue is already partially repaired (e.g., if the repair was being performed by $\pi_8$ or $\pi_3$ in the example above).
If so, at Line~\refln{dsm:rep:18} the fragment containing $\mynodepi$ is inserted into the queue by performing a $\fas$\ on $\tail$ with the last node in that fragment (i.e., $\rear(\mypathpi)$ would give the address of last node appearing in $\mynodepi$'s fragment).
We note the previous value of $\tail$ into $\mypredpi$ so that we can update $\mynodepi.\pred$ later.
Otherwise, $\pi$ needs to connect its own fragment to the queue.
To this purpose it needs to be ensured that the queue is not broken at its head and some active process is poised to enter or is in the Critical section.
Line~\refln{dsm:rep:19} does this by checking if Lines~\refln{dsm:rep:13}-\refln{dsm:rep:16} found a path in the graph such that 
its start vertex belongs to a process that has not finished the Critical section
but a traversal on that path leads to a node out of CS (i.e., is $\headpathpi \neq \nil$).
If $\headpathpi \neq \nil$, then $\pi$'s predecessor is set to be the start node on the path $\headpathpi$ ($\pi_7$, $\pi_5$ in the example above). 
Otherwise, the queue is broken at its head, therefore, at Line~\refln{dsm:rep:19}, 
$\pi$'s predecessor is set to be $\spclnode$ ($\pi_1$ in example above).
At Line~\refln{dsm:rep:20}, $\pi$ has the correct address to its predecessor node in $\mypredpi$ (as noted in Lines~\refln{dsm:rep:17}-\refln{dsm:rep:19}) which is written into $\mynodepi.\pred$.
This completes the CS of $\rlock$ and the repair of $\pi$'s fragment.
$\pi$ then proceeds back to Line~\refln{dsm:try:16} after completing the Exit section of $\rlock$.

\subsection{Main theorem}
The correctness properties of the algorithm are captured in the following theorem.

\begin{theorem} \label{thm:dsm}
	The algorithm in Figures~\ref{algo:recdsmlock}-\ref{algo:auxdsmlock} solves the RME problem
	for $k$ ports on CC and DSM machines 
	and additionally satisfies the Wait-free Exit  and Wait-free CSR properties.
	It has an RMR complexity of $O(1)$ for a process that does not crash during its passage, 
	and $O(fk)$ for a process that crashes $f$ times during its super-passage.
\end{theorem}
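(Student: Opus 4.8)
The plan is to prove the five RME properties (Mutual Exclusion, Wait-Free Exit, Starvation Freedom, CSR, Wait-Free CSR) together with the RMR bounds by first fixing a small collection of structural invariants on the shared state, establishing them by induction over the steps of an arbitrary run, and then reading off each property. The state of interest is the directed ``$\pred$-graph'' whose vertices are the allocated $\qnode$s together with the sentinels $\fail$, $\incs$, $\token$, $\spclnode$, and whose edges follow the $\pred$ fields; a process's logical queue position is its location in this graph. The invariants I would maintain are: (A) every real node has out-degree at most one (immediate, since $\pred$ is one field); (B) \emph{injectivity} --- no two distinct real nodes simultaneously have $\pred$ equal to the same real node, so the $\pred$-graph decomposes into vertex-disjoint simple paths; (C) \emph{acyclicity} of the $\pred$-graph on real nodes; (D) at most one real node has $\pred=\&\incs$, and its owner is the unique process in or entitled to the CS; and (E) the chain obtained by following $\pred$ from $\tail$ ends at a node whose $\pred$ lies in $\{\&\incs,\&\token\}$, except transiently while a repairer holds $\rlock$. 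Invariant (B) is precisely the property whose failure produces the starvation bug (Scenario~2) that we eliminate, so it is the linchpin.

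Granting the invariants, Mutual Exclusion and CSR are short. A process reaches the CS either through Line~\refln{dsm:try:11} (on recovery, with $\mypredpi=\&\incs$) or through Lines~\refln{dsm:try:16}--\refln{dsm:try:17}, where it completes $\mypredpi.\cssignal.\applywait()$ and then writes $\&\incs$ into $\mynodepi.\pred$. By Theorem~\ref{thm:signal}(ii) the wait returns only after the predecessor executed $\cssignal.\applysignal()$ at Line~\refln{dsm:exit:2}, i.e.\ only after it left the CS; together with (B) and (D), which make ``predecessor'' and ``successor'' functional along the unique CS-terminated path, this shows CS ownership is handed off along the path one process at a time, so at most one process is ever in the CS (at Line~\refln{dsm:exit:1}). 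For CSR, a process that crashes in the CS had set $\mynodepi.\pred=\&\incs$ at Line~\refln{dsm:try:17}; on restart it reads this at Line~\refln{dsm:try:10} and jumps to the CS at Line~\refln{dsm:try:11} without ever signalling its $\cssignal$, so by the handoff argument no other process can enter in between, and this re-entry uses only a bounded number of steps (Lines~\refln{dsm:try:1},\refln{dsm:try:8}--\refln{dsm:try:11}), giving Wait-Free CSR, whence CSR follows with Mutual Exclusion as noted in the problem statement. Wait-Free Exit is immediate: the Exit section is the three straight-line instructions \refln{dsm:exit:1}--\refln{dsm:exit:3}, and $\cssignal.\applysignal()$ completes in bounded steps by Theorem~\ref{thm:signal}(iii).

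The heart of the proof, and the step I expect to be the main obstacle, is the inductive case showing that one full execution of the CS of $\rlock$ (Lines~\refln{dsm:rep:1}--\refln{dsm:rep:20}) preserves (B)--(E) despite concurrent modifications by non-repairing processes. Since repairs run under $\rlock$, at most one process repairs at a time, so the only concurrent shared writes are $\fas(\tail,\cdot)$ at Line~\refln{dsm:try:4} and single-field writes by a process to its \emph{own} $\pred$ (Lines~\refln{dsm:try:5},\refln{dsm:try:9},\refln{dsm:try:17},\refln{dsm:exit:1}). I would first argue that the wait at Line~\refln{dsm:rep:6} on $\curpi.\nonnilsignal$ guarantees each value read at Line~\refln{dsm:rep:7} is non-$\nil$, and that once non-$\nil$ a node's classification at Lines~\refln{dsm:rep:8}--\refln{dsm:rep:9} is stable enough for the reconnection (a $\&\fail$-marked node cannot change its $\pred$ while the repairer holds $\rlock$, since that requires acquiring $\rlock$ at Line~\refln{dsm:try:15}); hence $(\Vpi,\Epi)$ faithfully pictures the current fragment decomposition, $\pathspi$ is a set of vertex-disjoint paths by (B), and $\mypathpi$ (Line~\refln{dsm:rep:11}) is well defined, with $\mynodepi$ its head since $\mynodepi.\pred$ is a sentinel. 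The delicate part is the choice at Lines~\refln{dsm:rep:17}--\refln{dsm:rep:20}, which I would split on the Line~\refln{dsm:rep:17} test. When the tail fragment already reaches the CS end, appending $\pi$'s whole fragment by $\fas(\tail,\rear(\mypathpi))$ (Line~\refln{dsm:rep:18}) inherits injectivity from the atomicity of $\fas$ (the returned old tail goes to exactly one writer) and adds no cycle, since the fragment is spliced on the tail side of everything reachable from $\tail$. When the tail fragment is orphaned, $\pi$ links $\mynodepi$ to $\rear(\headpathpi)$ or to $\&\spclnode$ (Line~\refln{dsm:rep:19}), and the obligation is that no other real node simultaneously points there: for $\rear(\headpathpi)$ this uses that a start vertex has no in-edge in the graph and is distinct from the concurrently advancing $\tail$ exactly because the tail fragment failed the Line~\refln{dsm:rep:17} test; for $\&\spclnode$ it uses $\rlock$ mutual exclusion and the fact that $\spclnode$ gains an in-edge only via repair. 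Acyclicity in both sub-cases holds because $\pi$ attaches its own fragment behind a vertex provably on a CS-terminated chain that does not contain $\mynodepi$.

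Finally, for liveness and complexity I would argue termination of every wait. The $\nonnilsignal.\applywait()$ at Line~\refln{dsm:rep:6} terminates because the awaited process sets its $\pred$ non-$\nil$ and signals $\nonnilsignal$ at Line~\refln{dsm:try:6} or \refln{dsm:try:14}, both of which precede acquiring $\rlock$ at Line~\refln{dsm:try:15}, so there is no circular wait with the repairer holding $\rlock$ (this is the check we removed relative to $\gh$); since crashes are finite, the awaited process eventually performs these steps. Starvation Freedom then follows by induction down the unique CS-terminated path: once crashes cease, the CS holder completes Exit (Wait-Free Exit), signals its successor's $\cssignal$, and ownership drains along the path, while $\rlock$'s own starvation-freedom and the boundedness of the repair guarantee every waiting process is eventually placed on that path. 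For the RMR bound, a crash-free passage executes $O(1)$ instructions, each Signal operation costing $O(1)$ RMRs by Theorem~\ref{thm:signal}(v) (and $O(1)$ trivially on CC), with the Line~\refln{dsm:try:16} wait spinning on a variable in $\pi$'s own partition, giving $O(1)$ RMRs; a passage that repairs scans the $k$-entry $\nodes$ array once (Lines~\refln{dsm:rep:3}--\refln{dsm:rep:9}), incurs $O(k)$ RMRs inside $\rlock$ by hypothesis, and does only local graph computation at Lines~\refln{dsm:rep:10}--\refln{dsm:rep:16}, for $O(k)$ RMRs per crash, so summing over the at most $f{+}1$ passages of a super-passage with $f$ crashes yields $O(fk)$.
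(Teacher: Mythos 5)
Your overall strategy is the paper's strategy: structural invariants on the $\pred$-graph established by induction over the steps of a run, with each property read off afterwards. Your vertex-disjoint simple paths are the paper's ``fragments,'' your injectivity invariant (B) is its Condition~\ref{inv:cond3}, your acyclicity (C) is its bounded-$\pred$-chain clause together with Condition~\ref{inv:cond32}, and your (D) corresponds to the ordering of the queued set $\calq$ in Condition~\ref{inv:cond17}. The derivations of Mutual Exclusion, Wait-Free Exit, Wait-Free CSR, and the RMR bounds, and the decomposition of Starvation Freedom into termination of the two waits at Lines~\refln{dsm:rep:6} and \refln{dsm:try:16}, all match the paper's Lemmas~\ref{lem:mutex}--\ref{lem:csr}.

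Two points would fail or need strengthening as written. First, invariant (E) is false as stated: the chain from $\tail$ can dangle at a node with $\pred \in \{\nil, \&\fail\}$ for an unbounded interval, not merely ``transiently while a repairer holds $\rlock$.'' A process that crashes between Lines~\refln{dsm:try:4} and \refln{dsm:try:5} leaves the tail fragment orphaned until it restarts, wins $\rlock$, and reaches Line~\refln{dsm:rep:20}, and in the meantime arbitrarily many processes may extend that orphaned fragment via their own \fas; the inductive step for such a crash refutes (E). The correct statement (the paper's Condition~\ref{inv:cond15}) is conditional: \emph{if} $\fraghead(\fragment(\tail)).\pred \in \{\nil,\&\fail\}$, then that head is owned by a process whose logical control point is Line~\refln{dsm:try:5}, i.e., one obligated to eventually repair. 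Your own case split at Line~\refln{dsm:rep:17} already presupposes this weaker form, so the fix is local, but (E) cannot be carried as an unconditional safety invariant. Second, your Starvation Freedom argument is complete for a waiter whose fragment is already CS-connected, but too thin for a waiter stranded in an orphaned fragment: such a process depends on the crashed owner of its fragment's head to perform the repair, and one repair by itself only merges two fragments --- the result may still leave other waiters in fragments headed by other crashed processes, and the connectivity of the fragment behind which the \fas\ at Line~\refln{dsm:rep:18} appends must itself be argued. The paper closes this with an explicit induction over fragments (Case~2.2 of Lemma~\ref{lem:starv}): fragments only merge, there are at most $k$ of them, and crashes are finite, so repeated application of the ``head eventually repairs'' step terminates with the waiter's fragment CS-connected. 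Your one-line appeal to $\rlock$'s starvation freedom and bounded repair needs to be expanded into that induction.
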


\subsection{$O((1 + f) \log n/ \log \log n)$ RMRs Algorithm}
To obtain a sub-logarithmic RMR complexity algorithm on both CC and DSM machines, 
we use the arbitration tree technique used by Golab and Hendler (described in Section 5 in \cite{Golab:rmutex2}).
Therefore, the following theorem follows from Theorem~\ref{thm:dsm}.

\begin{theorem} \label{thm:sublog}
	The arbitration tree algorithm solves the RME problem 
	for $n$ processes on CC and DSM machines 
	and additionally satisfies the Wait-free Exit and Wait-free CSR properties.
	It has an RMR complexity of $O((1 +f)\log n/ \log \log n)$ per super-passage for a process that crashes $f$ times during its super-passage.
\end{theorem}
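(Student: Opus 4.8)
The plan is to prove Theorem~\ref{thm:sublog} as a black-box composition of the $k$-ported lock of Theorem~\ref{thm:dsm}, arranged in a static arbitration (tournament) tree, so that almost all of the work reduces to a correctness-preserving reduction plus a careful RMR-charging argument. First I would fix the tree: a complete $k$-ary tree with $n$ leaves and $k = \lceil \log n / \log\log n\rceil$, which has height $h = \lceil \log_k n\rceil = \Theta(\log n/\log\log n)$ because $\log k = \log\log n - \log\log\log n = \Theta(\log\log n)$; note that here $h = \Theta(k)$, both $\Theta(\log n/\log\log n)$. Each of the $n$ processes is permanently assigned to a distinct leaf, and each internal node hosts an independent instance of the base lock of Theorem~\ref{thm:dsm}, with its $k$ ports identified with the node's $k$ children. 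The composed protocol is standard: to enter the real CS a process walks the path from its leaf to the root, and at each node it runs that instance's Try section (using the port for the child it came from) and enters the node's local CS before advancing to the parent; having captured the root instance it is in the real CS, and it exits by running the instances' Exit sections in reverse (root-to-leaf) order. On recovery from a crash the process re-walks its static path from the leaf, and at each node the base lock's own recovery logic restores whether and how the process already held that node.

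Before either correctness or complexity, I would discharge the hypothesis Theorem~\ref{thm:dsm} imposes, port-exclusivity: at a node $v$ the process arriving via port $i$ is exactly the (unique, by Mutual Exclusion of child $c_i$'s instance) process currently occupying $c_i$'s local CS, so port $i$ of $v$ is used by at most one process at a time throughout a super-passage, and the port is determined statically by the leaf and hence chosen in Remainder, precisely as required. The five RME properties then follow node-by-node. Mutual Exclusion is inherited directly from the root instance. Starvation Freedom follows by induction up the path of length $h$: with finitely many crashes and a non-crashing process, Starvation Freedom of each instance carries the process from that node's Try into its local CS, after which it advances, so after $h$ steps it reaches the real CS. Wait-Free Exit holds because the composed Exit is $h$ consecutive node Exits, each wait-free (bounded) by Theorem~\ref{thm:dsm}, giving an $O(h)$-times-per-node bound. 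For Wait-Free CSR, a process that crashes in the real CS holds every lock on its path, in particular the root; its re-walk re-enters each held node's local CS in a bounded number of steps by that node's Wait-Free CSR, reaching the real CS within $O(h)$ node-reentries, while the root instance's Wait-Free CSR forbids any other process from entering the root's CS, i.e.\ the real CS; CSR then follows from Mutual Exclusion together with Wait-Free CSR, as noted in the problem statement.

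The substantive part, and the step I expect to be the main obstacle, is the RMR bound, because the naive charging fails. A single global crash occurs while the process sits at one node --- its current frontier --- but it is simultaneously ``in the local CS'' of every node strictly below the frontier, so charging the blanket $O(f_v k)$ super-passage bound of Theorem~\ref{thm:dsm} at every node $v$ on the path would charge one crash $O(k)$ at up to $h$ nodes, yielding $O(fhk)=O\!\left(f(\log n/\log\log n)^2\right)$, which is too large. The fix is to charge at two rates according to where on the path the process was when it crashed. The expensive $O(k)$ cost of an instance is incurred only by the queue-repair that runs after a crash \emph{around the} $\fas$ (Lines~\refln{dsm:try:4}--\refln{dsm:try:5}), i.e.\ only while that node is the frontier being acquired; a crash suffered while the process already holds the node's local CS is repaired by the $O(1)$ fast path (recovering the node, testing $\mypredpi=\&\incs$ at Line~\refln{dsm:try:11}, and re-entering), which is exactly the $O(1)$-RMR manifestation of that node's Wait-Free CSR. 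Since each global crash makes at most one node the frontier, the acquisition crashes summed over all nodes total at most $f$, contributing $O(fk)$; the held crashes contribute only $O(1)$ per node per crash, i.e.\ $O(fh)$; and the crash-free traversal contributes $O(h)$ (crashes during the Exit-phase releases are charged analogously but even more cheaply, since each node's Exit and its recovery fast-paths cost only $O(1)$ RMRs). Summing and using $k=\Theta(h)=\Theta(\log n/\log\log n)$ gives $O(h)+O(fk)+O(fh)=O((1+f)\log n/\log\log n)$, as claimed. Carrying this refined per-node accounting --- rather than the blanket $O(fk)$ bound --- through the recovery re-walk is the crux of the argument.
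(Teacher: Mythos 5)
Your approach is the one the paper intends: the paper itself offers no proof of Theorem~\ref{thm:sublog}, deferring entirely to the arbitration-tree construction of Golab and Hendler and asserting that the result ``follows from Theorem~\ref{thm:dsm}.'' Your proposal fills in exactly the content that one-line deferral elides --- the tree parameters, the port-exclusivity discharge, the node-by-node inheritance of the five properties, and, most importantly, the refined two-rate RMR charging that avoids the $O(fhk)$ overcount. That charging argument is correct and is indeed the crux: the $O(k)$ cost is incurred only at the single frontier node per crash (at held nodes the recovery is the $O(1)$ fast path through Line~\refln{dsm:try:11}), so each crash costs $O(k+h)$ and the total is $O((1+f)\log n/\log\log n)$. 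Two refinements are worth making. First, your statement that the $O(k)$ cost arises ``only by the queue-repair that runs after a crash around the \fas'' is slightly too narrow: a crash while the process is waiting at Line~\refln{dsm:try:16} of the frontier node also forces a full passage through $\rlock$ on recovery (the repair body is skipped at Line~\refln{dsm:rep:1}, but the $\rlock$ passage itself costs $O(k)$ RMRs); this still occurs only at the frontier, so your accounting is unaffected, but the justification should cite the frontier/held distinction rather than the \fas\ specifically. Second, recovery during the release phase is genuinely glossed over: at a node whose lock the process has already released, $\nodes[p]=\nil$ is indistinguishable (to that node's instance) from never having entered, so a naive leaf-to-root re-walk would start a fresh acquisition there. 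The composed algorithm must persist per-process state recording which phase (acquire, CS, release) and which tree level the process had reached, so that on recovery it resumes the release from the correct node rather than re-running Try sections; this bookkeeping is standard in the Golab--Hendler construction but needs to be stated for your Wait-Free Exit and Wait-Free CSR claims to go through.
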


\clearpage

\bibliographystyle{acm}
\bibliography{recoverable}
\vfill

\pagebreak
\appendix
\section{Issues with Golab and Hendler's \cite{Golab:rmutex2} Algorithm} \label{app:issues}
In this section we describe two issues with Golab and Hendler's \swap\ and \cas\ based algorithm.
The Algorithm in question here appears in Figures 6, 7, 8 in \cite{Golab:rmutex2} and we use the exact line numbers and variable names as they appear in the paper.

\subsection{Scenario 1: Process deadlock inside Recover} \label{app:sc1}
The first issue with the \gh\ algorithm is that processes deadlock waiting on each other inside the Recover section.
This issue is described as below:
\begin{enumerate}
	\item Process $P_4$ requests the lock by starting a fresh passage, goes to the CS, completes the Exit, and then goes back to Remainder.
	\item Process $P_2$ starts a fresh passage, executes the code till (but not including) Line 26 and crashes.
	\item Remainder section puts $P_2$ into Recover, $P_2$ starts executing \texttt{IsLinkedTo($2$)} from Line 44 because $mynode.nextStep = 26$ and
	$mynode.prev = \perp$ for $P_2$.
	\item $P_2$ sleeps at Line 68 with $i = 0$.
	\item Process $P_4$ starts another passage, executes till (but not including) Line 26 and crashes.
	\item Thereafter, $P_4$ goes to Recover, starts executing \texttt{IsLinkedTo($4$)} from Line 44 because $mynode.nextStep = 26$ and
	$mynode.prev = \perp$ for $P_4$.
	\item $P_2$ starts executing procedure \texttt{IsLinkedTo()} where it left and executes several interations until $i = 4$.
	Now it waits on $lnodes[4].prev$ ($P_4$'s $mynode$) to become non-$\perp$.
	\item $P_4$ starts executing procedure \texttt{IsLinkedTo()} where it left and executes several iterations until $i=2$
	and now it waits on $lnodes[2].prev$ ($P_2$'s $mynode$) to become non-$\perp$.
	\item From now on no process including $P_2$ and $P_4$ ever crash. Therefore $P_2$ and $P_4$ are then waiting on each other and no one ever sets $mynode.prev$ to a non-$\perp$ value. 
	This results in violation of Starvation freedom property.
\end{enumerate}

\subsection{Scenario 2: Starvation Freedom Violation} \label{app:sc2}
The second issue with their algorithm is a process may starve even though it never crashed.
The issue is as described below:

\begin{enumerate}
	\item Process $P_0$ initiates a new passage, goes to CS, and no other process comes after it, so $tail$ is pointing to $P_0$'s node.
	\item $P_1$ initiates a new passage, performs FAS on $tail$ and goes behind $P_0$, and sets its own $mynode.prev$ field to point to $P_0$'s node.
	\item $P_2$ initiates a new passage, performs FAS on $tail$ and goes behind $P_1$, but crashes immediately, hence losing its local variable $prev$ before setting its own $mynode.prev$ field.
	\item $P_2$ performs \texttt{isLinkedTo($2$)}, which returns $true$ because $tail$ is pointing to $P_2$'s $mynode$.
	\item $P_3$ initiates a new passage, performs FAS on $tail$ and goes behind $P_2$, and sets its own $mynode.prev$ field to point to $P_2$'s node.
	\item $P_2$ acquires $rLock$ in order to recover from the crash, and performs iterations with $i = 0, 1, 2, 3$ of the for-loop on Line 76. At this point the relation $R$ maintained in the $rlock$ contains (0, 1), (2, 3), (3, TAIL).
	\item $P_4$ initiates a new passage, performs FAS on $tail$ and goes behind $P_3$, but loses its local variable $prev$ before setting its own $mynode.prev$ field.
	\item $P_5$ initiates a new passage, performs FAS on $tail$ and goes behind $P_4$, and sets its own $mynode.prev$ field to point to $P_4$'s node.
	\item $P_2$ resumes and performs iterations with $i = 4, 5$ of the for-loop at Line 76, adding (4,5) to $R$. \\
	At this point R = {(0, 1), (2, 3), (3, TAIL), (4,5)}. Therefore, process 2 identifies \\
	\begin{itemize}
		\item (0,1) as the non-failed fragment (segment 1),  
		\item (4,5) as the middle segment (segment 2), and
		\item (2,3), (3,TAIL) as the tail segment (segment 3).
	\end{itemize}
	\item On Line 93 $P_2$ sets $mynode.prev$ to point to $P_5$'s node and $tail$ still points to $P_5$'s node.
	\item $P_6$ initiates a new passage, performs FAS on $tail$ and goes behind $P_5$, and sets its own $mynode.prev$ field to point to $P_5$'s node. Note, at this point, both $P_2$ and $P_6$ set their respective $mynode.prev$ field to point to the $P_5$'s node and $tail$ points to $P_6$'s node.
	\item Thereafter $P_6$ executes the remaining lines of Try section setting $P_5$'s $mynode.next$ to point to its own node at Line 30, and then continues to busy-wait on Line 31.
	\item $P_2$ then comes out of the $rlock$, continues to Line 28 in Try, sets $P_5$'s $mynode.next$ to point to its own node at Line 30, and continues to busy-wait on Line 31.
	\item Hereafter, assume that no process fails, we have that all the processes coming after $P_6$ including $P_6$ itself forever starve.
	This is because $P_5$ was supposed to wake $P_6$ up from the busy-wait, but it would wake up $P_2$ instead. $P_2$ never wakes any process up because it is not visible to any process.
	This violates Starvation Freedom.
\end{enumerate}

\section{Illustration for Repair} \label{app:ill}
Figure~\ref{fig:scenarios} illustrates the bird's eye view of queue repair performed by crashed processes. 
Refer to Section~\ref{sec:infdesc} for a detailed description.

\begin{figure}
	\hrule
	\flushright 
	
	\vspace*{1mm}
	\fbox{
		\begin{tikzpicture}[list/.style={rectangle split, rectangle split parts=2, draw, rectangle split horizontal}, >=stealth, start chain]
		\node (desc) at (0.25,1.35) {Node used by $\pi$:};		
		\node[list] (pi) at (1,0) {$\pi$};
		\draw[*-, thick] let \p1 = (pi.two), \p2 = (pi.center) in (\x1,\y2) -- (2,0);
		
		\node (prev) at (3,.8) {$\mynodepi.\pred$};
		
		\draw[->, dashed] (prev) -- (1.3, .1);
		\end{tikzpicture}
	}
	\vspace*{-4mm}
	\flushleft
	\begin{tikzpicture}
	\node[text width=12cm] (text) at (0,0) {With $\pi_1$, $\pi_3$, $\pi_5$, $\pi_7$, and $\pi_8$ crashed, initial state of the queue \\ ($\pi_1$, $\pi_3$, $\pi_5$ crashed at Line~\refln{dsm:try:5} and $\pi_7$, $\pi_8$ crashed at Line~\refln{dsm:try:4}):};
	\end{tikzpicture}
	
	\centering	
	\framebox[\textwidth]{
		\begin{tikzpicture}[list/.style={rectangle split, rectangle split parts=2, draw, rectangle split horizontal, minimum height=5mm}, >=stealth, start chain]
		
		\node[rectangle, draw, minimum size=5mm] (tail) at (1.7, .8) {};
		\node[circle, fill,inner sep=2pt] (tailref) at (tail.center) {};
		\node[rectangle, left=0.01 of tail] (tailtext) {$\tail:$};
		
		\node[list] (pi8) at (0,0) {$\pi_8$};
		\node (E) at (0.5, 0) {\includegraphics[width=6mm, keepaspectratio]{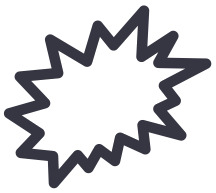}};
		
		\node[list] (pi7) at (1.5,0) {$\pi_7$};
		\node (E) at (2, 0) {\includegraphics[width=6mm, keepaspectratio]{explode2.png}};
		
		\node[list] (pi6) at (3,0) {$\pi_6$};
		\node[list] (pi5) at (4.5,0) {$\pi_5$};
		\node (E) at (5, 0) {\includegraphics[width=6mm, keepaspectratio]{explode2.png}};
		
		\node[list] (pi4) at (6,0) {$\pi_4$};		
		\node[list] (pi3) at (7.5,0) {$\pi_3$};
		\node (E) at (8, 0) {\includegraphics[width=6mm, keepaspectratio]{explode2.png}};
		
		\node[list] (pi2) at (9,0) {$\pi_2$};
		\node[list] (pi1) at (10.5,0) {$\pi_1$};
		\node (E) at (11, 0) {\includegraphics[width=6mm, keepaspectratio]{explode2.png}};
		
		\node[rectangle, draw, minimum size=5mm] (tok) at (10.7, .8) {};
		\node[circle, fill,inner sep=2pt] (tokref) at (tok.center) {};
		\node[rectangle, left=0.01 of tok] (tailtext) {Node out of CS:};
		\node[list] (key) at (12,0) {\hspace*{.5pt} $x$\hspace*{.5pt} };
		\node (K) at (12.33, 0) {\includegraphics[width=4mm]{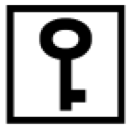}};
		\draw[->, thick] (tok.center) -- (key.north west);
		
		\draw[->, thick] (tail.center) -- (pi6.north west);
		\draw[*->, thick] let \p1 = (pi6.two), \p2 = (pi6.center) in (\x1,\y2) -- (pi5);
		\draw[*->, thick] let \p1 = (pi4.two), \p2 = (pi4.center) in (\x1,\y2) -- (pi3);
		\draw[*->, thick] let \p1 = (pi2.two), \p2 = (pi2.center) in (\x1,\y2) -- (pi1);
		
		\end{tikzpicture}
	}
	\vspace*{-6mm}
	\flushleft
	\begin{tikzpicture}
	\node[text width=12cm] (text) at (0,0) {$\pi_1$ performs repair:};
	\end{tikzpicture}
	
	\centering	
	\framebox[\textwidth]{
		\begin{tikzpicture}[list/.style={rectangle split, rectangle split parts=2, draw, rectangle split horizontal, minimum height=5mm}, >=stealth, start chain]
		
		\node[rectangle, draw, minimum size=5mm] (tail) at (1.7, .8) {};
		\node[circle, fill,inner sep=2pt] (tailref) at (tail.center) {};
		\node[rectangle, left=0.01 of tail] (tailtext) {$\tail:$};
		
		\node[list] (pi8) at (0,0) {$\pi_8$};
		\node (E) at (0.5, 0) {\includegraphics[width=6mm, keepaspectratio]{explode2.png}};
		
		\node[list] (pi7) at (1.5,0) {$\pi_7$};
		\node (E) at (2, 0) {\includegraphics[width=6mm, keepaspectratio]{explode2.png}};
		
		\node[list] (pi6) at (3,0) {$\pi_6$};
		\node[list] (pi5) at (4.5,0) {$\pi_5$};
		\node (E) at (5, 0) {\includegraphics[width=6mm, keepaspectratio]{explode2.png}};
		
		\node[list] (pi4) at (6,0) {$\pi_4$};		
		\node[list] (pi3) at (7.5,0) {$\pi_3$};
		\node (E) at (8, 0) {\includegraphics[width=6mm, keepaspectratio]{explode2.png}};
		
		\node[list] (pi2) at (9,0) {$\pi_2$};
		\node[list] (pi1) at (10.5,0) {$\pi_1$};
		
		\node[rectangle, draw, minimum size=5mm] (tok) at (10.7, .8) {};
		\node[circle, fill,inner sep=2pt] (tokref) at (tok.center) {};
		\node[rectangle, left=0.01 of tok] (tailtext) {Node out of CS:};
		\node[list] (key) at (12,0) {\hspace*{.5pt} $x$\hspace*{.5pt} };
		\node (K) at (12.33, 0) {\includegraphics[width=4mm]{key.png}};
		\draw[->, thick] (tok.center) -- (key.north west);
		
		\draw[->, thick] (tail.center) -- (pi6.north west);
		\draw[*->, thick] let \p1 = (pi6.two), \p2 = (pi6.center) in (\x1,\y2) -- (pi5);
		\draw[*->, thick] let \p1 = (pi4.two), \p2 = (pi4.center) in (\x1,\y2) -- (pi3);
		\draw[*->, thick] let \p1 = (pi2.two), \p2 = (pi2.center) in (\x1,\y2) -- (pi1);
		\draw[*->, thick] let \p1 = (pi1.two), \p2 = (pi1.center) in (\x1,\y2) -- (key);
		
		\end{tikzpicture}
	}
	\vspace*{-6mm}
	\flushleft
	\begin{tikzpicture}
	\node (text) at (0,0) {$\pi_7$ performs repair:};
	\end{tikzpicture}
	
	\centering	
	\framebox[\textwidth]{
		\begin{tikzpicture}[list/.style={rectangle split, rectangle split parts=2, draw, rectangle split horizontal, minimum height=5mm}, >=stealth, start chain]
		
		\node[rectangle, draw, minimum size=5mm] (tail) at (.2, .8) {};
		\node[circle, fill,inner sep=2pt] (tailref) at (tail.center) {};
		\node[rectangle, left=0.01 of tail] (tailtext) {$\tail:$};
		
		\node[list] (pi8) at (0,0) {$\pi_8$};
		\node (E) at (0.5, 0) {\includegraphics[width=6mm, keepaspectratio]{explode2.png}};
		
		\node[list] (pi6) at (1.5,0) {$\pi_6$};
		\node[list] (pi5) at (3,0) {$\pi_5$};
		\node (E) at (3.5, 0) {\includegraphics[width=6mm, keepaspectratio]{explode2.png}};
		
		\node[list] (pi4) at (4.5,0) {$\pi_4$};		
		\node[list] (pi3) at (6,0) {$\pi_3$};
		\node (E) at (6.5, 0) {\includegraphics[width=6mm, keepaspectratio]{explode2.png}};
		
		\node[list] (pi7) at (7.5,0) {$\pi_7$};
		\node[list] (pi2) at (9,0) {$\pi_2$};
		\node[list] (pi1) at (10.5,0) {$\pi_1$};
		
		\node[rectangle, draw, minimum size=5mm] (tok) at (10.7, .8) {};
		\node[circle, fill,inner sep=2pt] (tokref) at (tok.center) {};
		\node[rectangle, left=0.01 of tok] (tailtext) {Node out of CS:};
		\node[list] (key) at (12,0) {\hspace*{.5pt} $x$\hspace*{.5pt} };
		\node (K) at (12.33, 0) {\includegraphics[width=4mm]{key.png}};
		\draw[->, thick] (tok.center) -- (key.north west);
		
		\draw[->, thick] (tail.center) -- (pi6.north west);
		\draw[*->, thick] let \p1 = (pi6.two), \p2 = (pi6.center) in (\x1,\y2) -- (pi5);
		\draw[*->, thick] let \p1 = (pi4.two), \p2 = (pi4.center) in (\x1,\y2) -- (pi3);
		\draw[*->, thick] let \p1 = (pi7.two), \p2 = (pi7.center) in (\x1,\y2) -- (pi2);
		\draw[*->, thick] let \p1 = (pi2.two), \p2 = (pi2.center) in (\x1,\y2) -- (pi1);
		\draw[*->, thick] let \p1 = (pi1.two), \p2 = (pi1.center) in (\x1,\y2) -- (key);
		
		\end{tikzpicture}
	}
	\vspace*{-6mm}
	\flushleft
	\begin{tikzpicture}
	\node (text) at (0,0) {$\pi_5$ performs repair:};
	\end{tikzpicture}
	
	\centering	
	\framebox[\textwidth]{
		\begin{tikzpicture}[list/.style={rectangle split, rectangle split parts=2, draw, rectangle split horizontal, minimum height=5mm}, >=stealth, start chain]
		
		\node[rectangle, draw, minimum size=5mm] (tail) at (3.2, .8) {};
		\node[circle, fill,inner sep=2pt] (tailref) at (tail.center) {};
		\node[rectangle, left=0.01 of tail] (tailtext) {$\tail:$};
		
		\node[list] (pi8) at (0,0) {$\pi_8$};
		\node (E) at (0.5, 0) {\includegraphics[width=6mm, keepaspectratio]{explode2.png}};
		
		\node[list] (pi4) at (1.5,0) {$\pi_4$};		
		\node[list] (pi3) at (3,0) {$\pi_3$};
		\node (E) at (3.5, 0) {\includegraphics[width=6mm, keepaspectratio]{explode2.png}};
		
		\node[list] (pi6) at (4.5,0) {$\pi_6$};
		\node[list] (pi5) at (6,0) {$\pi_5$};
		\node[list] (pi7) at (7.5,0) {$\pi_7$};
		\node[list] (pi2) at (9,0) {$\pi_2$};
		\node[list] (pi1) at (10.5,0) {$\pi_1$};
		
		\node[rectangle, draw, minimum size=5mm] (tok) at (10.7, .8) {};
		\node[circle, fill,inner sep=2pt] (tokref) at (tok.center) {};
		\node[rectangle, left=0.01 of tok] (tailtext) {Node out of CS:};
		\node[list] (key) at (12,0) {\hspace*{.5pt} $x$\hspace*{.5pt} };
		\node (K) at (12.33, 0) {\includegraphics[width=4mm]{key.png}};
		\draw[->, thick] (tok.center) -- (key.north west);
		
		\draw[->, thick] (tail.center) -- (pi6.north west);
		\draw[*->, thick] let \p1 = (pi4.two), \p2 = (pi4.center) in (\x1,\y2) -- (pi3);
		\draw[*->, thick] let \p1 = (pi6.two), \p2 = (pi6.center) in (\x1,\y2) -- (pi5);
		\draw[*->, thick] let \p1 = (pi5.two), \p2 = (pi5.center) in (\x1,\y2) -- (pi7);
		\draw[*->, thick] let \p1 = (pi7.two), \p2 = (pi7.center) in (\x1,\y2) -- (pi2);
		\draw[*->, thick] let \p1 = (pi2.two), \p2 = (pi2.center) in (\x1,\y2) -- (pi1);
		\draw[*->, thick] let \p1 = (pi1.two), \p2 = (pi1.center) in (\x1,\y2) -- (key);
		
		\end{tikzpicture}
	}
	\vspace*{-6mm}
	\flushleft
	\begin{tikzpicture}
	\node (text) at (0,0) {$\pi_8$ performs repair:};
	\end{tikzpicture}
	
	\centering	
	\framebox[\textwidth]{
		\begin{tikzpicture}[list/.style={rectangle split, rectangle split parts=2, draw, rectangle split horizontal, minimum height=5mm}, >=stealth, start chain]
		
		\node[rectangle, draw, minimum size=5mm] (tail) at (1.7, .8) {};
		\node[circle, fill,inner sep=2pt] (tailref) at (tail.center) {};
		\node[rectangle, left=0.01 of tail] (tailtext) {$\tail:$};
		
		\node[list] (pi4) at (0,0) {$\pi_4$};		
		\node[list] (pi3) at (1.5,0) {$\pi_3$};
		\node (E) at (2, 0) {\includegraphics[width=6mm, keepaspectratio]{explode2.png}};
		
		\node[list] (pi8) at (3,0) {$\pi_8$};		
		\node[list] (pi6) at (4.5,0) {$\pi_6$};
		\node[list] (pi5) at (6,0) {$\pi_5$};
		\node[list] (pi7) at (7.5,0) {$\pi_7$};
		\node[list] (pi2) at (9,0) {$\pi_2$};
		\node[list] (pi1) at (10.5,0) {$\pi_1$};
		
		\node[rectangle, draw, minimum size=5mm] (tok) at (10.7, .8) {};
		\node[circle, fill,inner sep=2pt] (tokref) at (tok.center) {};
		\node[rectangle, left=0.01 of tok] (tailtext) {Node out of CS:};
		\node[list] (key) at (12,0) {\hspace*{.5pt} $x$\hspace*{.5pt} };
		\node (K) at (12.33, 0) {\includegraphics[width=4mm]{key.png}};
		\draw[->, thick] (tok.center) -- (key.north west);
		
		\draw[->, thick] (tail.center) -- (pi8.north west);
		\draw[*->, thick] let \p1 = (pi4.two), \p2 = (pi4.center) in (\x1,\y2) -- (pi3);
		\draw[*->, thick] let \p1 = (pi8.two), \p2 = (pi8.center) in (\x1,\y2) -- (pi6);
		\draw[*->, thick] let \p1 = (pi6.two), \p2 = (pi6.center) in (\x1,\y2) -- (pi5);
		\draw[*->, thick] let \p1 = (pi5.two), \p2 = (pi5.center) in (\x1,\y2) -- (pi7);
		\draw[*->, thick] let \p1 = (pi7.two), \p2 = (pi7.center) in (\x1,\y2) -- (pi2);
		\draw[*->, thick] let \p1 = (pi2.two), \p2 = (pi2.center) in (\x1,\y2) -- (pi1);
		\draw[*->, thick] let \p1 = (pi1.two), \p2 = (pi1.center) in (\x1,\y2) -- (key);
		
		\end{tikzpicture}
	}
	\vspace*{-6mm}
	\flushleft
	\begin{tikzpicture}
	\node (text) at (0,0) {$\pi_3$ performs repair:};
	\end{tikzpicture}
	
	\centering	
	\framebox[\textwidth]{
		\begin{tikzpicture}[list/.style={rectangle split, rectangle split parts=2, draw, rectangle split horizontal, minimum height=5mm}, >=stealth, start chain]
		
		\node[rectangle, draw, minimum size=5mm] (tail) at (0.2, .8) {};
		\node[circle, fill,inner sep=2pt] (tailref) at (tail.center) {};
		\node[rectangle, left=0.01 of tail] (tailtext) {$\tail:$};
		
		\node[list] (pi4) at (1.5,0) {$\pi_4$};		
		\node[list] (pi3) at (3,0) {$\pi_3$};
		\node[list] (pi8) at (4.5,0) {$\pi_8$};		
		\node[list] (pi6) at (6,0) {$\pi_6$};
		\node[list] (pi5) at (7.5,0) {$\pi_5$};
		\node[list] (pi7) at (9,0) {$\pi_7$};
		\node[list] (pi2) at (10.5,0) {$\pi_2$};
		\node[list] (pi1) at (12,0) {$\pi_1$};
		
		\node[rectangle, draw, minimum size=5mm] (tok) at (12.3, .8) {};
		\node[circle, fill,inner sep=2pt] (tokref) at (tok.center) {};
		\node[rectangle, left=0.01 of tok] (tailtext) {Node out of CS:};
		\node[list] (key) at (13.5,0) {\hspace*{.5pt} $x$\hspace*{.5pt} };
		\node (K) at (13.83, 0) {\includegraphics[width=4mm]{key.png}};
		\draw[->, thick] (tok.center) -- (key.north west);
		
		\draw[->, thick] (tail.center) -- (pi4.north west);
		\draw[*->, thick] let \p1 = (pi4.two), \p2 = (pi4.center) in (\x1,\y2) -- (pi3);
		\draw[*->, thick] let \p1 = (pi3.two), \p2 = (pi3.center) in (\x1,\y2) -- (pi8);
		\draw[*->, thick] let \p1 = (pi8.two), \p2 = (pi8.center) in (\x1,\y2) -- (pi6);
		\draw[*->, thick] let \p1 = (pi6.two), \p2 = (pi6.center) in (\x1,\y2) -- (pi5);
		\draw[*->, thick] let \p1 = (pi5.two), \p2 = (pi5.center) in (\x1,\y2) -- (pi7);
		\draw[*->, thick] let \p1 = (pi7.two), \p2 = (pi7.center) in (\x1,\y2) -- (pi2);
		\draw[*->, thick] let \p1 = (pi2.two), \p2 = (pi2.center) in (\x1,\y2) -- (pi1);
		\draw[*->, thick] let \p1 = (pi1.two), \p2 = (pi1.center) in (\x1,\y2) -- (key);
		
		\end{tikzpicture}
	}
	\captionsetup{labelfont=bf}
	\caption{Queue states after repair is performed by different processes in a sequence. 
		Explosion symbol in place of a $\pred$ pointer on a node denotes the said process has crashed without updating the $\pred$ pointer of its node.}
	\label{fig:scenarios}
	\hrule
\end{figure}

\clearpage

\section{Proof of correctness} \label{app:proof}

In this section we present a proof of correctness for the algorithm presented in Figures~\ref{algo:recdsmlock}-\ref{algo:auxdsmlock}.
We prove the algorithm by giving an invariant for the algorithm and then proving correctness using the invariant.
Figures~\ref{inv:recoverablemutex}-\ref{inv:recoverablemutex4} give the invariant satisfied by the algorithm.
The proof that the algorithm satisfies the invariant is by induction and is presented in Appendix~\ref{sec:invproof}.

We begin with some notation used in the proof and the invariant. 
A process may crash several times during its super-passage, at which point all its local variables get wiped out and 
the program counter is reset to \refln{dsm:try:1} (i.e. first instruction of Try).
In order to prove correctness we maintain a set of hidden variables that help us in the arguments of our proof.
Following is the list of hidden variables for a process $\pi$ and the locations that the variables are updated in the algorithm:

\begin{itemize}[leftmargin=18mm]
	\item[$\porth{\pi}$:] 
	This variable stores the port number that $\pi$ uses to complete its super-passage. 
	The Remainder section decides which port will be used by $\pi$ for the super-passage.
	When $\pi$ is not active in a super-passage, we assume that $\porth{\pi} = \nil$.

	\item[$\pcpih$:] 
	This variable takes line numbers as value according to the value of program counter, i.e., $\pcpi$. 
	Figures~\ref{algo:recdsmlockann}-\ref{algo:auxdsmlockann} show the annotated versions of our code from 
	Figures~\ref{algo:recdsmlock}-\ref{algo:auxdsmlock} (annotations in $<>$) where we show the value that $\pcpih$ takes at each line.
	We assume that the change in $\pcpih$ happens atomically along with the execution of the line.
	$\pcpih$ remains the same as before a line is executed for those lines in the figure that are not annotated (for example, Lines~\refln{dsma:try:1}, \refln{dsma:try:7}-\refln{dsma:try:12}).
	
\setcounter{linecounter}{9}
\begin{figure}[!ht]
	\begin{footnotesize}
		\hrule
		\begin{minipage}[t]{.95\linewidth}
			\begin{tabbing}
				\hspace{0.2in} \= \hspace{0.2in} \= \hspace{0.2in} \=  \hspace{0.2in} \= \hspace{0.2in} \= \hspace{0.2in} \=\\
				\> \> \texttt{\underline{Try Section}} \\
				\> \procline{dsma:try:1} \> \ifcode $\nodes[p] = \nil$ \thencode \\
				\> \procline{dsma:try:2} \> \> $\mynodepi \gets \newcode \qnode$; $<\pcpih \gets \refln{dsma:try:3}>$ \\
				\> \procline{dsma:try:3} \> \> $\nodes[p] \gets \mynodepi$; $<\pcpih \gets \refln{dsma:try:4}>$ \\
				\> \procline{dsma:try:4} \> \> $\mypredpi \gets \fas(\tail, \mynodepi)$; $<\pcpih \gets \refln{dsma:try:5}>$ \\
				\> \procline{dsma:try:5} \> \> $\mynodepi.\pred \gets \mypredpi$; $<\pcpih \gets \refln{dsma:try:6}>$ \\
				\> \procline{dsma:try:6} \> \> $\mynodepi.\nonnilsignal.\applysignal()$; $<\pcpih \gets \refln{dsma:try:16}>$ \\
				\> \procline{dsma:try:7} \> {\bf else} \\
				\> \procline{dsma:try:8} \> \> $\mynodepi \gets \nodes[p]$ \\
				\> \procline{dsma:try:9} \> \> \ifcode $\mynodepi.\pred = \nil$ \thencode $\mynodepi.\pred \gets \& \fail$ \\
				\> \procline{dsma:try:10} \> \> $\mypredpi \gets \mynodepi.\pred$ \\
				\> \procline{dsma:try:11} \> \> \ifcode $\mypredpi = \& \incs$ \thencode \goto Critical Section \\
				\> \procline{dsma:try:12} \> \> \ifcode $\mypredpi = \& \key$ \thencode \\
				\> \procline{dsma:try:13} \> \> \> Execute Lines~\refln{dsm:exit:2}-\refln{dsm:exit:3} of Exit Section and \goto Line~\refln{dsm:try:1}; $<\pcpih \gets \refln{dsma:try:2}>$  \\
				\> \procline{dsma:try:14} \> \> $\mynodepi.\nonnilsignal.\applysignal()$ \\
				\> \procline{dsma:try:15} \> \> Execute $\rlock$ \\
				\> \procline{dsma:try:16} \> $\mypredpi.\cssignal.\applywait ()$; $<\pcpih \gets \refln{dsma:try:17}>$ \\
				\> \procline{dsma:try:17} \> $\mynodepi.\pred \gets \& \incs$; $<\pcpih \gets \refln{dsma:exit:1}>$
			\end{tabbing}
		\end{minipage} \vspace*{-.1in}\\
		\begin{minipage}[t]{.95\linewidth}
			\begin{tabbing}
				\hspace{0.2in} \= \hspace{0.2in} \= \hspace{0.2in} \=  \hspace{0.2in} \= \hspace{0.2in} \= \hspace{0.2in} \=\\
				\> \> \texttt{\underline{Exit Section}} \\
				\> \procline{dsma:exit:1} \> $\mynodepi.\pred \gets \& \key$; $<\pcpih \gets \refln{dsma:exit:2}>$ \\
				\> \procline{dsma:exit:2} \> $\mynodepi.\cssignal.\applysignal ()$; $<\pcpih \gets \refln{dsma:exit:3}>$ \\
				\> \procline{dsma:exit:3} \> $\nodes[p] \gets \nil$; $<\pcpih \gets \refln{dsma:try:2}>$ 
			\end{tabbing}  
		\end{minipage}
		\vspace*{-2mm}
		\captionsetup{labelfont=bf}
		\caption{Annotated version of code from Figure~\ref{algo:recdsmlock}. $\portpih = p$.}
		\label{algo:recdsmlockann}
		\hrule
	\end{footnotesize}
\end{figure}

\begin{figure}[!ht]
	\begin{footnotesize}
		\hrule
		\vspace{-0.1in}
		\begin{minipage}[t]{.95\linewidth}
			\begin{tabbing}
				\hspace{0.2in} \= \hspace{0.2in} \= \hspace{0.2in} \=  \hspace{0.2in} \= \hspace{0.2in} \=\\
				\> \> \texttt{\underline{Critical Section of $\rlock$}}\\
				\> \procline{dsma:rep:1} \> \ifcode $\mypredpi \neq \& \fail$ \thencode \\
				\>						\> \> \goto \texttt{Exit Section} of $\rlock$; \\
				\> 						\> \> $<  \pcpih \gets \refln{dsma:try:16} >$ \\
				\> \procline{dsma:rep:2} \> $\tailpi \gets \tail$; $\Vpi \gets \phi$; $\Epi \gets \phi$; $\tailpathpi \gets \nil$; $\headpathpi \gets \nil$ \\
				\> \procline{dsma:rep:3} \> \forcode $\idxpi \gets  0 \, \mbox{\bf to } k-1$ \\
				\> \procline{dsma:rep:4} \> \> $\curpi \gets \nodes[\idxpi]$ \\
				\> \procline{dsma:rep:5} \> \> \ifcode $\curpi = \nil$ \thencode \continue \\
				\> \procline{dsma:rep:6} \> \> $\curpi.\nonnilsignal.\applywait() $\\
				\> \procline{dsma:rep:7} \> \> $\curpredpi \gets \curpi.\pred$ \\
				\> \procline{dsma:rep:8} \> \> \ifcode $\curpredpi \in \{ \& \fail, \& \incs, \& \token \}$ \thencode $\Vpi \gets \Vpi \cup \{ \curpi \}$ \\
				\> \procline{dsma:rep:9} \> \> \elsecode $\Vpi \gets \Vpi \cup\ \{ \curpi, \curpredpi \}; \Epi \gets \Epi \cup\ \{ (\curpi, \curpredpi) \}$ \\
				\> \procline{dsma:rep:10} \> Compute the set \pathspi\ of maximal paths in the graph $(\Vpi , \Epi)$ \\
				\> \procline{dsma:rep:11} \> Let $\mypathpi$ be the unique path in \pathspi\ that contains $\mynodepi$ \\
				\> \procline{dsma:rep:12} \> \ifcode $\tailpi \in \Vpi$ \thencode let $\tailpathpi$ be the unique path in \pathspi\ that contains $\tailpi$ \\	
				\> \procline{dsma:rep:13} \> \foreachcode $\mseqpi \in $ \pathspi \\
				\> \procline{dsma:rep:14} \> \> \ifcode $\front(\mseqpi).\pred \in \{ \& \incs, \& \token \}$ \thencode \\
				\> \procline{dsma:rep:15} \> \> \> \ifcode $\rear(\mseqpi).\pred \neq  \& \token$ \thencode \\
				\> \procline{dsma:rep:16} \> \> \> \> $\headpathpi \gets \mseqpi$ \\
				\> \procline{dsma:rep:17} \> \ifcode $\tailpathpi = \nil \vee \front(\tailpathpi).\pred \in \{ \& \incs, \& \token \}$ \thencode \\
				\> \procline{dsma:rep:18} \> \> $\mypredpi \gets \swap(\tail, \rear(\mypathpi))$; $<  \pcpih \gets \refln{dsma:try:5} >$ \\
				\> \procline{dsma:rep:19} \> \elsecode \\
				\>						\> \> \ifcode $\headpathpi \neq \nil$ \thencode $\mypredpi \gets \rear(\headpathpi)$ \elsecode $\mypredpi \gets \& \spclnode$ ; \\
				\>						\> \> $<  \pcpih \gets \refln{dsma:try:5} >$ \\
				\> \procline{dsma:rep:20} \> $\mynodepi.\pred \gets \mypredpi$ ; $<  \pcpih \gets \refln{dsma:try:16} >$
			\end{tabbing}
		\end{minipage}
		\vspace*{-2mm}
		\captionsetup{labelfont=bf}
		\caption{Annotated version of code from Figure~\ref{algo:auxdsmlock}. $\portpih = p$.}
		\label{algo:auxdsmlockann}
		\hrule
	\end{footnotesize}
\end{figure}
		
	\item[$\node_{\pi}$:] 
	This variable is used to denote the $\qnode$ that $\pi$ is using in the current configuration for the current passage.
	Detailed description of the values that $\node_{\pi}$ takes appears in the Definitions section of Figure~\ref{inv:recoverablemutex}.
\end{itemize}

We say that a process is in the CS if and only if $\pcpih = \refln{dsm:exit:1}$.
If $\pi$ is not active in a super-passage and hence in the Remainder section, $\pcpi = \refln{dsm:try:1}$, $\pcpih = \refln{dsm:try:6}$,
and the values of the rest of the hidden variables are as defined above.
We assume that initially all the local variables take arbitrary values.

\begin{figure}[!ht]
	\hrule
	{\footnotesize
	\vspace{0.05in}
	\begin{tabbing}
		\hspace{0in} \= {\bf Assumptions:} \hspace{0.2in} \= \hspace{0.2in} \=  \hspace{0.2in} \= \hspace{0.2in} \= \hspace{0.2in} \=\\
		\> $\bullet$ \=  Algorithm in Figures~\ref{algo:recdsmlock}-\ref{algo:auxdsmlock} assumes that every process uses a single port throughout its super-passage and no two \\
		\> \> \hspace*{2mm} processes execute a super-passage with the same port when their super-passages overlap. The Remainder section\\
		\> \> \hspace*{2mm} ensures that this assumption is always satisfied. Therefore, when a process continues execution after a crash,\\
		\> \> \hspace*{2mm} it uses the same port it chose at the start of the current super-passage. Hence, the Remainder section guarantees \\ 
		\> \> \hspace*{2mm} that the following condition is always met for active processes: \\
		\> \> \hspace*{10mm} $\forall \pi \in \Pi, \exists p \in \calp,$ $(\widehat{\port_{\pi}} =  p \wedge \forall p' \in \calp, p \neq p') \limplies \widehat{\port_{\pi}} \neq p'$. \\
		\> {\bf Definitions (Continued in Figure~\ref{inv:recoverablemutex2}):} \\
		\>$\bullet$\> $\calp$ is a set of all ports. \\
		\>$\bullet$\> $\Pi$ is a set of all processes. \\
		\>$\bullet$\> $\caln$ is a set containing the node $\spclnode$ and any of the $\qnode$s created by any process at Line~\refln{dsm:try:2}  \\
		\> \> \hspace*{2mm} during the run so far. \\
		\>$\bullet$\> $\caln' = \{ \& qnode \mid qnode \in \caln \}$ is a set of node addresses from the nodes in $\caln$. \\
		\>$\bullet$\> $
		\node_\pi = 
		\begin{cases}
		\nodes[\porth{\pi}], & \text{if } \pch{\pi} \in [ \refln{dsm:try:4}, \refln{dsm:try:6} ] \cup [ \refln{dsm:try:16}, \refln{dsm:exit:3} ], \\
		\mynodepi, & \text{if } \pc{\pi} = \refln{dsm:try:3} , \\
		\nil, & \text{otherwise (i.e., } \pcpih \in [\refln{dsm:try:2}, \refln{dsm:try:3}] \wedge \pcpi \in [\refln{dsm:try:1}, \refln{dsm:try:2}] ). \\
		\end{cases}
		$ \\
		\> {\bf Conditions (Continued in Figures~\ref{inv:recoverablemutex2}-\ref{inv:recoverablemutex3}):} \\
		\end{tabbing}
		\vspace{-0.35in}
		\begin{enumerate}
			\item \label{inv:cond1} \cond{1} $\forall \pi \in \Pi,$ $(\pcpih \in \{ \refln{dsm:try:2}, \refln{dsm:try:3} \} \lbicond \nodes[\portpih] = \nil)$ $\wedge$ $(\pcpih \in \{ \refln{dsm:try:4}, \refln{dsm:try:5} \} \lbicond \nodepi.\pred \in \{ \nil, \& \fail \})$ \\
			\hspace*{3mm} $\wedge$ $(\pcpih \in \{ \refln{dsm:try:6}, \refln{dsm:try:16}, \refln{dsm:try:17} \} \lbicond \nodepi.\pred \in \caln')$ $\wedge$ $(\pcpih = \refln{dsm:exit:1} \lbicond \nodepi.\pred = \& \incs )$\\
			\hspace*{3mm} $\wedge$ $(\pcpih \in \{ \refln{dsm:exit:2}, \refln{dsm:exit:3} \} \lbicond \nodepi.\pred = \& \key)$ 
			
			\item \label{inv:cond2} \cond{2} $\forall \pi \in \Pi,$ 
			$(\pc{\pi} \in [ \refln{dsm:try:4}, \refln{dsm:try:6} ] \cup [ \refln{dsm:try:9}, \refln{dsm:exit:3} ] \cup [\refln{dsm:rep:1}, \refln{dsm:rep:19}] \limplies \mynode_{\pi} = \nodes[\porth{\pi}])$ \\
			\hspace*{3mm} $\wedge$ $(\pcpi \in \{ \refln{dsm:try:6} \} \cup [\refln{dsm:try:11}, \refln{dsm:try:15}] \cup [ \refln{dsm:try:16}, \refln{dsm:try:17} ] \cup [\refln{dsm:rep:1}, \refln{dsm:rep:19}] $ $\limplies$ $\mypredpi = \nodes[\porth{\pi}].\pred)$ \\
			\hspace*{3mm} $\wedge$ $((\pcpi \in [\refln{dsm:try:11}, \refln{dsm:try:15}] \cup [\refln{dsm:rep:1}, \refln{dsm:rep:19}]  \wedge \pcpih \in \{ \refln{dsm:try:4}, \refln{dsm:try:5} \}) \limplies \mypredpi = \& \fail )$
			
			\item \label{inv:cond54} \cond{54} $\forall \pi \in \Pi, \nodes[\portpih] \neq \nil \limplies ( \nodes[\portpih] \in \caln'$ \\
			\hspace*{30mm} $\wedge$ $((\exists \pi' \in \Pi, \pi \neq \pi' \wedge \nodes[\portpih].\pred = \nodes[\porth{\pi'}])$ \\
			\hspace*{45mm} $\vee$ $(\nodes[\portpih].\pred \in \caln' \wedge \nodes[\portpih].\pred.\pred = \& \key)$ \\
			\hspace*{45mm} $\vee$ $\nodes[\portpih].\pred \in \{ \nil, \& \fail, \& \incs, \& \key \})$ \\
			\hspace*{30mm} $\wedge$ $(\forall \pi'' \in \Pi, \pi \neq \pi''$ $\limplies$ \\
			\hspace*{45mm} $((\nodes[\portpih] = \nodes[\porth{\pi''}] \limplies \nodes[\portpih] = \nil)$ \\
			\hspace*{48mm} $\wedge$ $(\nodes[\portpih].\pred = \nodes[\porth{\pi''}].\pred$ $\limplies$ \\
			\hspace*{60mm} $\nodes[\portpih].\pred \in \{\nil, \& \fail, \& \key \}))))$ 
			
			\item \label{inv:cond3} \cond{3} $\forall \pi, \pi' \in \Pi, (\pi \neq \pi' \limplies (\node_{\pi} \neq \node_{\pi'} \vee \node_{\pi} = \node_{\pi'} = \nil))$\\
			\hspace*{3mm} $\wedge$ $((\pi \neq \pi' \wedge \nodepi \neq \nil \wedge \node_{\pi'} \neq \nil)$ $\limplies$\\
			\hspace*{30mm} $(\nodepi.\pred \neq \node_{\pi'}.\pred \vee \nodepi.\pred \in \{ \nil, \& \fail, \& \key \}))$ \\
			\hspace*{3mm} $\wedge$ $(\exists b \in \mathbb{N}, (1 \leq b \leq k \wedge \nodepi\underbrace{.\pred.\pred \cdots .\pred}_{b \textnormal{ times}} \in \{ \nil, \& \fail, \& \incs, \& \key \}))$ 
			
			\item \label{inv:cond57} \cond{57} $\forall\ qnode \in \caln, qnode.\pred \in \{ \nil, \& \fail, \& \incs, \& \key \} \cup \caln'$\\
			\hspace*{3mm} $\wedge$ $((\forall \pi \in \Pi, \node_{\pi} \neq \& qnode) \lbicond (\forall p' \in \calp, \nodes[p] \neq  qnode \wedge \forall \pi' \in \Pi, \mynode_{\pi'} \neq \& qnode))$\\
			\hspace*{3mm} $\wedge$ $(qnode.\cssignal = \present \limplies (qnode.\pred = \& \token \wedge (\forall \pi \in \Pi, \nodepi = qnode \limplies \pcpih = \refln{dsm:exit:3})))$ \\
			\hspace*{3mm} $\wedge$ $(qnode.\nonnilsignal = \present$ $\limplies$ \\
			\hspace*{30mm} $(qnode.\pred \neq \nil \wedge (\forall \pi \in \Pi, \nodepi = qnode \limplies \pcpih \in [\refln{dsm:try:4}, \refln{dsm:try:6}] \cup [\refln{dsm:try:16} ,\refln{dsm:exit:3}])))$ \\
			\hspace*{3mm} $\wedge$ $(qnode.\cssignal = \absent \limplies qnode.\pred \in \{ \nil, \& \fail, \& \incs \})$ \\
			\hspace*{3mm} $\wedge$ $(qnode.\nonnilsignal = \absent \limplies qnode.\pred = \nil)$

		\end{enumerate}	
	}
	\vspace*{-5mm}
	\captionsetup{labelfont=bf}
	\caption{Invariant for the $k$-ported recoverable mutual exclusion algorithm from Figures~\ref{algo:recdsmlock}-\ref{algo:auxdsmlock}. (Continued in Figures~\ref{inv:recoverablemutex2}-\ref{inv:recoverablemutex3}.)}
	\label{inv:recoverablemutex}
	\hrule
\end{figure}

\begin{figure}[!ht]
	\hrule
	{\footnotesize     
		\vspace{0.05in}
		\begin{tabbing}
			\hspace{0in} \= {\bf Definitions (Continued from Figure~\ref{inv:recoverablemutex}):} \hspace{0.2in} \= \hspace{0.2in} \=  \hspace{0.2in} \= \hspace{0.2in} \= \hspace{0.2in} \=\\
			\>$\bullet$ \= For a $\qnode$ instance $\node_\pi$ used by a process $\pi \in \Pi$, $\fragment(\nodepi)$ is a sequence of\\
			\> \> \hspace*{2mm} distinct $\qnode$ instances $(\node_{\pi_1}, \node_{\pi_2}, \dots, \node_{\pi_j})$ such that:\\
			\hspace*{5mm}\begin{minipage} [t] {0.9\textwidth} 
				\begin{itemize}[label={--}]
					\item $\forall i, \node_{\pi_i} \in \caln$,
					
					\item $\forall i \in [1, j-1], \node_{\pi_{i + 1}}.\pred = \node_{\pi_{i}}$ (e.g., $\node_{\pi_2}.\pred = \node_{\pi_1}$),
					
					\item $\node_{\pi_1}.\pred \in \{ \nil, \& \fail, \& \incs, \& \token \}$,
					
					\item $\forall q \in \calp, \nodes[q].\pred \neq \node_{\pi_j}$,
					
					\item $\fraghead(\fragment(\node_{\pi})) = \node_{\pi_1}$ and $\fragtail(\fragment(\node_{\pi})) = \node_{\pi_j}$,
					
					\item $|\fragment(\node_{\pi})| = j$.
				\end{itemize} 
			\end{minipage} \\
			\> \> \hspace*{2mm} For example, for the initial state of the queue in Figure~\ref{fig:scenarios},  $(\pi_1, \pi_2)$, $(\pi_3, \pi_4)$, $(\pi_5, \pi_6)$, $(\pi_7)$, $(\pi_8)$ are\\
			\> \> \hspace*{2mm} distinct fragments. After $\pi_3$ performs repair in the illustration of Figure~\ref{fig:scenarios}, the only fragment of the queue \\
			\> \> \hspace*{2mm} is: $(\pi_1, \pi_2, \pi_7, \pi_5, \pi_6, \pi_8, \pi_3, \pi_4)$. Note, in this example a node assumes the name of its process for\\
			\> \> \hspace*{2mm}  brevity (i.e., $\pi_1$ should be read as $\node_{\pi_1}$). The set membership symbol $\in$ used on the sequence denotes  \\
			\> \> \hspace*{2mm} membership of a node in the fragment. For example, $\pi_2 \in \fragment(\pi_1)$ in both examples discussed above.\\
			\> \> \hspace*{2mm} Note, for simplicity we define $\fragment(\nil) = \nil$ and $|\fragment(\nil)| = 0$. Conditions of the invariant   \\
			\> \> \hspace*{2mm} assert that the set of nodes in shared memory operated by the algorithm satisfy this definition of $\fragment$.\\
			\>$\bullet$\> $\calq = \{ \pi \in \Pi \mid (\pch{\pi} \in \{ \refln{dsm:try:6}, \refln{dsm:try:16}, \refln{dsm:try:17} \} \wedge \fraghead(\fragment(\node_{\pi})).\pred \in \{ \& \incs, \& \token\}) \vee \pch{\pi} = \refln{dsm:exit:1} \}$  \\
			\> \> \hspace*{2mm} is a set of queued processes. \\ 
			\> {\bf Conditions (Continued from Figure~\ref{inv:recoverablemutex}):} \\
		\end{tabbing}
		\vspace{-0.38in}
		\begin{enumerate}
			\setcounter{enumi}{5}

			\item \label{inv:cond47} \cond{47} $\forall \pi \in \Pi, (\pcpi = \refln{dsm:try:1} \limplies \pcpih \in [ \refln{dsm:try:2}, \refln{dsm:try:6}] \cup [\refln{dsm:try:16}, \refln{dsm:exit:3} ])$ $\wedge$ $(\pcpi = \refln{dsm:try:2} \limplies \pcpih \in [\refln{dsm:try:2}, \refln{dsm:try:3}])$ \\
			\hspace*{3mm} $\wedge$ $(\pcpi \in [\refln{dsm:try:3}, \refln{dsm:try:6}] \cup \{ \refln{dsm:try:16} \} \limplies \pcpih = \pcpi)$ $\wedge$ $(\pcpi \in [ \refln{dsm:try:7}, \refln{dsm:try:11}] \limplies \pcpih \in [ \refln{dsm:try:4}, \refln{dsm:try:6}] \cup [\refln{dsm:try:16}, \refln{dsm:exit:3} ])$\\
			\hspace*{3mm} $\wedge$ $(\pcpi = \refln{dsm:try:12} \limplies \pcpih \in [\refln{dsm:try:4}, \refln{dsm:try:6}] \cup [\refln{dsm:try:16}, \refln{dsm:try:17}] \cup [\refln{dsm:exit:2}, \refln{dsm:exit:3}] )$ $\wedge$ $(\pcpi = \refln{dsm:try:13} \limplies \pcpih \in [ \refln{dsm:exit:2}, \refln{dsm:exit:3} ])$ \\
			\hspace*{3mm} $\wedge$ $(\pcpi \in [\refln{dsm:try:14}, \refln{dsm:try:15}] \cup \{ \refln{dsm:rep:1} \} \limplies \pcpih \in [\refln{dsm:try:4}, \refln{dsm:try:6}] \cup [\refln{dsm:try:16}, \refln{dsm:try:17}])$ $\wedge$ $(\pcpi \in [\refln{dsm:rep:2}, \refln{dsm:rep:19}] \limplies \pcpih \in [\refln{dsm:try:4}, \refln{dsm:try:5}])$ \\
			\hspace*{3mm} $\wedge$ $(\pcpih = \refln{dsm:try:2} \limplies \pcpi \in \{ \refln{dsm:try:1}, \refln{dsm:try:2} \})$ $\wedge$ $(\pcpih = \refln{dsm:try:3} \limplies \pcpi \in [\refln{dsm:try:1}, \refln{dsm:try:3} ])$ \\
			\hspace*{3mm} $\wedge$ $(\pcpih \in \{ \refln{dsm:try:4}, \refln{dsm:try:5} \} \limplies (\pcpi = \pcpih \vee \pcpi \in \{ \refln{dsm:try:1} \} \cup [\refln{dsm:try:7}, \refln{dsm:try:12}] \cup \{\refln{dsm:try:14}, \refln{dsm:try:15} \} \cup [\refln{dsm:rep:1}, \refln{dsm:rep:19} ]))$ \\
			\hspace*{3mm} $\wedge$ $(\pcpih \in \{ \refln{dsm:try:6}, \refln{dsm:try:16}, \refln{dsm:try:17} \} \limplies (\pcpi = \pcpih \vee \pcpi \in \{ \refln{dsm:try:1} \} \cup [\refln{dsm:try:7}, \refln{dsm:try:12}] \cup \{\refln{dsm:try:14}, \refln{dsm:try:15} \} \cup \{ \refln{dsm:rep:1} \}))$  \\
			\hspace*{3mm} $\wedge$ $(\pcpih = \refln{dsm:exit:1} \limplies (\pcpi = \pcpih \vee \pcpi \in \{ \refln{dsm:try:1} \} \cup [\refln{dsm:try:7}, \refln{dsm:try:11}]))$ \\
			\hspace*{3mm} $\wedge$ $(\pcpih \in \{ \refln{dsm:exit:2}, \refln{dsm:exit:3} \} \limplies (\pcpi = \pcpih \vee \pcpi \in \{ \refln{dsm:try:1} \} \cup [\refln{dsm:try:7}, \refln{dsm:try:13}]))$ 

			\item \label{inv:cond4} \cond{4} $\forall \pi, \pi' \in \Pi, \fragment(\node_{\pi}) \neq \fragment(\node_{\pi'}) \limplies$ \\ 
			\hspace*{30mm} $(\forall \pi'' \in \Pi, \node_{\pi''} \in \fragment(\node_{\pi}) \limplies \node_{\pi''} \notin \fragment(\node_{\pi'}))$\\
			\hspace*{3mm} $\wedge$ $\fraghead(\fragment(\nodepi)).\pred = \& \incs$ $\limplies$ $((\pi \neq \pi' \wedge \fraghead(\fragment(\node_{\pi'})).\pred = \& \incs)$ $\limplies$ \\
			\hspace*{45mm} $\node_{\pi'} \in \fragment(\nodepi))$ \\
			\hspace*{3mm} $\wedge$ $\fraghead(\fragment(\nodepi)).\pred = \& \key$ $\limplies$ $( \pcpih \in [\refln{dsm:exit:2}, \refln{dsm:exit:3}]$ $\vee$\\
			\hspace*{30mm} $(\pi \neq \pi' \wedge \fraghead(\fragment(\node_{\pi'})).\pred = \& \key \wedge \pch{\pi'} \notin [\refln{dsm:exit:2}, \refln{dsm:exit:3}] )$ $\limplies$ \\
			\hspace*{45mm} $\node_{\pi'} \in \fragment(\nodepi))$ \\
			\hspace*{3mm} $\wedge$ $(| \fragment(\nodepi)| > 1$ $\limplies$ \\
			\hspace*{30mm} $((\node_{\pi'} \in \fragment(\nodepi) \wedge \node_{\pi'} \neq \fraghead(\fragment(\nodepi))) \limplies \pch{\pi'} \in \{ \refln{dsm:try:6}, \refln{dsm:try:16} \} ))$

			\item \label{inv:cond5} \cond{5} $\forall \pi \in \Pi, \pc{\pi} \in \{ \refln{dsm:try:3}, \refln{dsm:try:4} \} \limplies ( \mynodepi \in \caln' \wedge (\forall q \in \calp, \nodes[q] \neq \mynode_{\pi} \wedge \nodes[q].\pred \neq \mynode_{\pi})$ \\
			\hspace*{30mm} $\wedge$ $\mynodepi.\cssignal = \absent$ $\wedge$ $\mynodepi.\nonnilsignal = \absent$ \\
			\hspace*{30mm} $\wedge$ $\mynodepi = \fraghead(\fragment(\mynodepi)) \wedge |\fragment(\mynodepi)| = 1$ \\
			\hspace*{30mm} $\wedge$ $\fragment(\mynodepi) \neq \fragment(\tail)$ $\wedge$ $\mynodepi.\pred = \nil)$

		\end{enumerate}
	}
	\vspace*{-5mm}
	\captionsetup{labelfont=bf}
	\caption{(Continued from Figure~\ref{inv:recoverablemutex}.) Invariant for the $k$-ported recoverable mutual exclusion algorithm from Figures~\ref{algo:recdsmlock}-\ref{algo:auxdsmlock}. (Continued in Figure~\ref{inv:recoverablemutex3}.)}
	\label{inv:recoverablemutex2}
	\hrule
\end{figure}

\begin{figure}[!ht]
	\hrule
	{\footnotesize     
		\vspace{0.05in}
		\begin{tabbing}
			\hspace{0in} \= {\bf Conditions (Continued from Figure~\ref{inv:recoverablemutex2}):} \hspace{0.2in} \= \hspace{0.2in} \=  \hspace{0.2in} \= \hspace{0.2in} \= \hspace{0.2in} \=\\
		\end{tabbing}
		\vspace{-0.38in}
		\begin{enumerate}
			\setcounter{enumi}{8}
			\item \label{inv:cond7} \cond{7} $\forall \pi \in \Pi,$ $\pc{\pi} = \refln{dsm:try:5} \limplies (\nodepi \in \caln' \wedge \node_{\pi}.\pred = \nil \wedge \node_{\pi} = \fraghead(\fragment(\node_{\pi}))$ \\
			\hspace*{30mm} $\wedge$ $\nodepi.\cssignal = \absent$ $\wedge$ $\nodepi.\nonnilsignal = \absent$ \\
			\hspace*{30mm} $\wedge$ $(\forall \pi' \in \Pi, (\pi' \neq \pi \wedge \node_{\pi'} \in \fragment(\nodepi)) \limplies \pch{\pi'} \in \{\refln{dsm:try:6}, \refln{dsm:try:16} \})$ \\
			\hspace*{30mm} $\wedge$ $\mypred_{\pi} \in \caln'$ $\wedge$ $\mypredpi = \fragtail(\fragment(\mypredpi))$ \\
			\hspace*{30mm} $\wedge$ $(\mypredpi.\cssignal = \present$ \\
			\hspace*{45mm} $\vee$ $(\exists \pi' \in \Pi, \pi \neq \pi' \wedge \node_{\pi'} = \mypredpi \wedge \pch{\pi'} \in \{ \refln{dsm:try:5}, \refln{dsm:try:6} \} \cup [ \refln{dsm:try:16}, \refln{dsm:exit:2}]))$ \\
			\hspace*{30mm} $\wedge$ $(\mypredpi.\pred = \& \incs \limplies (\exists \pi' \in \Pi, \pi \neq \pi'\wedge \pch{\pi'} = \refln{dsm:exit:1} \wedge \mypredpi = \node_{\pi'}))$ \\
			\hspace*{30mm} $\wedge$ $(\mypredpi.\pred = \& \key \limplies (((\exists \pi' \in \Pi, \pi \neq \pi' \wedge \pch{\pi'} \in [\refln{dsm:exit:2}, \refln{dsm:exit:3} ]$ \\
			\hspace*{45mm} $\wedge$ $\mypredpi = \node_{\pi'})$ $\vee$ $(\forall p' \in \calp, \nodes[p'] \neq \mypredpi)) \wedge |\calq| = 0 )$$)$ \\
			\hspace*{30mm} $\wedge$ $(\mypred_{\pi}.\pred \notin \{ \& \incs, \& \key\}$ $\limplies$ \\
			\hspace*{45mm} $(\exists \pi' \in \Pi, \pi \neq \pi' \wedge \pch{\pi'} \in [ \refln{dsm:try:5}, \refln{dsm:try:6} ] \cup [ \refln{dsm:try:16}, \refln{dsm:try:17} ] \wedge \mypredpi = \node_{\pi'}))$ \\
			\hspace*{30mm} $\wedge$ $(\fraghead(\fragment(\mypredpi)).\pred \in \{ \nil, \& \fail \} \limplies (\exists \pi' \in \Pi, \pi \neq \pi' \wedge \pch{\pi'} = \refln{dsm:try:5}$ \\
			\hspace*{45mm} $\wedge$ $\mypredpi = \fragtail(\fragment(\node_{\pi'})) \wedge \node_{\pi'} = \fraghead(\fragment(\node_{\pi'})) )  )$ \\
			\hspace*{30mm} $\wedge$ $\fragment(\nodepi) \neq \fragment(\mypredpi))$
			
			\item \label{inv:cond8} \cond{8} $\forall \pi \in \Pi, ((\pch{\pi} \in \{ \refln{dsm:try:4}, \refln{dsm:try:5} \} \wedge \node_{\pi}.\pred = \nil) \limplies (\pc{\pi} = \pch{\pi} \vee \pc{\pi} \in \{ \refln{dsm:try:1} \} \cup [ \refln{dsm:try:7}, \refln{dsm:try:9} ]))$\\
			\hspace*{3mm} $\wedge$ $((\pch{\pi} \in \{ \refln{dsm:try:4}, \refln{dsm:try:5} \} \wedge \node_{\pi}.\pred = \& \fail) \limplies$ 
			$\pc{\pi} \in \{\refln{dsm:try:1}\} \cup [\refln{dsm:try:7}, \refln{dsm:try:12}] \cup [\refln{dsm:try:14}, \refln{dsm:try:15}] \cup [\refln{dsm:rep:1}, \refln{dsm:rep:19}] )$

			\item \label{inv:cond52} \cond{52} $\forall \pi \in \Pi, (\pcpi \in [\refln{dsm:try:10}, \refln{dsm:try:12} ] \cup [ \refln{dsm:try:14}, \refln{dsm:try:15}] \cup [ \refln{dsm:rep:1}, \refln{dsm:rep:19} ] \wedge \pcpih \in \{ \refln{dsm:try:4}, \refln{dsm:try:5} \})$
			$\limplies \nodepi.\pred = \& \fail$

			\item \label{inv:cond53} \cond{53} $\forall \pi \in \Pi$, $(\pcpih = \refln{dsm:try:4} \limplies |\fragment(\node_{\pi})| = 1)$ \\
			\hspace*{3mm} $\wedge$ $(\pcpih \in \{ \refln{dsm:try:4}, \refln{dsm:try:5} \} \limplies (\nodepi = \fraghead(\fragment(\nodepi)) \wedge \fragment(\nodepi) \neq \fragment(\tail)))$ \\
			\hspace*{3mm} $\wedge$ $(\pcpih = \refln{dsm:try:5} \limplies (\forall \pi' \in \Pi, (\pi' \neq \pi \wedge \node_{\pi'} \in \fragment(\nodepi)) \limplies \pch{\pi'} \in \{\refln{dsm:try:6}, \refln{dsm:try:16} \}))$

			\item \label{inv:cond9} \cond{9} $\forall \pi \in \Pi,$ $\pc{\pi} \in \{ \refln{dsm:try:6}, \refln{dsm:try:16} \} \limplies (\nodepi \in \caln' \wedge \node_{\pi}.\pred = \mypred_{\pi} \wedge \mypred_{\pi} \in \caln')$

			\item \label{inv:cond10} \cond{10} $\forall \pi \in \Pi, \pch{\pi} \in \{ \refln{dsm:try:6}, \refln{dsm:try:16} \} \limplies (\nodepi \in \caln' \wedge  \nodepi.\pred \in \caln' $ \\
			\hspace*{30mm} $\wedge$ $(\nodepi.\pred.\cssignal = \present$ \\
			\hspace*{45mm} $\vee$ $(\exists \pi' \in \Pi, \pi \neq \pi' \wedge \node_{\pi'} = \nodepi.\pred \wedge \pch{\pi'} \in \{ \refln{dsm:try:5}, \refln{dsm:try:6} \} \cup [ \refln{dsm:try:16}, \refln{dsm:exit:2}]))$ \\
			\hspace*{30mm} $\wedge$ $(\nodepi.\pred.\pred = \& \incs \limplies$\\
			\hspace*{45mm} $(\exists \pi' \in \Pi, \pi \neq \pi'\wedge \pch{\pi'} = \refln{dsm:exit:1} \wedge \nodepi.\pred = \node_{\pi'}))$ \\
			\hspace*{30mm} $\wedge$ $(\nodepi.\pred.\pred = \& \key \limplies (((\exists \pi' \in \Pi, \pi \neq \pi' \wedge \pch{\pi'} \in [\refln{dsm:exit:2}, \refln{dsm:exit:3} ]$ \\
			\hspace*{45mm} $\wedge$ $\nodepi.\pred = \node_{\pi'})$ $\vee$ $(\forall p' \in \calp, \nodes[p'] \neq \nodepi.\pred)) \wedge |\calq| = 0 )$$)$ \\
			\hspace*{30mm} $\wedge$ $(\nodepi.\pred.\pred \notin \{ \& \incs, \& \key\}$ $\limplies$ \\
			\hspace*{45mm} $(\exists \pi' \in \Pi, \pi \neq \pi' \wedge \pch{\pi'} \in [ \refln{dsm:try:5}, \refln{dsm:try:6} ] \cup [ \refln{dsm:try:16}, \refln{dsm:try:17} ] \wedge \nodepi.\pred = \node_{\pi'})))$

			\item \label{inv:cond11} \cond{11} $\forall \pi \in \Pi, (\pch{\pi} \in \{ \refln{dsm:try:6}, \refln{dsm:try:16} \} \wedge \fraghead(\fragment(\node_{\pi})).\pred \in \{ \nil, \& \fail \}) \limplies $ \\
			\hspace*{30mm} $(\exists \pi' \in \Pi, \pi' \neq \pi \wedge \pch{\pi'} = \refln{dsm:try:5}  \wedge \node_{\pi'} = \fraghead(\fragment(\node_{\pi}))$\\
			\hspace*{33mm} $\wedge$ $(\forall \pi'' \in \Pi, (\pi'' \neq \pi' \wedge \node_{\pi''} \in \fragment(\nodepi)) \limplies$ \\
			\hspace*{45mm} $(\pch{\pi''} \in \{ \refln{dsm:try:6}, \refln{dsm:try:16} \} \wedge \node_{\pi''}.\cssignal = \absent)))$

		\end{enumerate}
	}
	\vspace*{-5mm}
	\captionsetup{labelfont=bf}
	\caption{(Continued from Figure~\ref{inv:recoverablemutex2}.) Invariant for the $k$-ported recoverable mutual exclusion algorithm from Figures~\ref{algo:recdsmlock}-\ref{algo:auxdsmlock}. (Continued in Figure~\ref{inv:recoverablemutex4}.)}
	\label{inv:recoverablemutex3}
	\hrule
\end{figure}

\begin{figure}[!ht]
	\hrule
	{\footnotesize     
		\vspace{0.05in}
		\begin{tabbing}
			\hspace{0in} \= {\bf Conditions (Continued from Figure~\ref{inv:recoverablemutex3}):} \hspace{0.2in} \= \hspace{0.2in} \=  \hspace{0.2in} \= \hspace{0.2in} \= \hspace{0.2in} \=\\
		\end{tabbing}
		\vspace{-0.38in}
		\begin{enumerate}
			\setcounter{enumi}{15}

			\item \label{inv:cond15} \cond{15} $\tail \in \caln'$ $\wedge$ $\tail = \fragtail(\fragment(\tail))$ $\wedge$ $(\exists i \in [0, k-1], \tail = \nodes[i] \vee \tail.\pred = \& \token)$ \\
			\hspace*{3mm} $\wedge$ $(\tail.\cssignal = \present$ \\
			\hspace*{30mm} $\vee$ $(\exists \pi' \in \Pi, \pi \neq \pi' \wedge \node_{\pi'} = \tail \wedge \pch{\pi'} \in \{ \refln{dsm:try:5}, \refln{dsm:try:6} \} \cup [ \refln{dsm:try:16}, \refln{dsm:exit:2}]))$ \\
			\hspace*{3mm} $\wedge$ $(\tail.\pred = \& \incs \limplies (\exists \pi' \in \Pi, \pi \neq \pi'\wedge \pch{\pi'} = \refln{dsm:exit:1} \wedge \tail = \node_{\pi'}))$ \\
			\hspace*{3mm} $\wedge$ $(\tail.\pred = \& \key \limplies (((\exists \pi' \in \Pi, \pch{\pi'} \in [\refln{dsm:exit:2}, \refln{dsm:exit:3} ]$ $\wedge$ $\tail = \node_{\pi'})$ $\vee$ $(\forall p' \in \calp, \nodes[p'] \neq \tail))$\\
			\hspace*{30mm} $\wedge$ $|\calq| = 0 )$$)$ \\
			\hspace*{3mm} $\wedge$ $(\tail.\pred \notin \{ \& \incs, \& \key\}$ $\limplies$ $(\exists \pi' \in \Pi, \pch{\pi'} \in [ \refln{dsm:try:5}, \refln{dsm:try:6} ] \cup [ \refln{dsm:try:16}, \refln{dsm:try:17} ] \wedge \tail = \node_{\pi'})))$\\
			\hspace*{3mm} $\wedge$ $(\fraghead(\fragment(\tail)).\pred \in \{ \nil, \& \fail \} \limplies (\exists \pi' \in \Pi, \pch{\pi'} = \refln{dsm:try:5}$ \\
			\hspace*{30mm} $\wedge$ $\tail = \fragtail(\fragment(\node_{\pi'})) \wedge \node_{\pi'} = \fraghead(\fragment(\node_{\pi'}))))$ \\
			\hspace*{3mm} $\wedge$ $((\exists \pi \in \Pi, \pcpih \in [ \refln{dsm:try:5}, \refln{dsm:try:6} ] \cup [\refln{dsm:try:16}, \refln{dsm:exit:3}])$ $\lbicond$ $(\exists \pi' \in \Pi, \tail = \node_{\pi'} \wedge \pch{\pi'} \in  [ \refln{dsm:try:5}, \refln{dsm:try:6} ] \cup [\refln{dsm:try:16}, \refln{dsm:exit:3}]))$

			\item \label{inv:cond14} \cond{14} $\forall \pi \in \Pi, ((\pcpi \in [\refln{dsm:try:15}, \refln{dsm:exit:3}] \cup [\refln{dsm:rep:1}, \refln{dsm:rep:20}] \vee \pcpih \in [\refln{dsm:try:16}, \refln{dsm:exit:3}]) \limplies \nodepi.\nonnilsignal = \present)$ \\
			\hspace*{3mm} $\wedge$ $(\pcpih = \refln{dsm:exit:3} \limplies \nodepi.\cssignal = \present)$

			\item \label{inv:cond16} \cond{16} $|\calq| = 0 \limplies ((\tail.\pred = \& \key$ $\vee$ $\exists \pi \in \Pi, (\pch{\pi} = \refln{dsm:try:5} \wedge \tail = \fragtail(\fragment(\node_{\pi}))$ \\ 
			\hspace*{60mm} $\wedge$ $\nodepi = \fraghead(\fragment(\nodepi))))$ \\
			\hspace*{30mm} $\wedge$ $(\forall \pi' \in \Pi, \pc{\pi'} \in [\refln{dsm:try:2}, \refln{dsm:try:6}] \cup \{\refln{dsm:try:16}\} \cup [\refln{dsm:exit:2}, \refln{dsm:exit:3}]))$	
			
			\item \label{inv:cond17} \cond{17} If $|\calq| = l > 0$, then there is an order $\pi_1, \pi_2, \dots, \pi_l$ of distinct processes in $Q$ such that:
			\begin{enumerate}
				\item \label{inv:cond17a} \cond{17a} $\pch{\pi_1} \in \{ \refln{dsm:try:6} \} \cup [ \refln{dsm:try:16}, \refln{dsm:exit:1} ]$ 
				
				\item \label{inv:cond17b} \cond{17b} $(\exists \pi \in \Pi, \pch{\pi} \in [ \refln{dsm:exit:2}, \refln{dsm:exit:3} ]$ $\wedge$ $\node_{\pi_1}.\pred = \node_{\pi})$ \\
				\hspace*{15mm} $\vee$ $(\node_{\pi_1}.\pred \in \caln' - \{ \node_{\pi'} | \pi' \in \Pi \wedge \node_{\pi'} \neq \nil \})$
				
				\item \label{inv:cond17g} \cond{17g} $\pch{\pi_1} \in \{\refln{dsm:try:6}, \refln{dsm:try:16} \} \limplies ( \node_{\pi_1}.\pred.\cssignal = \present$ $\vee$ \\
				\hspace*{30mm} $(\exists \pi' \in \Pi, \pi_1 \neq \pi' \wedge \node_{\pi'} = \node_{\pi_1}.\pred \wedge \pch{\pi'} = \refln{dsm:exit:2}))$ 
				
				\item \label{inv:cond17c} \cond{17c} $\forall i \in [2, l]$: 
				\begin{enumerate}
					\item \label{inv:cond17ci} $\pch{\pi_i} \in \{ \refln{dsm:try:6}, \refln{dsm:try:16} \}$
					
					\item \label{inv:cond17cii} $\node_{\pi_i}.\pred = \node_{\pi_{i-1}}$ \\
					{\bf Observation: } $\node_{\pi_i} \in \fragment(\node_{\pi_1})$.
				\end{enumerate}
				
				\item \label{inv:cond17d} \cond{17d} $\node_{\pi_l} = \fragtail(\fragment(\node_{\pi_1}))$
				
				\item \label{inv:cond17e} \cond{17e} $\node_{\pi_1} = \fraghead(\fragment(\node_{\pi_1})) \vee \node_{\pi_1}.\pred.\pred = \& \token$
				
				
				\item \label{inv:cond17h} \cond{17h} $\forall \pi \in \Pi, \pi \neq \pi_1 \limplies \pch{\pi} \in [\refln{dsm:try:2}, \refln{dsm:try:6}] \cup \{ \refln{dsm:try:16} \} \cup [\refln{dsm:exit:2}, \refln{dsm:exit:3}]$
				
				\item \label{inv:cond17i} \cond{17i} $\forall \pi \in \Pi, (\pi \neq \pi_1 \wedge \nodepi \neq \nil \wedge \nodepi.\pred \in \caln') \limplies (\nodepi.\pred.\cssignal = \absent)$
			\end{enumerate}
			{\bf Observation: } $\forall \pi \in \Pi, \pi \neq \pi_1 \limplies \pch{\pi} \neq \refln{dsm:exit:1}$. \\
			{\it Proof: } If $\pi \in \calq$, then by Condition~\ref{inv:cond17ci}, $\pch{\pi} \neq \refln{dsm:exit:1}$.
			If $\pi \notin \calq$, then, $\pch{\pi} \neq \refln{dsm:exit:1}$, by definition of $\calq$. $\qed$

		\end{enumerate}
	}
	\vspace*{-5mm}
	\captionsetup{labelfont=bf}
	\caption{(Continued from Figure~\ref{inv:recoverablemutex3}.) Invariant for the $k$-ported recoverable mutual exclusion algorithm from Figures~\ref{algo:recdsmlock}-\ref{algo:auxdsmlock}.}
	\label{inv:recoverablemutex4}
	\hrule
\end{figure}

\begin{lemma}[{\bf Mutual Exclusion}]\label{lem:mutex}
	At most one process is in the CS in every configuration of every run.
\end{lemma}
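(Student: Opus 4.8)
The plan is to reduce Mutual Exclusion to a direct consequence of the structural invariant in Figures~\ref{inv:recoverablemutex}--\ref{inv:recoverablemutex4}, which we may assume holds in every reachable configuration (its inductive proof is deferred to Appendix~\ref{sec:invproof}). Recall the convention fixed just before the lemma: a process $\pi$ is in the CS precisely when $\pcpih = \refln{dsm:exit:1}$. So the entire task is to show that in any fixed configuration at most one process $\pi$ satisfies $\pcpih = \refln{dsm:exit:1}$. My strategy is to route everything through the set $\calq$ of queued processes, since the constraints on who may sit at $\refln{dsm:exit:1}$ are exactly what Condition~\ref{inv:cond17} controls.

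The key observation I would record first is that every process in the CS automatically belongs to $\calq$: by the very definition of $\calq$ (Figure~\ref{inv:recoverablemutex2}), the disjunct $\pch{\pi} = \refln{dsm:exit:1}$ places any such process into $\calq$. Hence the number of processes in the CS is bounded by the number of elements of $\calq$ whose annotated program counter equals $\refln{dsm:exit:1}$, and I only need to count those.

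I would then split into two cases on $|\calq|$. If $|\calq| = 0$, then no process can have $\pcpih = \refln{dsm:exit:1}$ (such a process would lie in $\calq$), so no process is in the CS and the claim holds vacuously. If $|\calq| = l > 0$, I invoke Condition~\ref{inv:cond17}, which supplies a distinct ordering $\pi_1, \pi_2, \dots, \pi_l$ of the processes in $\calq$. Condition~\ref{inv:cond17a} allows $\pch{\pi_1} \in \{\refln{dsm:try:6}\} \cup [\refln{dsm:try:16}, \refln{dsm:exit:1}]$, so $\pi_1$ is the \emph{only} member of $\calq$ permitted to sit at $\refln{dsm:exit:1}$; for every $i \ge 2$, Condition~\ref{inv:cond17ci} forces $\pch{\pi_i} \in \{\refln{dsm:try:6}, \refln{dsm:try:16}\}$, which excludes $\refln{dsm:exit:1}$. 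Combining these with the first observation gives: if $\pi \neq \pi_1$ then $\pcpih \neq \refln{dsm:exit:1}$ (either $\pi \notin \calq$, so it is not at $\refln{dsm:exit:1}$ by definition of $\calq$, or $\pi = \pi_i$ for some $i \ge 2$ and Condition~\ref{inv:cond17ci} applies). This is exactly the Observation already stated beneath Condition~\ref{inv:cond17}. Therefore at most $\pi_1$ is in the CS, and Mutual Exclusion follows.

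I expect this derivation itself to be short and essentially a transcription of the Observation following Condition~\ref{inv:cond17}; the genuine difficulty is not in this lemma but in establishing that Condition~\ref{inv:cond17} (together with the companion conditions it rests on, such as the fragment definition and Conditions~\ref{inv:cond4},~\ref{inv:cond7},~\ref{inv:cond10},~\ref{inv:cond15} governing predecessor links and $\tail$) is genuinely inductive across all the Try, Exit, and repair steps, including crash steps. The main obstacle, which I would relegate to the inductive invariant proof, is arguing that no step can create a second process at $\refln{dsm:exit:1}$ --- in particular that a process at $\refln{dsm:try:16}$ reaches $\refln{dsm:exit:1}$ (via Line~\refln{dsm:try:17}) only after its predecessor's $\cssignal$ has been set, and that the repair procedure in the CS of $\rlock$ never rewires two distinct live fragments so as to let two processes simultaneously become the unique $\pi_1$. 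Given the invariant, though, the lemma is immediate.
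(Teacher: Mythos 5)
Your proposal is correct and matches the paper's own argument: both reduce the claim to the fact that any process in the CS lies in $\calq$ by definition, and then use Condition~\ref{inv:cond17} (specifically Condition~\ref{inv:cond17ci}) to conclude that only $\pi_1$ in the ordering can have $\pcpih = \refln{dsm:exit:1}$. The paper phrases this as a two-process contradiction while you phrase it as a direct count, but the substance is identical, and you correctly identify that the real work lives in the inductive proof of the invariant rather than in this lemma.
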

\begin{proof}
	Suppose there are two processes $\pi_i$ and $\pi_j$ that are both in CS in a configuration $C$.
	Therefore, $\pch{\pi_i} = \refln{dsm:exit:1}$ and $\pch{\pi_j} = \refln{dsm:exit:1}$ in $C$.
	By definition of $\calq$, $\pi_i \in \calq$ and $\pi_j \in \calq$.
	Therefore, by Condition~\ref{inv:cond17} of the invariant, one of the two processes is not $\pi_1$ in the ordering of processes in $\calq$.
	Without loss of generality, let $\pi_i = \pi_1$ and $\pi_j$ be a process coming later in the ordering.
	Therefore, by Condition~\ref{inv:cond17ci}, $\pch{\pi_j} \in \{ \refln{dsm:try:6}, \refln{dsm:try:16} \}$, a contradiction.
\end{proof}

\begin{lemma}[{\bf Starvation Freedom}]\label{lem:starv}
If the total number of crashes in the run is finite
and a process is in the Try section and does not subsequently crash, it later enters the CS.
\end{lemma}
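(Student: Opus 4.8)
The plan is to use the finiteness of crashes to isolate a crash-free suffix of the run, and then combine the chain structure enforced by the invariant (Condition~\ref{inv:cond17}) with the liveness guarantees of the Signal object (Theorem~\ref{thm:signal}) and the starvation-freedom of $\rlock$. Since the run has finitely many crashes, fix a configuration $C^*$ after which no process ever crashes, and let $\pi$ be a process in Try at or after $C^*$ that never crashes thereafter. It suffices to show that $\pi$ advances through the Try section; the only unbounded blocking points are the $\rlock$ acquisition at Line~\refln{dsm:try:15}, the wait inside $\rlock$ at Line~\refln{dsm:rep:6}, and the predecessor wait at Line~\refln{dsm:try:16}. (If $\pi$ finds $\mypredpi = \& \key$ at Line~\refln{dsm:try:12}, it performs the bounded cleanup of Line~\refln{dsm:try:13} and restarts at Line~\refln{dsm:try:1} with a fresh node; this is a finite detour that feeds back into the main argument.)

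\emph{Repair terminates (the \gh\ deadlock is avoided).} If $\pi$ reaches Line~\refln{dsm:try:15}, then because $\rlock$ is starvation-free and $\pi$ never crashes, $\pi$ eventually enters the critical section of $\rlock$. Inside it, $\pi$ scans the $k$ entries of $\nodes$ and performs bounded local work; the only potentially unbounded wait is at Line~\refln{dsm:rep:6}, where $\pi$ waits for $\curpi.\nonnilsignal$ to become $\present$. The key point---precisely the defect repaired relative to $\gh$ (Scenario~1 in Appendix~\ref{app:issues})---is that $\rlock$ guarantees mutual exclusion, so $\pi$ is the \emph{only} process inside $\rlock$'s critical section; hence the owner of $\curpi$ is never itself blocked in a repair and, being crash-free after $C^*$, eventually executes Line~\refln{dsm:try:6} or Line~\refln{dsm:try:14} and sets $\curpi.\nonnilsignal$ to $\present$ (Condition~\ref{inv:cond57} then guarantees that $\curpi$ has a non-$\nil$ predecessor). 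By Theorem~\ref{thm:signal}(iv) the wait completes, so $\pi$ finishes $\rlock$'s critical section, sets a proper $\mynodepi.\pred$ at Line~\refln{dsm:rep:20}, and returns to Line~\refln{dsm:try:16}.

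\emph{The chain drains.} Once $\pi$ is poised at Line~\refln{dsm:try:16} with $\pcpih = \refln{dsm:try:16}$, Condition~\ref{inv:cond10} says $\nodepi.\pred$ either already has $\cssignal = \present$ or is owned by an active process, while Condition~\ref{inv:cond3} bounds by $k$ the number of $\pred$-hops from $\nodepi$ before reaching a node whose $\pred \in \{ \nil, \& \fail, \& \incs, \& \key \}$, so $\pi$'s chain is finite. Condition~\ref{inv:cond17} then organizes the queued processes as $\pi_1, \dots, \pi_l$ with $\node_{\pi_i}.\pred = \node_{\pi_{i-1}}$ and with a head $\pi_1$ that is in the CS, or waiting on a predecessor that is in Exit. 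Because crashes are finite, only finitely many processes are recovering after $C^*$, and each completes its repair as above; hence there is a later configuration $C^{**}$ after which no repair occurs and the only queue activity is fresh processes appending at $\tail$ (behind $\pi$) and head processes draining. After $C^{**}$ the head $\pi_1$ makes progress: if it is in the CS it completes the CS and, in Exit, sets $\node_{\pi_1}.\cssignal$ to $\present$ at Line~\refln{dsm:exit:2}; by Theorem~\ref{thm:signal}(iv) this completes the Line~\refln{dsm:try:16} wait of its successor, which then executes Line~\refln{dsm:try:17} and enters the CS. Each such hand-off strictly decreases the number of processes ahead of $\pi$ on its chain, and nothing is inserted ahead of $\pi$ after $C^{**}$; since the chain length is at most $k$, after finitely many hand-offs $\pi$ itself is signaled and enters the CS.

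The main obstacle is establishing termination of the wait at Line~\refln{dsm:rep:6}: one must rule out the cyclic ``each repairer waits on another's $\nonnilsignal$'' pattern that deadlocks $\gh$. The argument rests squarely on $\rlock$'s mutual exclusion---at most one process waits at Line~\refln{dsm:rep:6} at any time---so the awaited owner is never itself stuck in a repair and is free to advance and set its signal. A secondary subtlety, handled by the finiteness of crashes, is ensuring that repairs by recovering processes cannot keep inserting nodes ahead of $\pi$ forever; once the finitely many recoveries complete, $\pi$'s position on its chain is nonincreasing and strictly decreases with every CS hand-off.
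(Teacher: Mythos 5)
Your overall decomposition matches the paper's: the only unbounded waits are at Lines~\refln{dsm:rep:6} and \refln{dsm:try:16}, the first is discharged via the Signal object's liveness and the second via a chain-draining argument on Condition~\ref{inv:cond17}. But there are two problems. The first is a misattributed mechanism: termination of the wait at Line~\refln{dsm:rep:6} does not rest on $\rlock$'s mutual exclusion (\gh\ also protects its repair with a mutually exclusive lock, and its Scenario~1 deadlock occurs anyway). Mutual exclusion alone does not rule out the circular wait in which $\pi$ holds $\rlock$ and waits on $\curpi.\nonnilsignal$ while the owner of $\curpi$ is stalled at Line~\refln{dsm:try:15} trying to acquire $\rlock$ from $\pi$. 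What actually breaks the cycle is the code ordering: every process sets its $\nonnilsignal$ at Line~\refln{dsm:try:6} or Line~\refln{dsm:try:14} \emph{before} it acquires $\rlock$ or waits at Line~\refln{dsm:try:16} (Condition~\ref{inv:cond14}), so by Condition~\ref{inv:cond13}, whenever the awaited signal is not yet $\present$ the owner's program counter is confined to non-blocking lines that lead to Line~\refln{dsm:try:6} or \refln{dsm:try:14}. Your conclusion is correct, but the reason you give is the wrong one and, as stated, leaves the owner-blocked-at-Line~\refln{dsm:try:15} case uncovered.

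The second problem is a genuine gap. Your draining argument applies only once $\pi$ is a \emph{queued} process, i.e., once $\fraghead(\fragment(\nodepi)).\pred \in \{ \& \incs, \& \token \}$ so that $\pi$ appears in the ordering of Condition~\ref{inv:cond17}. If $\pi$ waits at Line~\refln{dsm:try:16} while the head of its fragment still has $\pred \in \{ \nil, \& \fail \}$ --- its chain dangles, ending at a node whose owner $\pi'$ has $\pch{\pi'} = \refln{dsm:try:5}$ or is mid-repair --- then $\pi \notin \calq$ and no hand-off along $\calq$'s chain ever reaches $\pi$. Your sentence ``once the finitely many recoveries complete \dots\ the chain drains'' assumes exactly what must be proved: that after all repairs and all pending Line~\refln{dsm:try:5} writes complete, $\pi$'s fragment has become connected to the fragment whose head owns or has relinquished the CS (or to $\spclnode$). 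Completing one repair merges $\pi$'s fragment with \emph{another} fragment whose head may again be dangling, so a single ``repairs finish'' step does not suffice; the paper's Case~2.2 handles this by repeatedly invoking Condition~\ref{inv:cond11} on the current fragment head and inducting on the finite, monotonically merging set of fragments until the head's $\pred$ lands in $\{ \& \incs, \& \token \}$, at which point your Case~2.1-style argument takes over. Without that induction the proof does not go through.
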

\begin{proof}
As noted in the statement of the claim, we assume that the total number of crashes in the run is finite.

A process $\pi$ using a port $p$ would not enter the CS during its passage if $\pcpi$ is forever stuck at a certain line in the algorithm before entering the CS.
Hence, in order to prove starvation freedom we have to argue that $\pcpi$ advances to the next line for every step in the algorithm. 
An inspection of the Try section reveals that $\pi$ has procedure calls at Lines~\refln{dsm:try:6}, \refln{dsm:try:14}, and \refln{dsm:try:16},
and inside the CS of $\rlock$ at Line~\refln{dsm:rep:6}. 
Since we require the $\rlock$ to be a recoverable starvation-free mutual exclusion lock, any process that executes Line~\refln{dsm:try:15} 
is guaranteed to eventually reach Line~\refln{dsm:rep:1} of the Critical section of $\rlock$ (and hence reaches Line~\refln{dsm:rep:6}).
Particularly, Golab and Ramaraju's read-write based recoverable extension of Yang and Anderson's lock (see Section 3.2 in \cite{Golab:rmutex}) 
is one such lock that also guarantees a wait-free exit.
Of these procedure calls, only the ones at Lines~\refln{dsm:try:16} and \refln{dsm:rep:6} concern us in the proof, since their implementation involves a wait loop.
Therefore, if all the calls to $\applywait$ are shown to complete, $\pi$ is guaranteed to enter the CS eventually.

We comment on a few other steps in the algorithm as follows before diving into the proof.
The \forcode loop at Line~\refln{dsm:rep:3} executes for $k$ iterations, therefore, Lines~\refln{dsm:rep:3}-\refln{dsm:rep:9} execute a bounded number of times.
Computing the set of maximal paths at Line~\refln{dsm:rep:10} is a local computation step and has a bounded time algorithm, therefore, the step is executed a bounded number of times.
The set $\pathspi$ is a finite set and finding the path $\mypathpi$ at Line~\refln{dsm:rep:11} is a local computation step which has a bounded time algorithm, 
therefore, the step is executed a bounded number of times.
Similarly, Line~\refln{dsm:rep:12} is a local computation step which has a bounded time algorithm, 
therefore, the step is executed a bounded number of times.
As observed above, $\pathspi$ is a finite set, therefore the loop at Line~\refln{dsm:rep:13} iterates a finite number of times.
Hence, Lines~\refln{dsm:rep:13}-\refln{dsm:rep:16} execute a bounded number of times.
Note, since our algorithm has a wait-free exit (see Lemma~\ref{lem:wfexit}), $\pi$ goes back to the Remainder section in a bounded number of normal steps once it finishes the CS.
From the above it follows that $\pi$ executes wait loops inside the calls for $\applywait$ only at Lines~\refln{dsm:try:16} and \refln{dsm:rep:6}.
Therefore, we consider these two cases where $\pi$ could potentially loop as follows and ensure that it eventually gets past these lines. 

\noindent{\bf \underline{Case 1}:} $\pi$ completes the step at Line~\refln{dsm:rep:6}. \\
When $\pcpi = \refln{dsm:rep:6}$, by Condition~\ref{inv:cond13}, $\curpi.\nonnilsignal = \present$ 
or $(\exists \pi' \in \Pi, \pi \neq \pi' \wedge \curpi = \node_{\pi'} \wedge \pch{\pi'} \in [\refln{dsm:try:4}, \refln{dsm:try:6}])$.
Suppose $\curpi.\nonnilsignal = \present$.
$\curpi.\nonnilsignal$ is an instance of the \signalobj\ object from Section~\ref{sec:sigimpl},
it follows that the call to $\curpi.\nonnilsignal.\wait()$ on Line~\refln{dsm:rep:6} returns in a wait-free manner.
Therefore, $\pi$ completes the step at Line~\refln{dsm:rep:6}.

Assume $\curpi.\nonnilsignal \neq \present$ 
and $(\exists \pi' \in \Pi, \pi \neq \pi' \wedge \curpi = \node_{\pi'} \wedge \pch{\pi'} \in [\refln{dsm:try:4}, \refln{dsm:try:6}])$.
Suppose $\pc{\pi'} = \pch{\pi'}$ and there are no crash steps by $\pi'$ before completing Line~\refln{dsm:try:6}.
In that case $\pi'$ executes $\curpi.\nonnilsignal.\applysignal()$ to completion at Line~\refln{dsm:try:6} and sets $\curpi.\nonnilsignal = \present$ in a wait-free manner.
It follows that the call to $\curpi.\nonnilsignal.\wait()$ on Line~\refln{dsm:rep:6} returns subsequently in a wait-free manner.
Therefore, assume that $\pc{\pi'} \neq \pch{\pi'}$.
By Conditions~\ref{inv:cond47}, \ref{inv:cond14} and the fact that $\curpi.\nonnilsignal \neq \present$, $\pc{\pi'} \in \{\refln{dsm:try:1} \} \cup [ \refln{dsm:try:7}, \refln{dsm:try:12} ] \cup \{ \refln{dsm:try:14} \}$.
Therefore, $\pi'$ eventually executes $\curpi.\nonnilsignal.\applysignal()$ to completion at Line~\refln{dsm:try:14} and sets $\curpi.\nonnilsignal = \present$ in a wait-free manner.
It follows that the call to $\curpi.\nonnilsignal.\wait()$ on Line~\refln{dsm:rep:6} returns subsequently in a wait-free manner.
Note, in case of a crash by $\pi'$ before executing $\curpi.\nonnilsignal.\applysignal()$ to completion, $\pi'$ starts at Line~\refln{dsm:try:1} and reaches Line~\refln{dsm:try:14}.
This is because $\pch{\pi'} \in [\refln{dsm:try:4}, \refln{dsm:try:6}]$ implies $\nodes[\porth{\pi'}] \neq \nil$ and $\node_{\pi'}.\pred \notin \{ \& \incs, \& \key \}$.
Therefore, the \ifcode conditions at Lines~\refln{dsm:try:1}, \refln{dsm:try:11}, and \refln{dsm:try:12} are not met and $\pi'$ reaches Line~\refln{dsm:try:14}.
From the above it follows that $\pi$ completes the step at Line~\refln{dsm:rep:6}.
$\blacksquare$

\noindent{\bf \underline{Case 2}:} $\pi$ completes the step at Line~\refln{dsm:try:16}. \\
In order to argue that $\pi$ completes the step at Line~\refln{dsm:try:16}, we consider two cases.
For the first case we have $\fraghead(\fragment(\nodepi)).\pred \in \{ \& \incs, \& \key \}$ and 
the second occurs when $\fraghead(\fragment(\nodepi)).\pred \in \{ \nil, \& \fail \}$.
The first case occurs when $\pi \in \calq$ and the second occurs when $\pi \notin \calq$, both because of the value of $\fraghead(\fragment(\nodepi)).\pred$.
We argue both the cases as follows.
\begin{itemize}[leftmargin=28mm]
	\item[{\bf \underline{Case 2.1}}:] $\fraghead(\fragment(\nodepi)).\pred \in \{ \& \incs, \& \key \}$.\\
	By definition of $\calq$, $\pi \in \calq$.
	By Condition~\ref{inv:cond17}, there is an ordering $\pi_1, \pi_2, \dots, \pi_l$ of the processes in $\calq$,
	and $\pi$ appears somewhere in that ordering.
	Assume for a contradiction that there is a run $R$ in which $\pi$ never completes the step at Line~\refln{dsm:try:16}.
	Therefore, in $R$ there are some processes (including $\pi$) in $\calq$ that initiate the passage but never enter the CS.
	Since the processes never enter the CS, after a certain configuration they are forever stuck at Line~\refln{dsm:try:16}.
	Let $\pi_j \in \calq$ be the process in $R$ that forever loops at Line~\refln{dsm:try:16}, such that it has the least index $j$ according to the ordering defined by Condition~\ref{inv:cond17}.
	Let $C$ be the earliest configuration in $R$ such that all the processes appearing before $\pi_j$ in the ordering defined by Condition~\ref{inv:cond17}
	have gone back to the Remainder section after completing the CS and $\pi_j$ is still stuck at Line~\refln{dsm:try:16}.
	Since those processes are no more queued processes, $\pi_j$ appears first in the ordering, i.e., $\pi_j = \pi_1$.
	Since $\pcpi = \refln{dsm:try:16}$, by Condition~\ref{inv:cond17g}, $\node_{\pi_j}.\pred.\cssignal = \present$ 
	or $(\exists \pi' \in \Pi, \pi_j \neq \pi' \wedge \node_{\pi'} = \node_{\pi_j}.\pred \wedge \pch{\pi'} = \refln{dsm:exit:2})$.
	If $\node_{\pi_j}.\pred.\cssignal = \present$, then $\pi_j$ returns from the call to $\node_{\pi_j}.\pred.\cssignal.\applywait()$ at Line~\refln{dsm:try:16} completing the step.
	Otherwise, suppose $\node_{\pi_j}.\pred.\cssignal \neq \present$ 
	and $(\exists \pi' \in \Pi, \pi_j \neq \pi' \wedge \node_{\pi'} = \node_{\pi_j}.\pred \wedge \pch{\pi'} = \refln{dsm:exit:2})$.
	If $\pc{\pi'} = \pch{\pi'}$, then $\pi'$ eventually executes $\node_{\pi'}.\cssignal.\applysignal()$ to completion at Line~\refln{dsm:exit:2} and sets $\node_{\pi'}.\cssignal = \present$ in a wait-free manner.
	It follows that the call to $\node_{\pi_j}.\pred.\cssignal.\applywait()$ at Line~\refln{dsm:try:16} returns subsequently in a wait-free manner.
	If $\pc{\pi'} \neq \pch{\pi'}$, then, by Condition~\ref{inv:cond47}, $\pcpi \in \{ \refln{dsm:try:1} \} \cup [\refln{dsm:try:7}, \refln{dsm:try:13}]$.
	By Condition~\ref{inv:cond1}, $\nodes[\porth{\pi'}] \neq \nil$ and $\node_{\pi'}.\pred = \& \key$.
	Therefore, the \ifcode conditions at Lines~\refln{dsm:try:1} and \refln{dsm:try:11} are not met, 
	but the one at Line~\refln{dsm:try:12} is met and $\pi'$ executes Line~\refln{dsm:exit:2} as written in Line~\refln{dsm:try:13}.
	Therefore, $\pi'$ eventually executes $\node_{\pi'}.\cssignal.\applysignal()$ to completion at Line~\refln{dsm:exit:2} and sets $\node_{\pi'}.\cssignal = \present$ in a wait-free manner.
	It follows that the call to $\node_{\pi_j}.\pred.\cssignal.\applywait()$ at Line~\refln{dsm:try:16} returns subsequently in a wait-free manner.
	Thus $\pi_j$ eventually enters the CS by completing the remaining Try section at Line~\refln{dsm:try:17}.
	This contradicts the assumption that $\pi_j$ is a process in $\calq$ with the least index $j$ defined by the ordering by Condition~\ref{inv:cond17}.
	Therefore, we conclude that $\pi$ itself completes the step at Line~\refln{dsm:try:16} and eventually enters the CS.
	
	\item[{\bf \underline{Case 2.2}}:] $\fraghead(\fragment(\nodepi)).\pred \in \{ \nil, \& \fail \}$.\\
	Let $C$ be a configuration when $\pcpih = \refln{dsm:try:16}$ and $\fraghead(\fragment(\nodepi)).\pred \in \{ \nil, \& \fail \}$.
	By Condition~\ref{inv:cond11}, $\exists \pi_{i_1} \in \Pi, \pi_{i_1} \neq \pi \wedge \pch{\pi_{i_1}} = \refln{dsm:try:5}  \wedge \node_{\pi_{i_1}} = \fraghead(\fragment(\node_{\pi}))$.
	Suppose $\node_{\pi_{i_1}}.\pred = \nil$.
	By Condition~\ref{inv:cond8}, $\pc{\pi_{i_1}} = \pch{\pi_{i_1}}$ (or $\pc{\pi_{i_1}} \in \{ \refln{dsm:try:1} \} \cup [\refln{dsm:try:7}, \refln{dsm:try:9}])$, we cover this case later).
	By Condition~\ref{inv:cond9}, $\mypred_{\pi_{i_1}} \in \caln'$.
	If $\pi_{i_1}$ takes normal steps at Line~\refln{dsm:try:5}, then it sets $\node_{\pi_{i_1}}.\pred = \mypred_{\pi_{i_1}}$ and sets $\pch{\pi_{i_1}} = \refln{dsm:try:6}$.
	We hold the argument for the current case when $\node_{\pi_{i_1}}.\pred = \nil$ briefly and argue the case when $\node_{\pi_{i_1}}.\pred = \& \fail$ as follows 
	and then join the two arguments (i.e., $\node_{\pi_{i_1}}.\pred \in \{ \nil, \& \fail \}$) later.
	So now assume that $\node_{\pi_{i_1}}.\pred = \& \fail$ 
	(this covers the case when $\node_{\pi_{i_1}}.\pred = \nil$ and $\pc{\pi_{i_1}} \in \{ \refln{dsm:try:1} \} \cup [\refln{dsm:try:7}, \refln{dsm:try:9}])$, since $\node_{\pi_{i_1}}.\pred = \& \fail$ at Line~\refln{dsm:try:9} eventually).
	By Condition~\ref{inv:cond8}, $\pc{\pi_{i_1}} \in \{\refln{dsm:try:1}\} \cup [\refln{dsm:try:7}, \refln{dsm:try:12}] \cup [\refln{dsm:try:14}, \refln{dsm:try:15}] \cup [\refln{dsm:rep:1}, \refln{dsm:rep:19}]$.
	For every value of $\pc{\pi_{i_1}}$, it follows that $\pi_{i_1}$ eventually executes Line~\refln{dsm:rep:20}
	(note, by Case~1 above, $\pi_{i_1}$ completes all steps at Line~\refln{dsm:rep:6}).
	Once $\pi_{i_1}$ executes Line~\refln{dsm:rep:20}, it sets $\pch{\pi_{i_1}} = \refln{dsm:try:16}$.
	Hence, in both cases (i.e., $\node_{\pi_{i_1}}.\pred \in \{ \nil, \& \fail \}$) $\pch{\pi_{i_1}} = \refln{dsm:try:16}$ eventually.
	Let $C'$ be the earliest configuration after $C$ when $\pch{\pi_{i_1}} = \refln{dsm:try:16}$,
	by Condition~\ref{inv:cond1}, $\node_{\pi_{i_1}}.\pred \in \caln'$ in $C'$.
	If $\fraghead(\fragment(\nodepi)).\pred \in \{ \& \incs, \& \key \}$ in $C'$, then by the same argument as in Case~2.1 we are done.
	Otherwise, again by Condition~\ref{inv:cond11}, 
	$\exists \pi_{i_2} \in \Pi, \pi_{i_2} \neq \pi \wedge \pch{\pi_{i_2}} = \refln{dsm:try:5} \wedge \node_{\pi_{i_2}} = \fraghead(\fragment(\node_{\pi}))$.
	
	We now show as follows that $\fraghead(\fragment(\nodepi)).\pred \in \{ \& \incs, \& \key \}$ eventually.
	Assume to the contrary that $\fraghead(\fragment(\nodepi)).\pred \notin \{ \& \incs, \& \key \}$ forever.
	We know there are $k$ active processes, and by Conditions~\ref{inv:cond54}, \ref{inv:cond3}, and \ref{inv:cond4}, there are a finite number of distinct fragments.
	Applying the above argument about $\pi$ and $\pi_{i_1}$ inductively on these fragments, the fragments increase in size monotonically 
	and we get to a configuration such that each process satisfies one of three cases as follows:
	(i) the process has its node appear in $\fragment(\nodepi)$, 
	(ii) there is a process $\pi_{i_3}$ such that $\fraghead(\fragment(\node_{\pi_{i_3}})).\pred \in \{ \& \incs, \& \key \}$ and the process has its node appear in $\fragment(\node_{\pi_{i_3}})$, or
	(iii) the process is in the Remainder section after completing the super-passage.
	Let $C''$ be the earliest such configuration.
	In $C''$ we have $\exists \pi_{i_4} \in \Pi, \pi_{i_4} \neq \pi \wedge \pch{\pi_{i_4}} = \refln{dsm:try:5} \wedge \node_{\pi_{i_4}} = \fraghead(\fragment(\node_{\pi}))$.
	We can now apply the above argument about $\pi$ and $\pi_{i_1}$ on $\pi$ and $\pi_{i_4}$. 
	We continue to do so until we get to a configuration where each process satisfies one of the following two cases:
	(i) the process has its node appear in $\fragment(\nodepi)$, 
	(ii) the process forever remains in the Remainder section after completing the super-passage.
	Let $C'''$ be earliest such configuration where we have 
	$\exists \pi_{i_5} \in \Pi, \pi_{i_5} \neq \pi \wedge \pch{\pi_{i_5}} = \refln{dsm:try:5}  \wedge \node_{\pi_{i_5}} = \fraghead(\fragment(\node_{\pi}))$.
	Note, we have $\pch{\pi_{i_5}} = \refln{dsm:try:5}$ in $C'''$, 
	and every other process $\pi'$ has either $\pch{\pi'} = \refln{dsm:try:2}$ (for being in the Remainder section)
	or $\pch{\pi'} \in \{\refln{dsm:try:6}, \refln{dsm:try:16} \}$ (for being in $\fragment(\nodepi) = \fragment(\node_{\pi_{i_5}})$) for all configurations after $C'''$.
	We can apply the above argument about $\pi$ and $\pi_{i_1}$ on $\pi$ and $\pi_{i_5}$ so that 
	$\exists \pi_{i_6} \in \Pi, \pi_{i_6} \neq \pi \wedge \pch{\pi_{i_6}} = \refln{dsm:try:5} \wedge \node_{\pi_{i_6}} = \fraghead(\fragment(\node_{\pi}))$. 
	This contradicts the above conclusion that only $\pch{\pi_{i_5}} = \refln{dsm:try:5}$ in all configurations after $C'''$.
	Therefore we conclude that our assumption that $\fraghead(\fragment(\nodepi)).\pred \notin \{ \& \incs, \& \key \}$ forever is incorrect and 
	$\fraghead(\fragment(\nodepi)).\pred \in \{ \& \incs, \& \key \}$ eventually.
	Hence, by the same argument as in Case~2.1 we are done.
\end{itemize}
From the above it follows that $\pi$ completes the step at Line~\refln{dsm:try:16}.
$\blacksquare$

From the above it follows that $\pi$ completes the steps at Lines~\refln{dsm:try:16} and \refln{dsm:rep:6} whenever it encounters them during the passage.
Therefore, the algorithm satisfies starvation freedom.
\end{proof}

\begin{lemma}[{\bf Wait-free Exit}] \label{lem:wfexit}
There is a bound $b$ such that, if a process $\pi$ is in the Exit section,
and executes steps without crashing, $\pi$ completes the Exit section in at most $b$ of its steps.
\end{lemma}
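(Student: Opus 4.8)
The plan is to observe that the Exit section (Lines~\refln{dsm:exit:1}--\refln{dsm:exit:3}) is straight-line code containing no wait loops, so that once $\pi$ enters it and takes steps without crashing it simply executes the three lines in order and returns to the Remainder section. The only work is therefore to verify that each of these three lines completes in a bounded number of $\pi$'s own steps, and then to collect these bounds into a single constant $b$.

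First I would dispatch the two trivial lines: Line~\refln{dsm:exit:1} writes $\& \key$ into $\mynodepi.\pred$ and Line~\refln{dsm:exit:3} writes $\nil$ into $\nodes[p]$; each is a single shared-memory write that $\pi$ completes in one step regardless of the behaviour of other processes. The only line that invokes a subroutine is Line~\refln{dsm:exit:2}, which calls $\mynodepi.\cssignal.\applysignal()$. Here I would appeal directly to Theorem~\ref{thm:signal}(iii), which guarantees that any execution of $\applysignal()$ completes in a bounded number of the caller's own steps. Taking $b$ to be $3$ plus the per-call bound supplied by Theorem~\ref{thm:signal}(iii) then yields the desired uniform bound, and since a process ``in the Exit section'' has at most these three lines remaining, the bound applies no matter which of Lines~\refln{dsm:exit:1}--\refln{dsm:exit:3} it currently occupies.

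The step I expect to be the main (and essentially only) obstacle is confirming that the $\applysignal()$ call cannot block. Inspecting its implementation in Figure~\ref{algo:signal}, Lines~\refln{sig:set:1}--\refln{sig:set:4} consist of a write, a read, a conditional test, and a single (possibly skipped) write, with no wait loop anywhere; hence its wait-freedom is immediate and is precisely what Theorem~\ref{thm:signal}(iii) records. Beyond this there is nothing subtle: unlike the Try section, the Exit section issues no call to $\applywait()$ and never invokes $\rlock$, so no reasoning about the invariant of Figures~\ref{inv:recoverablemutex}--\ref{inv:recoverablemutex4} is required, and the proof reduces to citing the already-established wait-freedom of the Signal object's $\signal$ operation.
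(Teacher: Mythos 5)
Your proposal is correct and follows essentially the same argument as the paper: the Exit section is loop-free straight-line code, and the only subroutine call, $\mynodepi.\cssignal.\applysignal()$ at Line~\refln{dsm:exit:2}, is itself loop-free (as recorded in Theorem~\ref{thm:signal}), so a constant bound $b$ exists. The paper's proof is just a terser version of the same inspection-based reasoning.
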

\begin{proof}
An inspection of the algorithm reveals that Lines~\refln{dsm:exit:1}-\refln{dsm:exit:3} do not involve repeated execution of any steps.
The implementation of the \signalobj\ object from Figure~\ref{algo:signal} shows that the code for ${\cal X}.\applysignal()$ does not involve a loop,
Hence, the call to $\mynodepi.\cssignal.\signal()$ at Line~\refln{dsm:exit:2} terminates.
Hence the claim.
\end{proof}

\begin{lemma}[{\bf Wait-Free CSR}] \label{lem:wfrtocs} 
There is a bound $b$ such that, if a process crashes while in the CS,
it reenters the CS before completing $b$ consecutive steps without crashing.
\end{lemma}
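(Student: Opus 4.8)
The plan is to exploit the fact that a process in the CS has durably recorded its status in NVMM, so that recovery is a short, deterministic walk straight back to the CS requiring no waiting and no detour through \rlock. First I would observe, using Condition~\ref{inv:cond1}, that whenever $\pi$ is in the CS we have $\pcpih = \refln{dsm:exit:1}$ and $\nodepi.\pred = \& \incs$. A crash step resets $\pcpi$ to $\refln{dsm:try:1}$ and erases $\pi$'s local registers, but it touches neither the shared node (which lives in NVMM) nor the hidden variable $\pcpih$ (which, by the annotations in Figure~\ref{algo:recdsmlockann}, changes only when a line is executed). Hence immediately after the crash $\nodepi.\pred = \& \incs$ still holds, and $\nodes[\portpih] \neq \nil$ by Condition~\ref{inv:cond1} (since $\pcpih = \refln{dsm:exit:1}$); Condition~\ref{inv:cond47} confirms that $\pcpi = \refln{dsm:try:1}$ is consistent with $\pcpih = \refln{dsm:exit:1}$.

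Next I would trace the crash-free execution starting from $\refln{dsm:try:1}$. Because $\nodes[p] \neq \nil$, the test at Line~\refln{dsm:try:1} fails and $\pi$ takes the else-branch: at Line~\refln{dsm:try:8} it recovers $\mynodepi = \nodes[p] = \nodepi$; at Line~\refln{dsm:try:9} the guard $\mynodepi.\pred = \nil$ is false (the value is $\& \incs$), so $\pred$ is left untouched; at Line~\refln{dsm:try:10} it copies $\& \incs$ into $\mypredpi$; and at Line~\refln{dsm:try:11} the test $\mypredpi = \& \incs$ succeeds, sending $\pi$ directly to the CS. Thus $\pi$ reenters the CS after the fixed sequence of lines $\refln{dsm:try:1}, \refln{dsm:try:8}, \refln{dsm:try:9}, \refln{dsm:try:10}, \refln{dsm:try:11}$, i.e.\ within a constant number $b$ of steps; in particular the branches at Lines~\refln{dsm:try:12}--\refln{dsm:try:15} are never reached, so no call into \rlock\ and no wait loop is ever entered.

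The one point that needs care is a further crash \emph{during} recovery, since the bound is phrased in terms of consecutive crash-free steps. The key observation is that every line on the recovery path $\refln{dsm:try:1}$--$\refln{dsm:try:11}$ leaves $\nodepi.\pred = \& \incs$ unchanged: only Line~\refln{dsm:try:9} ever writes $\pred$, and only when it equals $\nil$. Consequently $\nodepi.\pred = \& \incs$ is invariant across the entire recovery segment, so after any subsequent crash $\pi$ restarts at $\refln{dsm:try:1}$ in exactly the same durable state and re-traverses the identical short path. Hence any maximal run of $b$ consecutive crash-free steps following a crash in the CS ends with $\pi$ back in the CS, which is exactly the Wait-Free CSR property. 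I expect the only mild subtlety to be the bookkeeping that a crash preserves both $\pcpih$ and the durable $\pred$ value; once that is pinned down via Conditions~\ref{inv:cond1} and~\ref{inv:cond47}, the remainder is a direct line-by-line count with no induction required.
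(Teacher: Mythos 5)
Your proposal is correct and follows essentially the same route as the paper's proof: invoke Condition~\ref{inv:cond1} to get $\nodes[\portpih] \neq \nil$ and $\nodepi.\pred = \& \incs$, then trace the constant-length crash-free path through Lines~\refln{dsm:try:1}, \refln{dsm:try:7}--\refln{dsm:try:11} back into the CS. Your extra observation that the recovery path never disturbs $\nodepi.\pred = \& \incs$ (so repeated crashes during recovery are harmless) is a small, correct elaboration of a point the paper leaves implicit.
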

\begin{proof}
Suppose $\pi$ crashes while in the CS, i.e., when $\pcpih = \refln{dsm:exit:1}$, $\pi$ crashes.
By Condition~\ref{inv:cond1}, $\nodes[\portpih] \neq \nil$ and $\nodepi.\pred = \& \incs$.
Therefore, when $\pi$ restarts from the crash and starts executing at Line~\refln{dsm:try:1}, 
it finds that the \ifcode conditions at Lines~\refln{dsm:try:1} and \refln{dsm:try:9} are not met.
It therefore reaches Line~\refln{dsm:try:11} with $\mypredpi = \nodepi.\pred$ (by Condition~\ref{inv:cond2})
by executing Lines~\refln{dsm:try:1}, \refln{dsm:try:7}-\refln{dsm:try:10} (none of which are repeatedly executed).
The \ifcode condition at Line~\refln{dsm:try:11} is met and $\pi$ is put into the CS in a wait-free manner.
Hence the claim.
\end{proof}

\begin{lemma}[{\bf Critical Section Reentry}]\label{lem:csr}
	If a process $\pi$ crashes inside the CS,
	then no other process enters the CS before $\pi$ reenters the CS.
\end{lemma}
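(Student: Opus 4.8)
The plan is to obtain CSR as a consequence of Mutual Exclusion (Lemma~\ref{lem:mutex}), following the observation of Jayanti and Joshi that Wait-Free CSR together with Mutual Exclusion implies CSR. Recall that, by definition, a process is in the CS exactly when its hidden program counter satisfies $\pcpih = \refln{dsm:exit:1}$. First I would fix the configuration $C$ at which $\pi$ crashes inside the CS, so that $\pcpih = \refln{dsm:exit:1}$ in $C$; by Condition~\ref{inv:cond1} this gives $\nodes[\portpih] \neq \nil$ and $\nodepi.\pred = \& \incs$ in $C$.

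The central step is to show that $\pcpih = \refln{dsm:exit:1}$ continues to hold in every configuration from $C$ up to and including the moment $\pi$ physically resumes execution of the critical section. The crash step resets only the \emph{real} counter $\pcpi$ to Line~\refln{dsm:try:1} and leaves both the hidden counter $\pcpih$ and the shared node $\nodepi$ (in particular $\nodepi.\pred = \& \incs$) untouched. When $\pi$ restarts it executes Lines~\refln{dsm:try:1}, \refln{dsm:try:7}--\refln{dsm:try:11}: since $\nodes[\portpih] \neq \nil$ and $\nodepi.\pred = \& \incs$, the \ifcode\ tests at Lines~\refln{dsm:try:1} and \refln{dsm:try:9} fail, $\mypredpi$ is set to $\& \incs$ at Line~\refln{dsm:try:10} (by Condition~\ref{inv:cond2}), and the test at Line~\refln{dsm:try:11} sends $\pi$ directly into the CS. None of these lines carries a $\pcpih$-annotation in Figures~\ref{algo:recdsmlockann}--\ref{algo:auxdsmlockann}, and none overwrites $\nodepi.\pred$, so $\pcpih$ stays at $\refln{dsm:exit:1}$ throughout; this is exactly the clause of Condition~\ref{inv:cond47} permitting $\pcpi \in \{\refln{dsm:try:1}\} \cup [\refln{dsm:try:7}, \refln{dsm:try:11}]$ when $\pcpih = \refln{dsm:exit:1}$. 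Should $\pi$ crash again during recovery, the same reasoning applies with $\nodepi.\pred = \& \incs$ still intact, so the property persists; and Wait-Free CSR (Lemma~\ref{lem:wfrtocs}) guarantees that once $\pi$ runs without crashing it actually reenters the CS, making the interval well-defined.

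Given this invariance, the conclusion is immediate. In every configuration of the interval between $\pi$'s crash and $\pi$'s reentry, $\pi$ itself has $\pcpih = \refln{dsm:exit:1}$ and is therefore in the CS. By Mutual Exclusion (Lemma~\ref{lem:mutex}), at most one process is in the CS in any configuration, so no other process $\pi'$ can have $\pch{\pi'} = \refln{dsm:exit:1}$ in any configuration of this interval; that is, no other process is in the CS at any point before $\pi$ reenters, which is precisely the claim.

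The main obstacle is the middle step: rather than re-deriving anything, one must carefully verify that $\pcpih$ (backed by the persistence of $\nodepi.\pred = \& \incs$) is genuinely preserved across the crash and along the entire recovery path Lines~\refln{dsm:try:1}, \refln{dsm:try:7}--\refln{dsm:try:11}. This is where the invariant---Conditions~\ref{inv:cond1}, \ref{inv:cond2}, and \ref{inv:cond47}---carries the weight; once it is pinned down, Mutual Exclusion finishes the argument with no further work.
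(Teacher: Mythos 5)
Your proposal is correct and follows essentially the same route as the paper, which simply derives CSR from Mutual Exclusion (Lemma~\ref{lem:mutex}) together with Wait-Free CSR (Lemma~\ref{lem:wfrtocs}), citing the observation of \cite{jayanti:fcfsmutex}. Your write-up merely makes explicit the underlying reason this implication holds here --- namely that the hidden counter $\pcpih$ remains at $\refln{dsm:exit:1}$ across the crash and the recovery path, so $\pi$ formally stays ``in the CS'' and Mutual Exclusion excludes everyone else --- which the paper leaves implicit.
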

\begin{proof}
	This is immediate from Lemma~\ref{lem:mutex} and Lemma~\ref{lem:wfrtocs} as observed in \cite{jayanti:fcfsmutex}.
\end{proof}

\section{Proof of correctness of Signal object} \label{app:signalpf}

\newcommand{\evte}{\ensuremath{\alpha}}
\newcommand{\evtf}{\ensuremath{\beta}}

\begin{proof}[Proof of Theorem~\ref{thm:signal}]
	Let $\evte$ be the earliest event where some process performed Line~\refln{sig:set:1} and
	$\evtf$ be the earliest event where some process $\pi'$ performed Line~\refln{sig:wait:4} and does not subsequently fail.
	
	\noindent{\bf \underline{Case 1}:}
	$\evte$ occurs before $\evtf$. \\
	In this case we linearize the execution as follows:
	\begin{itemize}
		\item every execution of ${\cal X}.\applysignal()$ is linearized to its Line~\refln{sig:set:1},
		
		\item every execution of ${\cal X}.\applywait()$ is linearized to its Line~\refln{sig:wait:4}, where ${\cal X}.\status$ is $\present$ (since $\evte$ precedes $\evtf$).
	\end{itemize}
	Since $\evte$ occurs before $\evtf$, $\pi'$ notices that $\bit = \present$ at Line~\refln{sig:wait:4} and hence returns from the call to $\applywait()$.
	
	\noindent{\bf \underline{Case 2}:}
	$\evtf$ occurs before $\evte$. \\
	Consider the execution of ${\cal X}.\applysignal()$ that is the first to complete.
	Let $\pi$ be the process that performs this execution of ${\cal X}.\applysignal()$.
	At Line~\refln{sig:set:2} $\pi$ reads $\go_{\pi'}$ from $\goaddr$ into $\addrpi$.
	Since $\evtf$ occurs before $\evte$, $\addrpi \neq \nil$, 
	therefore, at Line~\refln{sig:set:4} $\pi$ sets $* \go_{\pi'}$ to $\true$.
	This releases $\pi'$ from its busy-wait at Line~\refln{sig:wait:5}.
	We linearize the call to ${\cal X}.\applysignal()$ by $\pi$ to its Line~\refln{sig:set:4},
	and every other complete execution of ${\cal X}.\applysignal()$ in the run to its point of completion.
	Note, $\evtf$ is the earliest event where some process $\pi'$ performed Line~\refln{sig:wait:4} and does not subsequently fail and 
	we assume that no two executions of ${\cal X}.\applywait()$ are concurrent.
	Therefore, every other execution of ${\cal X}.\applywait()$,
	happens after the call considered in $\evte$ sets $\bit$ to $\present$.
	This implies that such a call would complete because the calling process would read $\bit = \present$ at Line~\refln{sig:wait:4} and return.\\	
	
	\noindent{\bf \underline{RMR Complexity}:}
	It is easy to see that the RMR Complexity of ${\cal X}.\applysignal()$ is $O(1)$ since there are a constant steps in any execution of ${\cal X}.\applysignal()$.
	For any execution of ${\cal X}.\applywait()$ by a process $\pi$,
	$\pi$ creates a new boolean at Line~\refln{sig:wait:1} that resides in $\pi$'s memory partition.
	Therefore, the busy-wait by $\pi$ at Line~\refln{sig:wait:5} incurs a $O(1)$ RMR and the rest of the lines in ${\cal X}.\applywait()$ 
	(Lines~\refln{sig:wait:1}-\refln{sig:wait:4}) incur a $O(1)$ RMR.
\end{proof}

\section{Proof of invariant} \label{sec:invproof}

In this section we prove that our algorithm from Figures~\ref{algo:recdsmlock}-\ref{algo:auxdsmlock} satisfies the invariant described in
Figures~\ref{inv:recoverablemutex}-\ref{inv:recoverablemutex4}.
In order to prove that we need support from a few extra conditions that we present in Figures~\ref{inv:recoverablemutex5}-\ref{inv:recoverablemutex7}.
Therefore, we prove that our algorithm satisfies all the conditions described in Figures~\ref{inv:recoverablemutex}-\ref{inv:recoverablemutex7}.

\begin{figure}[!ht]
	\hrule
	{\footnotesize     
		\vspace{0.05in}
		\begin{tabbing}

		\hspace{0in} \= {\bf Definitions (Continued from Figure~\ref{inv:recoverablemutex2}):} \hspace{0.2in} \= \hspace{0.2in} \=  \hspace{0.2in} \= \hspace{0.2in} \= \hspace{0.2in} \=\\
		\> $\bullet$ \= $\owner(qnode)$ denotes the process that created the $qnode$ at Line~\refln{dsm:try:2}.  \\
		\> {\bf Conditions (Continued from Figure~\ref{inv:recoverablemutex4}):} \\
		\end{tabbing}
		\vspace{-0.38in}
		\begin{enumerate}
			\setcounter{enumi}{19}
			
			\item \label{inv:cond33} \cond{33} $\forall \pi \in \Pi, (\pcpi = \refln{dsm:rep:2} \wedge \fraghead(\fragment(\tail)).\pred \in \{ \& \incs, \& \key\}) \limplies$ \\
			\hspace*{30mm} $\fragment(\nodepi) \neq \fragment(\tail)$
			
			\item \label{inv:cond22} \cond{22} $\forall \pi \in \Pi, (\pcpi \in [\refln{dsm:rep:3}, \refln{dsm:rep:12}] \limplies \tailpathpi = \nil) \wedge (\pcpi \in [\refln{dsm:rep:3}, \refln{dsm:rep:12}] \limplies \headpathpi = \nil)$\\
			\hspace*{3mm} $\wedge$ $(\pcpi = \refln{dsm:rep:3} \limplies \idxpi \in [0, k])$ $\wedge$ $(\pcpi \in [\refln{dsm:rep:4}, \refln{dsm:rep:9}] \limplies \idxpi \in [0, k-1])$ \\
			\hspace*{3mm} $\wedge$ $(\pcpi \in [\refln{dsm:rep:10}, \refln{dsm:rep:20}] \limplies \idxpi = k)$ $\wedge$ $(\pcpi \in [\refln{dsm:rep:3}, \refln{dsm:rep:20}] \limplies \tailpi \in \caln')$ 
			
			\item \label{inv:cond55} \cond{55} $\forall \pi \in \Pi, \pcpi \in [ \refln{dsm:rep:3}, \refln{dsm:rep:20} ] \limplies (\tailpi \in \Vpi$ 
			$\vee$ $(\exists i \in [\idxpi, k-1], \tailpi = \nodes[i])$ $\vee$ $(\tailpi.\pred = \& \key))$
			
			\item \label{inv:cond32} \cond{32} $\forall \pi \in \Pi,$ if $\pcpi \in [\refln{dsm:rep:3}, \refln{dsm:rep:20}]$, then:
			\begin{enumerate}
				\item \label{inv:cond32a} $(\Vpi, \Epi)$ is a directed acyclic graph,
				
				\item \label{inv:cond32b} Maximal paths in $(\Vpi, \Epi)$ are disjoint.
			\end{enumerate}
			
			\item \label{inv:cond28} \cond{28} $\forall \pi \in \Pi$, if $\pcpi \in [\refln{dsm:rep:3}, \refln{dsm:rep:20} ]$, then one of the following holds (i.e., (a) $\vee$ (b) $\vee$ (c) $\vee$ (d)):
			\begin{enumerate}
				\item \label{inv:cond28a} $\fraghead(\fragment(\tailpi)).\pred \in \{ \& \incs, \& \key\}$
				
				\item \label{inv:cond28b} there is a unique maximal path $\sigma$ in the graph $(\Vpi, \Epi)$, such that, $\front(\sigma).\pred \in \{ \& \incs, \& \key\}$ and $\rear(\sigma).\pred \neq \& \key$ 
				
				\item \label{inv:cond28c} $\idxpi < k$ and $\exists \idx' \in [\idxpi, k-1], \nodes[\idx'].\pred.\pred \in \{ \& \incs, \& \key \}$ \\
				\hspace*{3mm} $\wedge$ $\fragtail(\fragment(\nodes[\idx'])).\pred \neq \& \key$
				
				\item \label{inv:cond28d} $|\calq| = 0$
			\end{enumerate}
			
			\item \label{inv:cond58} \cond{58} $\forall \pi \in \Pi, (\pcpi \in [\refln{dsm:rep:3}, \refln{dsm:rep:10}] \wedge \fraghead(\fragment(\tailpi)).\pred \in \{ \& \incs, \& \key\}) \limplies$ \\
			\hspace*{30mm} $\forall \node_{\pi'} \in \fragment(\nodepi), ((\exists \idxpi < \idx' < k , \nodes[\idx'] = \node_{\pi'})$ $\vee$ \\
			\hspace*{30mm} $(\node_{\pi'} \in \Vpi$ $\wedge$ $(\nodepi \neq \node_{\pi'} \limplies (\node_{\pi'}, \node_{\pi'}.\pred) \in \Epi)))$
			
			\item \label{inv:cond59} \cond{59} $\forall \pi \in \Pi$, If $\pcpi \in [\refln{dsm:rep:3}, \refln{dsm:rep:12}]$ and there is a maximal path $\sigma$ in $(\Vpi, \Epi)$ 
			such that $\front(\sigma).\pred \in \{ \& \incs, \& \key \}$, then, for an arbitrary vertices $v$ and $v'$ on the path $\sigma$,\\
			\hspace*{30mm} $\forall \node \in \fragment(v), ((\exists \idxpi < \idx' < k , \nodes[\idx'] = \node)$ $\vee$ $(\node \in \Vpi$\\
			\hspace*{45mm}  $\wedge$ $(\node.\pred \notin \{ \& \incs, \& \key \} \limplies (\node, \node.\pred) \in \Epi)))$, \\
			\hspace*{30mm} and $(\fragment(v) \neq \fragment(v') \limplies$ \\
			\hspace*{45mm} $(v.\pred \in \{ \& \incs, \& \key \} \vee v'.\pred \in \{ \& \incs, \& \key \}))$

		\end{enumerate}
	}
	\vspace*{-5mm}
	\captionsetup{labelfont=bf}
	\caption{(Continued from Figure~\ref{inv:recoverablemutex4}.) Invariant for the $k$-ported recoverable mutual exclusion algorithm from Figures~\ref{algo:recdsmlock}-\ref{algo:auxdsmlock}. (Continued in Figure~\ref{inv:recoverablemutex6}.)}
	\label{inv:recoverablemutex5}
	\hrule
\end{figure}

\begin{figure}[!ht]
	\hrule
	{\footnotesize     
		\vspace{0.05in}
		\begin{tabbing}
			\hspace{0in} \= {\bf Conditions (Continued from Figure~\ref{inv:recoverablemutex4}):} \hspace{0.2in} \= \hspace{0.2in} \=  \hspace{0.2in} \= \hspace{0.2in} \= \hspace{0.2in} \=\\
		\end{tabbing}
		\vspace{-0.38in}
		\begin{enumerate}
			\setcounter{enumi}{26}

			\item \label{inv:cond25} \cond{25} $\forall \pi \in \Pi,$ if $\pcpi \in [\refln{dsm:rep:3}, \refln{dsm:rep:20}]$, then:
			\begin{enumerate}
				\item \label{inv:cond25a} $\forall v \in \Vpi, v \in \caln'$ $\wedge$ $(\idxpi > \portpih \limplies \mynodepi \in \Vpi)$
				
				\item \label{inv:cond25b} $\forall \ 0 \leq \idx' < \idxpi, (\nodes[\idx'] \in \fragment(\nodepi) \wedge \fragment(\tailpi) \neq \fragment(\nodepi)) \limplies \nodes[\idx'] \in \Vpi$
				
				\item \label{inv:cond25c} $\forall \ 0 \leq \idx' < \idxpi, \forall v \in \Vpi, (\porth{\owner(v)} = \idx' \wedge v \neq \nodes[\idx']) \limplies$ $(v.\pred = \& \key \wedge \forall p' \in \calp, \nodes[p'] \neq v)$ 
				
				\item \label{inv:cond25d} $\forall \ 0 \leq \idx' < \idxpi, (\nodes[\idx'].\pred \notin \{\& \fail, \& \incs, \& \key \} \wedge \nodes[\idx'] \in \Vpi)\limplies$ \\
				\hspace*{30mm} $(\nodes[\idx'], \nodes[\idx'].\pred) \in \Epi$
				
				\item \label{inv:cond25e} $\forall \ 0 \leq \idx' < \idxpi,$ $((\forall v \in \Vpi, \nodes[\idx'] \neq v)$ $\limplies$ \\
				\hspace*{30mm} $((\fraghead(\fragment(\tailpi)).\pred \in \{ \& \incs, \& \key \} \wedge \nodes[\idx'] \in \fragment(\tailpi))$ \\
				\hspace*{45mm} $\vee$ $\fraghead(\fragment(\nodes[\idx'])).\pred \in \{ \nil, \& \fail \} ))$
				
				\item \label{inv:cond25f} $\forall v \in \Vpi, \front(v).\pred = \& \fail \limplies \fraghead(\fragment(v)).\pred = \& \fail$
				
				\item \label{inv:cond25g} $\forall v \in \Vpi$, If there is a pair $(v, u) \in \Epi$, then $u \in \Vpi$ and $(v.\pred = u \vee v.\pred \in \{\& \incs, \& \key\})$
				
				\item \label{inv:cond25h} $\forall (u, v) \in \Epi, (\exists i \in [0, k-1], \nodes[i] = u)$ $\vee$ $(\forall i' \in [0, k-1], \nodes[i'].\pred \neq v)$
				
				\item \label{inv:cond25i} $\forall (v, w) \in \Epi, (v.\pred \in \{ w, \& \incs, \& \key \}) \wedge (v.\pred \in \{\& \incs, \& \key \} \limplies w.\pred = \& \key)$
				
				\item \label{inv:cond25j} $\forall (u, v) \in \Epi, u \neq \mynodepi$
				
				\item \label{inv:cond25k} $\idxpi > \portpih \limplies$ there is a path $\sigma$ in the graph $(\Vpi, \Epi)$, such that, $\front(\sigma) = \mynodepi$.
			\end{enumerate}
			
			\item \label{inv:cond35} \cond{35} $\forall \pi \in \Pi,$ $(\pc{\pi} \in [\refln{dsm:rep:3}, \refln{dsm:rep:18}] \wedge \fraghead(\fragment(\tailpi)).\pred \in \{ \& \incs, \& \key \}) \limplies$ \\
			\hspace*{30mm} $\fragment(\nodepi) \neq \fragment(\tail)$ \\
			\hspace*{3mm} $\wedge$ $((\pcpi \in [\refln{dsm:rep:10}, \refln{dsm:rep:18}]) \wedge \tailpi \notin \Vpi) \limplies \fraghead(\fragment(\tailpi)).\pred \in \{ \& \incs, \& \key \})$ \\
			\hspace*{3mm} $\wedge$ $((\pcpi \in [\refln{dsm:rep:13}, \refln{dsm:rep:18}]) \wedge \tailpathpi \neq \nil) \lbicond \tailpi \in \Vpi)$ 
			
			\item \label{inv:cond56} \cond{56} $\forall \pi \in \Pi, (\pcpi \in [\refln{dsm:rep:3}, \refln{dsm:rep:20}] \wedge \fraghead(\fragment(\tailpi)).\pred \notin \{ \& \incs, \& \key \}) \limplies$ \\
			\hspace*{30mm} $\fraghead(\fragment(\tail)).\pred \notin \{ \& \incs, \& \key\}$
			
			\item \label{inv:cond13} \cond{13} $\forall \pi \in \Pi, (\pcpi = \refln{dsm:rep:5} \limplies (\curpi = \nil \vee (\curpi \in \caln'$ $\wedge$ $(\curpi = \nodes[\idxpi] \vee \curpi.\pred = \& \key))))$ \\
			\hspace*{3mm} $\wedge$ $(\pcpi = \refln{dsm:rep:6} \limplies (\curpi.\nonnilsignal = \present$  $\vee$ $(\exists \pi' \in \Pi, \pi \neq \pi' \wedge \curpi = \node_{\pi'} \wedge \pch{\pi'} \in [\refln{dsm:try:4}, \refln{dsm:try:6}])))$ \\
			\hspace*{3mm} $\wedge$ $(\pcpi \in [\refln{dsm:rep:6}, \refln{dsm:rep:9}] \limplies (\curpi \in \caln'$ $\wedge$ $(\curpi = \nodes[\idxpi] \vee \curpi.\pred = \& \key)))$ \\
			\hspace*{3mm} $\wedge$ $(\pcpi \in [\refln{dsm:rep:7}, \refln{dsm:rep:8}] \limplies \curpi.\pred \in \{ \& \fail, \& \incs, \& \token \} \cup \caln')$ 
			\hspace*{0mm} $\wedge$ $(\pcpi = \refln{dsm:rep:9} \limplies \curpredpi \in \caln')$
			
			\item \label{inv:cond63} \cond{63} $\forall \pi \in \Pi, ((\pcpi \in [\refln{dsm:rep:13}, \refln{dsm:rep:18}] \wedge \tailpathpi \neq \nil \wedge \front(\tailpathpi).\pred \notin \{ \& \incs, \& \key \}) \vee \pcpi = \refln{dsm:rep:19}) \limplies$ \\
			\hspace*{30mm} $\fraghead(\fragment(\tailpi)).\pred \notin \{\& \incs, \& \key \}$
			
			\item \label{inv:cond12} \cond{12} $\forall \pi \in \Pi,$ if $\pcpi \in [\refln{dsm:rep:10}, \refln{dsm:rep:12}]$ and $\fraghead(\fragment(\tailpi)).\pred \in \{ \& \incs, \& \key \}$, then there is a maximal path $\sigma$ in the graph $(\Vpi, \Epi)$ such that: 
			\begin{enumerate}
				\item \label{inv:cond12a} $\front(\sigma) = \mynodepi$
				
				\item \label{inv:cond12b} $\rear(\sigma) = \fragtail(\fragment(\mynodepi))$
				
			\end{enumerate}
			
			\item \label{inv:cond60} \cond{60} $\forall \pi \in \Pi$, if $\pcpi \in [\refln{dsm:rep:12}, \refln{dsm:rep:20}]$, then $\mypathpi$ is the unique path in $\pathspi$ such that $\mynodepi$ appears in $\mypathpi$
			
			\item \label{inv:cond26} \cond{26} $\forall \pi \in \Pi$, if $\pcpi = \refln{dsm:rep:10}$ and for every maximal path $\sigma$ in $(\Vpi, \Epi)$, $\neg (\front(\sigma).\pred \in \{ \& \incs, \& \key\}$ \\
			\hspace*{3mm} $\wedge$ $\rear(\sigma).\pred \neq \& \key)$, then $(\fraghead(\fragment(\tailpi)).\pred \in \{ \& \incs, \& \key \} \vee |\calq| = 0)$
		\end{enumerate}
	}
	\vspace*{-5mm}
	\captionsetup{labelfont=bf}
	\caption{(Continued from Figure~\ref{inv:recoverablemutex5}.) Invariant for the $k$-ported recoverable mutual exclusion algorithm from Figures~\ref{algo:recdsmlock}-\ref{algo:auxdsmlock}. (Continued in Figure~\ref{inv:recoverablemutex7}.)}
	\label{inv:recoverablemutex6}
	\hrule
\end{figure}

\begin{figure}[!ht]
	\hrule
	{\footnotesize     
		\vspace{0.05in}
		\begin{tabbing}
			\hspace{0in} \= {\bf Conditions (Continued from Figure~\ref{inv:recoverablemutex4}):} \hspace{0.2in} \= \hspace{0.2in} \=  \hspace{0.2in} \= \hspace{0.2in} \= \hspace{0.2in} \=\\
		\end{tabbing}
		\vspace{-0.38in}
		\begin{enumerate}
			\setcounter{enumi}{34}
			
			\item \label{inv:cond27} \cond{27} $\forall \pi \in \Pi$, if $\pcpi = \refln{dsm:rep:10}$ and there is a maximal path $\sigma$ in $(\Vpi, \Epi)$, such that,  \\
			\hspace*{3mm} $\front(\sigma).\pred \in \{ \& \incs, \& \key \}$ $\wedge$ $\rear(\sigma).\pred \neq \& \key$, then \\
			\hspace*{30mm} $(\fraghead(\fragment(\tailpi)).\pred \in \{ \& \incs, \& \key \}$ \\
			\hspace*{30mm} $\vee$ $\exists \node \in \caln, (\node = \rear(\sigma)$ $\wedge$ $\node = \fragtail(\fragment(\node))$ \\
			\hspace*{33mm} $\wedge$ $\fraghead(\fragment(\node)).\pred \in \{ \& \incs, \& \key \}$ \\
			\hspace*{33mm} $\wedge$ $\fragment(\nodepi) \neq \fragment(\node)))$
			
			\item \label{inv:cond61} \cond{61} $\forall \pi \in \Pi$, $(\pcpi \in \{ \refln{dsm:rep:15}, \refln{dsm:rep:16} \} \limplies \front(\mseqpi).\pred \in \{ \& \incs, \& \key \})$ \\
			\hspace*{3mm} $\wedge$ $(\pcpi = \refln{dsm:rep:16} \limplies (|\mseqpi| > 1 \wedge \rear(\mseqpi) = \fragtail(\fragment(\rear(\mseqpi)))$$))$
			
			\item \label{inv:cond43} \cond{43} $\forall \pi \in \Pi,$ $(\pc{\pi} \in [\refln{dsm:rep:11}, \refln{dsm:rep:19}] \wedge \headpathpi = \nil)$ $\limplies$ \\
			\hspace*{30mm} $(\fraghead(\fragment(\tailpi)).\pred \in \{ \& \incs, \& \key \} \vee |\calq| = 0$ \\
			\hspace*{30mm} $\vee$ $($There is a maximal path $\sigma$ in $(\Vpi, \Epi)$, such that,  \\
			\hspace*{33mm} $\front(\sigma).\pred \in \{ \& \incs, \& \key \}$ $\wedge$ $\rear(\sigma).\pred \neq \& \key$, and \\
			\hspace*{33mm} $\exists \node \in \caln, (\node = \rear(\sigma)$ $\wedge$ $\node = \fragtail(\fragment(\node))$ \\
			\hspace*{45mm} $\wedge$ $\fraghead(\fragment(\node)).\pred \in \{ \& \incs, \& \key \}$ \\
			\hspace*{45mm} $\wedge$ $\fragment(\nodepi) \neq \fragment(\node))))$ 
			
			\item \label{inv:cond44} \cond{44} $\forall \pi \in \Pi,$ $(\pc{\pi} \in [\refln{dsm:rep:11}, \refln{dsm:rep:19}] \wedge \headpathpi \neq \nil)$ $\limplies$ \\
			\hspace*{30mm} $((\exists \sigma \in \pathspi, \headpathpi = \sigma) \wedge \fraghead(\fragment(\tailpi)).\pred \in \{ \& \incs, \& \key \}$ \\
			\hspace*{30mm} $\vee$ $\exists \node \in \caln, (\node = \rear(\headpathpi)$ $\wedge$ $\node = \fragtail(\fragment(\node))$ \\
			\hspace*{33mm} $\wedge$ $(\node.\pred = \& \incs \limplies (\exists \pi' \in \Pi, \pi \neq \pi'\wedge \pch{\pi'} = \refln{dsm:exit:1} \wedge \node = \node_{\pi'}))$ \\
			\hspace*{33mm} $\wedge$ $(\node.\pred = \& \key$ $\limplies$ $|\calq| =0)$ \\
			\hspace*{33mm} $\wedge$ $(\node.\pred \notin \{ \& \incs, \& \key\}$ $\limplies$ \\
			\hspace*{45mm} $(\exists \pi' \in \Pi, \pi \neq \pi' \wedge \pch{\pi'} \in \{ \refln{dsm:try:6} \} \cup [ \refln{dsm:try:16}, \refln{dsm:try:17} ] \wedge \node = \node_{\pi'}))$ \\
			\hspace*{33mm} $\wedge$ $\fraghead(\fragment(\node)).\pred \in \{ \& \incs, \& \key\}$ \\
			\hspace*{33mm} $\wedge$ $\fragment(\nodepi) \neq \fragment(\node)))$
			
			\item \label{inv:cond45} \cond{45} $\forall \pi \in \Pi,$ $\pc{\pi} = \refln{dsm:rep:20} \limplies (\pcpih = \refln{dsm:try:5}$ $\wedge$ $\mypred_{\pi} \in \caln'$ $\wedge$ $\mypredpi = \fragtail(\fragment(\mypredpi))$ \\
			\hspace*{30mm} $\wedge$ $(\mypredpi.\pred = \& \incs \limplies (\exists \pi' \in \Pi, \pi \neq \pi'\wedge \pch{\pi'} = \refln{dsm:exit:1} \wedge \mypredpi = \node_{\pi'}))$ \\
			\hspace*{30mm} $\wedge$ $(\mypredpi.\pred = \& \key \limplies (((\exists \pi' \in \Pi, \pi \neq \pi' \wedge \pch{\pi'} \in [\refln{dsm:exit:2}, \refln{dsm:exit:3} ]$ \\
			\hspace*{45mm} $\wedge$ $\mypredpi = \node_{\pi'})$ $\vee$ $(\forall p' \in \calp, \nodes[p'] \neq \mypredpi)) \wedge |\calq| = 0 )$$)$ \\
			\hspace*{30mm} $\wedge$ $(\mypred_{\pi}.\pred \notin \{ \& \incs, \& \key\}$ $\limplies$ \\
			\hspace*{45mm} $(\exists \pi' \in \Pi, \pi \neq \pi' \wedge \pch{\pi'} \in [ \refln{dsm:try:5}, \refln{dsm:try:6} ] \cup [ \refln{dsm:try:16}, \refln{dsm:try:17} ] \wedge \mypredpi = \node_{\pi'}))$ \\
			\hspace*{30mm} $\wedge$ $(\fraghead(\fragment(\mypredpi)).\pred \in \{ \nil, \& \fail \} \limplies (\exists \pi' \in \Pi, \pi \neq \pi' \wedge \pch{\pi'} = \refln{dsm:try:5}$ \\
			\hspace*{45mm} $\wedge$ $\mypredpi = \fragtail(\fragment(\node_{\pi'})) \wedge \node_{\pi'} = \fraghead(\fragment(\node_{\pi'})) )  )$ \\
			\hspace*{30mm} $\wedge$ $\fragment(\nodepi) \neq \fragment(\mypredpi))$
			
		\end{enumerate}
	}
	\vspace*{-5mm}
	\captionsetup{labelfont=bf}
	\caption{(Continued from Figure~\ref{inv:recoverablemutex6}.) Invariant for the $k$-ported recoverable mutual exclusion algorithm from Figures~\ref{algo:recdsmlock}-\ref{algo:auxdsmlock}.}
	\label{inv:recoverablemutex7}
	\hrule
\end{figure}

\begin{lemma}
The algorithm in Figures~\ref{algo:recdsmlock}-\ref{algo:auxdsmlock} satisfies the invariant (i.e., the conjunction of the 39 conditions)
stated in Figures~\ref{inv:recoverablemutex}-\ref{inv:recoverablemutex7}, i.e., the invariant holds in every configuration of every run of the algorithm.
\end{lemma}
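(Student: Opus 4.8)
The plan is to prove the lemma by induction on the number of steps in a run, taking the conjunction of all 39 conditions as the inductive hypothesis; the conditions are heavily interdependent, so establishing any one of them after a step will in general require invoking several others in the pre-state. First I would discharge the base case in the initial configuration, where every process is in Remainder ($\pcpi = \refln{dsm:try:1}$, $\pcpih = \refln{dsm:try:6}$ by the stated convention), $\nodes[i] = \nil$ for all $i$, $\tail = \&\spclnode$, $\caln = \{\spclnode\}$, and $\calq = \emptyset$. Almost every condition is then vacuous; the handful that are not (notably the $\tail$ condition~\ref{inv:cond15} and the empty-queue condition~\ref{inv:cond16}) follow immediately from $\spclnode.\pred = \&\key$ together with $\tail = \&\spclnode$.

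For the inductive step I fix a configuration $C$ satisfying the invariant and a single step leading to $C'$, and split on the kind of step. Crash steps are the easiest: a crash of $\pi$ only resets $\pcpi$ to \refln{dsm:try:1} and erases $\pi$'s registers, leaving all shared memory, all hidden variables ($\pcpih$, $\nodepi$, $\porth{\pi}$), and every other process untouched; hence each condition phrased over shared state and hidden state survives verbatim, and I need only recheck the $\pcpi$-to-$\pcpih$ correspondences (\ref{inv:cond47}, \ref{inv:cond8}), which are written precisely to admit $\pcpi = \refln{dsm:try:1}$ for every reachable value of $\pcpih$. Normal steps I would organise by what they mutate: (a) purely local steps (reading $\nodes[p]$ into $\mynodepi$, the path computations at \refln{dsm:rep:10}--\refln{dsm:rep:12}, and the conditional tests), which leave shared memory fixed so that only the local-variable and $PC$-tracking conditions move; (b) node creation at \refln{dsm:try:2} and the writes to $\nodes[p]$ at \refln{dsm:try:3} and \refln{dsm:exit:3}; (c) the $\pred$-field writes at \refln{dsm:try:5}, \refln{dsm:try:9}, \refln{dsm:try:17}, \refln{dsm:exit:1}, and \refln{dsm:rep:20}; (d) the two \fas\ operations on $\tail$ at \refln{dsm:try:4} and \refln{dsm:rep:18}; and (e) the Signal calls at \refln{dsm:try:6}, \refln{dsm:try:14}, \refln{dsm:try:16}, \refln{dsm:exit:2}, and \refln{dsm:rep:6}, whose effect I would read off from Theorem~\ref{thm:signal}. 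Steps internal to the Try/Exit of $\rlock$ touch only $\rlock$'s private state and are therefore inert with respect to our invariant; I would use the assumed correctness of $\rlock$ (mutual exclusion, starvation freedom, $O(k)$ RMR) to conclude that at most one process is ever executing the repair code \refln{dsm:rep:1}--\refln{dsm:rep:20}.

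The heart of the argument, and where I expect the real work, is the repair inside the critical section of $\rlock$. Here the auxiliary conditions in Figures~\ref{inv:recoverablemutex5}--\ref{inv:recoverablemutex7} (\ref{inv:cond22} through \ref{inv:cond45}) carry the load: they assert that as the loop at \refln{dsm:rep:3} scans $\nodes[]$, the graph $(\Vpi,\Epi)$ faithfully models exactly the live fragments observed so far. I would maintain these across each loop iteration, proving the acyclicity and disjoint-maximal-path properties (\ref{inv:cond32}), the exhaustive coverage of already-scanned cells (\ref{inv:cond25}, with its sub-parts), and the existence of a ``head'' path reaching a node marked $\&\incs$ or $\&\token$ whenever the queue is nonempty (\ref{inv:cond28}, \ref{inv:cond27}, \ref{inv:cond26}). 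The crucial subtlety is that, although repairers are mutually exclusive, \emph{other active processes keep mutating shared memory while $\pi$ scans} --- inserting nodes via \fas\ at \refln{dsm:try:4} and setting predecessors at \refln{dsm:try:5} --- so $(\Vpi,\Epi)$ is only a partial snapshot. This is exactly why so many of these conditions are quantified over the not-yet-scanned indices $\idx' \in [\idxpi, k-1]$, and I must show that such concurrent changes never invalidate the choices the scan will make.

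The hard part will be the splice itself. At \refln{dsm:rep:17}--\refln{dsm:rep:19} the process decides whether its fragment is already attached to the live queue (and so performs a fresh \fas\ at \refln{dsm:rep:18}) or must be grafted onto the head path, or onto $\spclnode$; the write at \refln{dsm:rep:20} then commits $\mypredpi$ into $\mynodepi.\pred$. The decisive obligation is to prove that this single write re-establishes the global structural conditions: \ref{inv:cond3} (distinct nodes never share a non-sentinel predecessor), \ref{inv:cond4} (fragments stay vertex-disjoint and partition correctly by head-predecessor type), \ref{inv:cond7}/\ref{inv:cond45} (the predecessor invariant for a process poised at \refln{dsm:try:5}), and \ref{inv:cond17} (the linear queue order with a unique first process $\pi_1$). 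I expect this to be the most delicate case, because correctness of the graft relies on essentially all the repair-specific conditions holding simultaneously in $C$, and on ruling out --- using \ref{inv:cond15}, \ref{inv:cond35}, \ref{inv:cond56}, \ref{inv:cond63}, and the mutual exclusion of $\rlock$ --- the race in which two nodes acquire the same predecessor, which is precisely the starvation bug identified for the prior algorithm in Appendix~\ref{app:issues}.
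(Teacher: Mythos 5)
Your plan coincides with the paper's proof: the same induction on steps with the full 39-condition conjunction as the hypothesis, the same (largely vacuous) base case, crash steps dismissed because they touch no shared or hidden state, an exhaustive per-line case analysis for normal steps, and the repair scan and splice (Lines~\refln{dsm:rep:3}--\refln{dsm:rep:20}) carried by the auxiliary conditions of Figures~\ref{inv:recoverablemutex5}--\ref{inv:recoverablemutex7} exactly as you describe, relying on \rlock's mutual exclusion and on Theorem~\ref{thm:signal} in the same places. The one correction needed is in your base case: the initial value of $\pcpih$ must be $\refln{dsm:try:2}$ (as the paper's proof takes it), not $\refln{dsm:try:6}$, since Condition~\ref{inv:cond1} forces $\pcpih \in \{ \refln{dsm:try:2}, \refln{dsm:try:3} \}$ whenever $\nodes[\portpih] = \nil$; the prose convention you quote is inconsistent with the invariant itself.
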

\begin{proof}
We prove the lemma by induction.
Specifically, we show 
(i) {\em base case: } the invariant holds in the initial configuration, and 
(ii) {\em induction step: } if the invariant holds in a configuration $C$ and 
a step of a process takes the configuration $C$ to $C'$, then the invariant holds in $C'$.

In the initial configuration,
we have $\tail = \& \spclnode$, 
$\forall \pi \in \Pi$, $\pcpi = \refln{dsm:try:1}$, $\pcpih = \refln{dsm:try:2}$, and
$\nodes[\porth{\pi}] = \nil$.
Note, $|\calq| = 0$ by definition of $\calq$.
Since all processes are in the Remainder section, 
Condition~\ref{inv:cond1} holds because of the value of the $\nodes$ array as noted above.
Since $\node_{\pi} = \nil$ by definition, Condition~\ref{inv:cond3}, \ref{inv:cond4} holds.
Since $\tail = \& \spclnode$, Condition~\ref{inv:cond15} holds.
Since $|\calq| = 0$, $\tail = \& \spclnode$ as noted above, hence, Condition~\ref{inv:cond16} holds.
All of the remaining conditions of the invariant hold vacuously in the initial configuration.
Hence, we have the base case.

To verify the induction step,
Let $C$ be an arbitrary configuration in which the invariant holds,
$\pi$ be an arbitrary process,
and $C'$ be the configuration that results when $\pi$ takes a step from $C$.
In the following,
we enumerate each possible step of $\pi$ and argue that the invariant continues to hold in $C'$,
even though the step changes the values of some variables.
Since our invariant involves universal quantifiers for all the conditions, 
we have only argued it thoroughly as it is applicable to $\pi$ and wherever necessary for another process for the sake of brevity.
We also skip arguing about conditions that hold vacuously, are easy to verify, are argued before in a similar way, or need not be argued if the step does not affect the condition.
Induction step due to a crash step of $\pi$ is argued in the end.
To aid in reading we have numbered each step according to the value of the program counter wherever possible.
For the purpose of this proof we assume that $\pi$ executes the algorithm using the port $p$,
hence $\porth{\pi} = p$.

\begin{itemize}
	\item[\refln{dsm:try:1} (a).] \label{invproof:st1}
	$\pi$ executes Line~\refln{dsm:try:1} when $\pcpih \in \{ \refln{dsm:try:2}, \refln{dsm:try:3} \}$. \\
	In $C$, $\pcpi = \refln{dsm:try:1}$ and $\pcpih \in \{ \refln{dsm:try:2}, \refln{dsm:try:3} \}$.
	By Condition~\ref{inv:cond1}, $\nodes[\portpih] = \nil$. \\
	The \ifcode condition at Line~\refln{dsm:try:1} evaluates to \true.
	Therefore, this step changes $\pcpi$ and $\pcpih$ to \refln{dsm:try:2}.\\
	{\underline{Condition~\ref{inv:cond1}}:}
	As argued above, $\pcpih = \refln{dsm:try:2}$ and $\nodes[\portpih] = \nil$ in $C'$. Therefore the condition holds in $C'$.

	\item[\refln{dsm:try:1} (b).] \label{invproof:st2}
	$\pi$ executes Line~\refln{dsm:try:1} when $\pcpih \notin \{ \refln{dsm:try:2}, \refln{dsm:try:3} \}$. \\
	In $C$, $\pcpi = \refln{dsm:try:1}$ and $\pcpih \notin \{ \refln{dsm:try:2}, \refln{dsm:try:3} \}$.
	By Condition~\ref{inv:cond1}, $\nodes[\portpih] \neq \nil$. \\
	The \ifcode condition at Line~\refln{dsm:try:1} evaluates to \false.
	Therefore, this step changes $\pcpi$ to \refln{dsm:try:7}.\\
	The step does not affect any condition, so the invariant continues to hold in $C'$.
	
	\item[\refln{dsm:try:2}.] \label{invproof:st3}
	$\pi$ executes Line~\refln{dsm:try:2}. \\
	In $C$, $\pcpi = \refln{dsm:try:2}$. 
	By Condition~\ref{inv:cond47}, $\pcpih \in [\refln{dsm:try:2}, \refln{dsm:try:3}]$ in $C$. 
	By definition, $\nodepi = \nil$ in $C$.\\
	This step creates a new $\qnode$ in the shared memory which gets included in the set $\caln$.
	The node gets a unique address and the variable $\mynodepi$ holds the address of this new node. 
	This step also initializes this new node so that $\mynodepi.\pred = \nil$, $\mynodepi.\nonnilsignal = \absent$, and $\mynodepi\cssignal = \absent$.
	The step then changes $\pcpi$ and $\pcpih$ to \refln{dsm:try:3}.\\
	{\underline{Condition~\ref{inv:cond3}}:}
	As argued above, the step creates a new $\qnode$ in shared memory that $\nodepi$ is pointing to in $C'$.
	The step also initializes $\nodepi.\pred$ to $\nil$.
	Therefore, the condition holds in $C'$. \\
	{\underline{Condition~\ref{inv:cond5}}:}
	Since the step creates a new node in shared memory, no process has a reference to the node except for $\pi$ in $C'$.
	Therefore, $\forall q \in \calp, \nodes[q] \neq \mynodepi \wedge \nodes[q].\pred \neq \mynodepi$.
	Since the nothing except $\mynodepi$ has a reference to the new node, $\mynodepi = \fraghead(\fragment(\mynodepi))$,
	$|\fragment(\mynodepi)| = 1$, and $\fragment(\mynodepi) \neq \fragment(\tail)$ holds.
	Also, as observed above, $\mynodepi.\nonnilsignal = \absent$, and $\mynodepi\cssignal = \absent$ in $C'$.
	Therefore, the condition holds in $C'$.
		
	\item[\refln{dsm:try:3}.] \label{invproof:st4}
	$\pi$ executes Line~\refln{dsm:try:3}. \\
	In $C$, $\pcpi = \refln{dsm:try:3}$ and $\pcpih = \refln{dsm:try:3}$. 
	By Condition~\ref{inv:cond5}, $\mynodepi \in \caln'$, $\forall q \in \calp, \nodes[q] \neq \mynode_{\pi} \wedge \nodes[q].\pred \neq \mynode_{\pi}$,
	$\mynodepi.\pred = \nil$, $\fragment(\mynodepi) \neq \fragment(\tail)$, $\mynodepi = \fraghead(\fragment(\mynodepi))$, and $|\fragment(\mynodepi)| = 1$.\\
	This step sets $\nodes[\portpih]$ to $\mynodepi$ and updates $\pcpi$ and $\pcpih$ to \refln{dsm:try:4}. \\
	{\underline{Condition~\ref{inv:cond54}}:} 
	As argued above, $\mynodepi \in \caln'$, $\mynodepi.\pred = \nil$, and $|\fragment(\mynodepi)| = 1$ in $C'$.
	The step sets $\nodes[\portpih]$ to $\mynodepi$, 
	it follows from the above that only $\nodes[\portpih] = \mynodepi$ and $\forall q \in \calp, q \neq \portpih \limplies \nodes[q] \neq \mynodepi$ in $C'$.
	Therefore, the condition holds in $C'$. \\
	{\underline{Condition~\ref{inv:cond3}}:} 
	By the same argument as for Condition~\ref{inv:cond54} above, we have $\forall q \in \calp, q \neq \portpih \limplies \nodes[q] \neq \mynodepi$ in $C'$.
	Also, $\mynodepi.\pred = \nil$, therefore, from the definition of $\nodepi$ it follows that the condition holds in $C'$.\\
	{\underline{Condition~\ref{inv:cond53}}:} 
	As discussed above,$|\fragment(\mynodepi)| = 1$, $\fragment(\mynodepi) \neq \fragment(\tail)$, and $\mynodepi = \fraghead(\fragment(\mynodepi))$ in $C$, which continues to hold in $C'$.
	Therefore, the condition holds in $C'$.
	
	\item[\refln{dsm:try:4}.] \label{invproof:st5}
	$\pi$ executes Line~\refln{dsm:try:4}. \\
	In $C$, $\pcpi = \refln{dsm:try:4}$ and $\pcpih = \refln{dsm:try:4}$. \\
	This step performs a $\fas$ operation on the $\tail$ pointer so that $\tail$ now points to the same node as pointed by $\mynodepi$,
	and sets $\mypredpi$ to the value held by $\tail$ in $C$.
	It updates $\pcpi$ and $\pcpih$ to \refln{dsm:try:5}. \\
	{\underline{Condition~\ref{inv:cond7}}:} 
	Applying Condition~\ref{inv:cond15} to $C$ we note that what holds true for $\tail$ in $C$, holds true for $\mypredpi$ in $C'$.
	Also applying Condition~\ref{inv:cond5} to $C$ we note that $\mynodepi \in \caln'$, $\mynodepi.\pred = \nil$, $\mynodepi = \fraghead(\fragment(\mynodepi))$,
	$|\fragment(\mynodepi)| = 1$, $\fragment(\mynodepi) \neq \fragment(\tail)$, $\mynodepi.\cssignal = \absent$, and $\mynodepi.\nonnilsignal = \absent$ in $C$.
	In $C'$ it holds that $\fragment(\mynodepi) \neq \fragment(\mypredpi)$.
	It follows that the condition holds in $C'$. \\
	{\underline{Condition~\ref{inv:cond53}}:} 
	The truth value of the condition follows from the reasoning similar to Condition~\ref{inv:cond7} as argued above.\\
	{\underline{Condition~\ref{inv:cond15}}:} 
	By Condition~\ref{inv:cond5}, $\mynodepi \in \caln'$, $|\fragment(\mynodepi)| = 1$, and $\mynodepi.\pred = \nil$ in $C$.
	It follows from the same condition that $\fragtail(\fragment(\mynodepi)) = \mynodepi$.
	Since the step sets $\tail = \mynodepi$ in $C'$ and $\pcpih = \refln{dsm:try:5}$ in $C'$, it follows that Condition~\ref{inv:cond15} holds in $C'$. \\
	{\underline{Condition~\ref{inv:cond16}}:}
	Suppose $|\calq| = 0$ in $C$.
	By the condition, $\forall \pi' \in \Pi, \pc{\pi'} \in [\refln{dsm:try:2}, \refln{dsm:try:6}] \cup \{\refln{dsm:try:16}\} \cup [\refln{dsm:exit:2}, \refln{dsm:exit:3}]$ in $C$, which continues to hold in $C'$.
	As argued above, $\mynodepi = \fraghead(\fragment(\mynodepi))$ and $|\fragment(\mynodepi)| = 1$, therefore, $\mynodepi = \fragtail(\fragment(\mynodepi))$ in $C$ and $C'$.
	Since $\tail = \mynodepi$ in $C'$, the condition holds in $C'$.
	
	\item[\refln{dsm:try:5}.] \label{invproof:st6}
	$\pi$ executes Line~\refln{dsm:try:5}. \\
	In $C$, $\pcpi = \refln{dsm:try:5}$ and $\pcpih = \refln{dsm:try:5}$. \\
	The step sets $\mynodepi.\pred = \mypredpi$ and updates $\pcpi$ and $\pcpih$ to \refln{dsm:try:6}. \\
	{\underline{Condition~\ref{inv:cond1}}:} 
	By Condition~\ref{inv:cond7}, $\mypredpi \in \caln'$ in $C$.
	Since the step sets $\mynodepi.\pred = \mypredpi$, $\nodes[\portpih].\pred \in \caln'$ in $C'$.
	Therefore, the condition holds in $C'$.\\
	{\underline{Condition~\ref{inv:cond54}}:} 
	By Condition~\ref{inv:cond7}, $\mypredpi = \fragtail(\fragment(\mypredpi))$.
	Therefore, $\forall q \in \calp, \nodes[q].\pred \neq \mypredpi$ in $C$.
	It follows that $\forall q \in \calp, \portpih \neq q \limplies \nodes[q].\pred \neq \mypredpi$ in $C'$.
	Again from Condition~\ref{inv:cond7} we observe the following about $\mypredpi$.
	Either $\mypredpi.\pred = \& \token$ or 
	$\exists \pi' \in \Pi, \pi \neq \pi' \wedge \mypredpi = \node_{\pi'} \wedge \pch{\pi'} \in [\refln{dsm:try:5}, \refln{dsm:try:6}] \cup [\refln{dsm:try:16}, \refln{dsm:exit:1}]$ (i.e., $\nodes[\portpih].\pred = \nodes[\porth{\pi'}]$).
	Thus the condition holds in $C'$. \\
	{\underline{Condition~\ref{inv:cond3}}:} 
	By Condition~\ref{inv:cond3}, $|\fragment(\nodepi)| = b_1 \leq k$ and $|\fragment(\nodepi)| = b_2 \leq k$ in $C$.
	By Condition~\ref{inv:cond4}, for a process $\pi'$, it can not be the case that $\node_{\pi'} \in \fragment(\nodepi)$ and $\node_{\pi'} \in \fragment(\mypredpi)$ in $C$.
	Therefore, $b_1 + b_2 \leq k$ in $C$.
	It follows that in $C'$ $|\fragment(\nodepi)| = b_1 + b_2 \leq k$. Therefore, the condition holds in $C'$.\\
	{\underline{Condition~\ref{inv:cond4}}:} 
	$\fragment(\mypredpi) = \fragment(\nodepi)$ in $C'$.
	Therefore, the condition holds in $C'$ as it held and applied to $\fragment(\mypredpi)$ in $C$.\\
	{\underline{Condition~\ref{inv:cond7}}:} 
	Applying the condition to $\fragment(\nodepi)$ in $C$, 
	we get that $\forall \pi' \in \Pi, \pi' \neq \pi \wedge \node_{\pi'} \in \fragment(\nodepi) \limplies \pch{\pi'} \in \{ \refln{dsm:try:6}, \refln{dsm:try:16} \}$, which holds in $C'$.
	We have $\pch{\pi} = \refln{dsm:try:6}$ in $C'$.
	Suppose there is a $\pi'' \in \Pi, \pi'' \neq \pi$ such that $\fraghead(\fragment(\mypredpi)) = \node_{\pi''}$ and $\pc{\pi''} = \refln{dsm:try:5}$ in $C$.
	It follows that $\forall \pi' \in \Pi, \pi' \neq \pi'' \wedge \node_{\pi'} \in \fragment(\node_{\pi''}) \limplies \pch{\pi'} \in \{ \refln{dsm:try:6}, \refln{dsm:try:16} \}$ in $C'$.
	Therefore, the condition holds for $\pi''$ and vacuously for other processes in $C'$.	\\
	{\underline{Condition~\ref{inv:cond9}}:} 
	Since $\nodepi.\pred =\mypredpi$ in $C'$ and invoking the Condition~\ref{inv:cond7} on $C$ and $\mypredpi$,
	it follows that the condition holds in $C'$. \\
	{\underline{Condition~\ref{inv:cond10}}:} 
	This condition holds by an argument similar to Condition~\ref{inv:cond9} as argued above.\\
	{\underline{Condition~\ref{inv:cond11}}:} 
	Suppose $\fraghead(\fragment(\mypredpi)).\pred \in \{ \nil, \& \fail \}$ in $C$.
	It follows that $\exists \pi' \in \Pi, \pi' \neq \pi \wedge \pch{\pi'} = \refln{dsm:try:5}  \wedge \node_{\pi'} = \fraghead(\fragment(\node_{\pi}))$
	$\wedge$ $(\forall \pi'' \in \Pi, (\pi'' \neq \pi' \wedge \node_{\pi''} \in \fragment(\nodepi)) \limplies$ $\pch{\pi''} \in \{ \refln{dsm:try:6}, \refln{dsm:try:16} \})$ in $C'$.
	Therefore, the condition holds in $C'$. \\
	{\underline{Condition~\ref{inv:cond17}}:} 
	If $\fraghead(\fragment(\mypredpi)).\pred \in \{ \nil, \& \fail \}$ in $C$, it follows that $\pi \notin \calq$ in $C'$.
	Therefore, it is easy to see that the condition holds in $C'$.
	If $\fraghead(\fragment(\mypredpi)).\pred \in \{ \& \incs, \& \key \}$ in $C$, it follows that $\pi \in \calq$ in $C'$.
	We have two cases to consider, $|\calq| > 0$ or $|\calq| = 0$ in $C$.
	If $|\calq| > 0$ in $C$, Conditions~\ref{inv:cond17c} and \ref{inv:cond17d} are the only ones affected. 
	It is easy to see from the definition of a fragment that these conditions continue to hold in $C'$, therefore, the whole condition would hold in $C'$.
	Suppose $|\calq| = 0$ and $\mypredpi.\pred = \& \key$ in $C$.
	It follows that $\pi = \pi_1$ according to the ordering defined by the condition.
	Condition~\ref{inv:cond17a} holds in $C'$ since $\pcpih = \refln{dsm:try:6}$ in $C'$.
	Condition~\ref{inv:cond17b} holds in $C'$ since $\mypredpi.\pred = \& \key$ in $C$ and $C'$.
	Applying Condition~\ref{inv:cond10} to $\mypredpi$, and by the fact that $\mypredpi.\pred = \& \key$, it follows that Condition~\ref{inv:cond17g} holds in $C'$.
	Conditions~\ref{inv:cond17c}, \ref{inv:cond17d}, and \ref{inv:cond17e} hold in $C'$ by the definition of $\fragment(\mynodepi)$ and the fact that $\mypredpi.\pred = \& \key$.
	Since $|\calq| = 0$ in $C$, invoking Condition~\ref{inv:cond16} on $C$ we see that Condition~\ref{inv:cond17h} holds in $C'$.
	Lastly, $\forall \pi' \in \Pi, \pi' \neq \pi \wedge \node_{\pi'} \neq \nil \wedge \node_{\pi'}.\pred \in \caln'$ 
	we have $\pch{\pi'} \in \{ \refln{dsm:try:6}, \refln{dsm:try:16} \}$ from Condition~\ref{inv:cond1}.
	Since $|\calq| = 0$ in $C$, $\fraghead(\fragment(\node_{\pi'})).\pred \in \{ \nil, \& \fail \}$ by definition of $\calq$.
	Therefore, by Condition~\ref{inv:cond11}, $\node_{\pi'}.\cssignal = \absent$ in $C$, which continues to hold in $C'$.
	
	\item[\refln{dsm:try:6}.] \label{invproof:st7}
	$\pi$ executes Line~\refln{dsm:try:6}. \\
	In $C$, $\pcpi = \refln{dsm:try:6}$ and $\pcpih = \refln{dsm:try:6}$. \\
	The step executes $\mynodepi.\nonnilsignal.\signal()$ so that $\mynodepi.\nonnilsignal = \present$ as a result of the step.
	It also updates $\pcpi$ and $\pcpih$ to \refln{dsm:try:16}. \\
	{\underline{Condition~\ref{inv:cond14}}:} 
	As discussed above, $\mynodepi.\nonnilsignal = \present$ and $\pcpih = \refln{dsm:try:16}$ in $C'$.
	Therefore, the condition holds in $C'$.
	
	\item[\refln{dsm:try:7}.] \label{invproof:st8}
	$\pi$ executes Line~\refln{dsm:try:7}. \\
	In $C$ $\pcpi = \refln{dsm:try:7}$ and $\pcpih \in [ \refln{dsm:try:4}, \refln{dsm:try:6}] \cup [\refln{dsm:try:16}, \refln{dsm:exit:3}]$. \\
	This step changes $\pcpi$ to $\refln{dsm:try:8}$. \\
	The step does not affect any condition, so the invariant continues to hold in $C'$.
	
	\item[\refln{dsm:try:8}.] \label{invproof:st9}
	$\pi$ executes Line~\refln{dsm:try:8}. \\
	In $C$ $\pcpi = \refln{dsm:try:7}$ and $\pcpih \in [ \refln{dsm:try:4}, \refln{dsm:try:6}] \cup [\refln{dsm:try:16}, \refln{dsm:exit:3}]$. \\
	This step sets $\mynodepi$ to $\nodes[\portpih]$ and changes $\pcpi$ to $\refln{dsm:try:8}$. \\
	{\underline{Condition~\ref{inv:cond2}}:} 
	Since $\mynodepi = \nodes[\portpih]$ in $C'$, the condition holds in $C'$.
	
	\item[\refln{dsm:try:9} (a).] \label{invproof:st10}
	$\pi$ executes Line~\refln{dsm:try:8} when $\pcpih \in \{ \refln{dsm:try:4}, \refln{dsm:try:5} \}$. \\
	In $C$, $\pcpi = \refln{dsm:try:9}$ and $\pcpih \in \{ \refln{dsm:try:4}, \refln{dsm:try:5} \}$.
	By Condition~\ref{inv:cond1}, $\nodepi.\pred \in \{ \nil, \& \fail \}$. \\
	This step checks if $\nodepi.\pred = \nil$ and sets it to $\& \fail$, if so.
	It then changes $\pcpi$ to \refln{dsm:try:10}. \\
	{\underline{Condition~\ref{inv:cond1}}:} 
	By the step, $\nodepi.\pred = \& \fail$. Therefore, the condition holds in $C'$. \\
	{\underline{Condition~\ref{inv:cond8}}:} 
	Since $\pcpih \in \{ \refln{dsm:try:4}, \refln{dsm:try:5} \}$, $\nodepi.\pred = \& \fail$, and $\pcpi = \refln{dsm:try:10}$, the condition holds in $C'$. \\
	{\underline{Condition~\ref{inv:cond52}}:} 
	The condition holds by the same argument as for Condition~\ref{inv:cond8} above.
			
	\item[\refln{dsm:try:9} (b).] \label{invproof:st11}
	$\pi$ executes Line~\refln{dsm:try:8} when $\pcpih \notin \{ \refln{dsm:try:4}, \refln{dsm:try:5} \}$. \\
	In $C$, $\pcpi = \refln{dsm:try:9}$ and $\pcpih \notin \{ \refln{dsm:try:4}, \refln{dsm:try:5} \}$.
	By Condition~\ref{inv:cond1}, $\nodepi.\pred \notin \{ \nil, \& \fail \}$. \\
	The \ifcode condition at Line~\refln{dsm:try:9} is not met, hence the step changes $\pcpi$ to \refln{dsm:try:10}. \\
	The step does not affect any condition, so the invariant continues to hold in $C'$.

	\item[\refln{dsm:try:10}.] \label{invproof:st12}
	$\pi$ executes Line~\refln{dsm:try:10}. \\
	In $C$ $\pcpi = \refln{dsm:try:10}$. \\
	This step sets $\mypredpi$ to $\nodes[\portpih].\pred$ and changes $\pcpi$ to $\refln{dsm:try:11}$. \\
	{\underline{Condition~\ref{inv:cond2}}:} 
	Since $\mypredpi = \nodes[\portpih].\pred$ in $C'$, the condition holds in $C'$.
	
	\item[\refln{dsm:try:11} (a).] \label{invproof:st13}
	$\pi$ executes Line~\refln{dsm:try:11} when $\pcpih = \refln{dsm:exit:1}$. \\
	In $C$ $\pcpi = \refln{dsm:try:11}$ and $\pcpih = \refln{dsm:exit:1}$. 
	By Condition~\ref{inv:cond1}, $\nodepi.\pred = \& \incs$.
	By Condition~\ref{inv:cond2}, $\mypredpi = \nodepi.\pred$ in $C$.\\
	In this step the \ifcode condition is met, therefore, $\pi$ moves to the CS and updates $\pcpi$ to \refln{dsm:exit:1}.\\
	The step does not affect any condition, so the invariant continues to hold in $C'$.
	
	\item[\refln{dsm:try:11} (b).] \label{invproof:st14}
	$\pi$ executes Line~\refln{dsm:try:11} when $\pcpih \neq \refln{dsm:exit:1}$. \\
	In $C$ $\pcpi = \refln{dsm:try:11}$ and $\pcpih \neq \refln{dsm:exit:1}$. 
	By Condition~\ref{inv:cond1}, $\nodepi.\pred \neq \& \incs$.
	By Condition~\ref{inv:cond2}, $\mypredpi = \nodepi.\pred$ in $C$.\\
	In this step the \ifcode condition is not met, therefore, $\pi$ updates $\pcpi$ to \refln{dsm:try:12}.\\
	The step does not affect any condition, so the invariant continues to hold in $C'$.
		
	\item[\refln{dsm:try:12} (a).] \label{invproof:st15}
	$\pi$ executes Line~\refln{dsm:try:12} when $\pcpih \in \{ \refln{dsm:exit:2}, \refln{dsm:exit:3} \}$. \\
	In $C$ $\pcpi = \refln{dsm:try:12}$ and $\pcpih \in \{ \refln{dsm:exit:2}, \refln{dsm:exit:3} \}$. 
	By Condition~\ref{inv:cond1}, $\nodepi.\pred = \& \token$.
	By Condition~\ref{inv:cond2}, $\mypredpi = \nodepi.\pred$ in $C$.\\
	In this step the \ifcode condition is met, therefore, $\pi$ updates $\pcpi$ to \refln{dsm:try:13}.\\
	The step does not affect any condition, so the invariant continues to hold in $C'$.

	\item[\refln{dsm:try:12} (b).] \label{invproof:st16}
	$\pi$ executes Line~\refln{dsm:try:12} when $\pcpih \notin \{ \refln{dsm:exit:2}, \refln{dsm:exit:3} \}$. \\
	In $C$ $\pcpi = \refln{dsm:try:12}$ and $\pcpih \notin \{ \refln{dsm:exit:2}, \refln{dsm:exit:3} \}$. 
	By Condition~\ref{inv:cond1}, $\nodepi.\pred \neq \& \token$.
	By Condition~\ref{inv:cond2}, $\mypredpi = \nodepi.\pred$ in $C$.\\
	In this step the \ifcode condition is not met, therefore, $\pi$ updates $\pcpi$ to \refln{dsm:try:14}.\\
	The step does not affect any condition, so the invariant continues to hold in $C'$.
	
	\item[\refln{dsm:try:13}.] \label{invproof:st17}
	$\pi$ executes Line~\refln{dsm:try:13}. \\
	In $C$ $\pcpi = \refln{dsm:try:13}$ and $\pcpih \in \{ \refln{dsm:exit:2}, \refln{dsm:exit:3} \}$. \\
	$\pi$ sets $\pcpih = \refln{dsm:exit:2}$, then executes Lines~\refln{dsm:exit:2} and \refln{dsm:exit:3} as part of the Try section,
	and then changes $\pcpi$ to \refln{dsm:try:1} and $\pcpih$ to \refln{dsm:try:2}. \\
	For the correctness of the invariant, refer to the induction steps for Lines~\refln{dsm:exit:2} and \refln{dsm:exit:3},
	since the execution of the step is same as executing the two lines and then executing a``\goto Line~\refln{dsm:try:1}''.
	
	\item[\refln{dsm:try:14}.] \label{invproof:st18}
	$\pi$ executes Line~\refln{dsm:try:14}. \\
	In $C$ $\pcpi = \refln{dsm:try:14}$.
	By Condition~\ref{inv:cond47}, $\pcpih \in [\refln{dsm:try:4}, \refln{dsm:try:6}] \cup [\refln{dsm:try:16}, \refln{dsm:try:17}]$. \\
	The step executes $\mynodepi.\nonnilsignal.\signal()$ so that $\mynodepi.\nonnilsignal = \present$ as a result of the step.
	It also updates $\pcpi$ to \refln{dsm:try:15}. \\
	{\underline{Condition~\ref{inv:cond14}}:} 
	As discussed above, $\mynodepi.\nonnilsignal = \present$ and $\pcpi = \refln{dsm:try:15}$ in $C'$.
	Therefore, the condition holds in $C'$.
	
	\item[\refln{dsm:try:15}.] \label{invproof:st19}
	$\pi$ executes Line~\refln{dsm:try:15}. \\
	In $C$ $\pcpi = \refln{dsm:try:15}$.
	By Condition~\ref{inv:cond47}, $\pcpih \in [\refln{dsm:try:4}, \refln{dsm:try:6}] \cup [\refln{dsm:try:16}, \refln{dsm:try:17}]$ in $C$. \\
	The step executes the Try section of $\rlock$ in order to access the Critical Section of $\rlock$ starting at Line~\refln{dsm:rep:1}.
	Since the $\rlock$ is assumed to be satisfying Starvation Freedom, $\pi$ reaches Line~\refln{dsm:rep:1} eventually.
	Hence, the step changes $\pcpi$ to \refln{dsm:rep:1}.
	Note, we can use Golab and Ramaraju's \cite{Golab:rmutex} read-write based recoverable extension of Yang and Anderson's lock (see Section 3.2 in \cite{Golab:rmutex}) as $\rlock$ for this purpose. \\
	The step does not affect any condition, so the invariant continues to hold in $C'$.
		
	\item[\refln{dsm:try:16}.] \label{invproof:st20}
	$\pi$ executes Line~\refln{dsm:try:16} \\
	In $C$, $\pcpi = \refln{dsm:try:16}$ and $\pcpih = \refln{dsm:try:16}$. 
	We have $\nodepi \neq \nil$ and $\nodepi.\pred \in \caln'$ by Condition~\ref{inv:cond1} in $C$.
	By Condition~\ref{inv:cond10}, either $\nodepi.\pred.\cssignal = \present$, 
	or $\exists \pi' \in \Pi, \pi \neq \pi' \wedge \node_{\pi'} = \nodepi.\pred \wedge \pch{\pi'} \in \{ \refln{dsm:try:5}, \refln{dsm:try:6} \} \cup [ \refln{dsm:try:16}, \refln{dsm:exit:2}]$ in $C$.\\
	The step executes $\mypredpi.\cssignal.\wait()$ so that the procedure call returns when $\mypredpi.\cssignal = \present$.
	The step also updates $\pcpi$ and $\pcpih$ to \refln{dsm:try:17} when it returns from the procedure call. \\
	{\underline{Condition~\ref{inv:cond17}}:} 
	Suppose $\nodepi.\pred.\cssignal = \present$ in $C$.
	By Condition~\ref{inv:cond17i}, $\pi = \pi_1$ in $C$.
	It follows that the condition continues to hold in $C'$ as it held in $C$.
	Therefore, assume $\nodepi.\pred.\cssignal \neq \present$ in $C$.
	Since $\mypredpi.\cssignal.\wait()$ returns and the step completes, by the specification of the Signal object, $\mypredpi.\cssignal = \present$ in $C'$.
	By Condition~\ref{inv:cond57}, $\mypredpi.\pred = \& \token$ in $C'$.
	Therefore, $\fraghead(\fragment(\nodepi)).\pred = \& \token$ in $C'$.
	It follows that $\pi \in \calq$ in $C'$ by the definition of $\calq$.
	By Condition~\ref{inv:cond4}, $\forall \pi' \in \Pi, (\pi \neq \pi' \wedge \fraghead(\fragment(\node_{\pi'})).\pred = \& \key \wedge \pch{\pi'} \notin [\refln{dsm:exit:2}, \refln{dsm:exit:3}])$ 
	$\limplies$ $\node_{\pi'} \in \fragment(\nodepi)$.
	We now proceed to prove that Condition~\ref{inv:cond17} holds in $C'$ as follows.
	We have $\mypredpi.\pred = \& \token$ and $\mypredpi.\cssignal = \present$ in $C'$.
	Therefore, $\forall \pi' \in \Pi, \node_{\pi'} = \mypredpi \limplies \pch{\pi'} = \refln{dsm:exit:3}$ in $C'$ by Condition~\ref{inv:cond57}.
	Since $\nodepi.\pred.\pred = \& \token$, Condition~\ref{inv:cond17e} holds and it follows that $\pi = \pi_1$.
	Since $\pcpih = \refln{dsm:try:17}$, Condition~\ref{inv:cond17a} holds in $C'$.
	We also note from the above that Conditions~\ref{inv:cond17b} and \ref{inv:cond17g} hold in $C'$.
	By the definition of $\fragment(\nodepi)$ and Condition~\ref{inv:cond4} it follows that Conditions~\ref{inv:cond17c} and \ref{inv:cond17d} hold.
	For any process $\pi'$, if $\node_{\pi'}.\pred = \& \token$, then by Condition~\ref{inv:cond1}, $\pch{\pi'} \in \{ \refln{dsm:exit:2}, \refln{dsm:exit:3} \}$.
	If $\node_{\pi'} \neq \fraghead(\fragment(\node_{\pi'}))$, then by Condition~\ref{inv:cond4}, $\pch{\pi'} \in \{ \refln{dsm:try:6}, \refln{dsm:try:16} \}$.
	If $\node_{\pi'}.\pred \in \{ \nil, \& \fail \}$, then by Condition~\ref{inv:cond1}, $\pch{\pi'} \in \{ \refln{dsm:try:4}, \refln{dsm:try:5} \}$.
	By Condition~\ref{inv:cond4} there is only one fragment whose head node has its $\pred$ pointer set to $\& \token$.
	It follows that Condition~\ref{inv:cond17h} holds from the above.
	From Conditions~\ref{inv:cond57}, \ref{inv:cond7}, \ref{inv:cond10}, and \ref{inv:cond11} it follows that Condition~\ref{inv:cond17i} holds in $C'$.
	Therefore, the entire condition holds in $C'$.
	
	\item[\refln{dsm:try:17}.] \label{invproof:st21}
	$\pi$ executes Line~\refln{dsm:try:17}. \\
	In $C$ $\pcpi = \refln{dsm:try:17}$. \\
	The step sets $\nodepi.\pred = \& \incs$, updates $\pcpi$ and $\pcpih$ to \refln{dsm:exit:1}, and goes to the CS. \\
	{\underline{Condition~\ref{inv:cond1}}:} 
	As argued above, $\nodepi.\pred = \& \incs$ and $\pcpih = \refln{dsm:exit:1}$ in $C'$. 
	Therefore, the condition holds in $C'$. \\
	{\underline{Condition~\ref{inv:cond7}}:} 
	Suppose there is a $\pi'' \in \Pi, \pi'' \neq \pi$ such that $\mypred_{\pi''} = \nodepi$ in $C$.
	It follows that $\nodepi.\pred = \& \incs$ and $\pcpih = \refln{dsm:exit:1}$ in $C'$.
	Therefore, the condition holds for $\pi''$ and vacuously for other processes in $C'$. 
	
	\item[\refln{dsm:exit:1}.] \label{invproof:st22}
	$\pi$ executes Line~\refln{dsm:exit:1}. \\
	In $C$ $\pcpi = \refln{dsm:exit:1}$. \\
	The step sets $\nodepi.\pred = \& \token$, and updates $\pcpi$ and $\pcpih$ to \refln{dsm:exit:2}. \\
	{\underline{Condition~\ref{inv:cond1}}:} 
	As argued above, $\nodepi.\pred = \& \token$ and $\pcpih = \refln{dsm:exit:2}$ in $C'$. 
	Therefore, the condition holds in $C'$. \\
	{\underline{Condition~\ref{inv:cond7}}:} 
	Suppose there is a $\pi'' \in \Pi, \pi'' \neq \pi$ such that $\mypred_{\pi''} = \nodepi$ in $C$.
	It follows that $\nodepi.\pred = \& \token$ and $\pcpih = \refln{dsm:exit:2}$ in $C'$.
	Therefore, the condition holds for $\pi''$ and vacuously for other processes in $C'$. \\
	{\underline{Condition~\ref{inv:cond16}}:} 
	If $|\calq| > 1$ in $C$, the condition holds vacuously in $C'$.
	If $\tail \neq \nodepi$, then by Condition~\ref{inv:cond4} and \ref{inv:cond15}, the condition holds in $C'$.
	Otherwise, suppose $|\calq| = 1$ and $\tail = \nodepi$ in $C$.
	$\tail.\pred = \& \token$ in $C'$ by the step.
	Therefore, the condition holds in $C'$. \\
	{\underline{Condition~\ref{inv:cond17}}:} 
	Applying the condition to $\pi$ in $C$, $\pi = \pi_1$ according to the ordering of the condition.
	If $|\calq| = 1$, the condition holds vacuously in $C'$.
	Therefore, suppose $|\calq|>1$ in $C$.
	There is a process $\pi' \in \Pi$ such that $\pi' = \pi_2$ according to the ordering and $\mypred_{\pi'} = \nodepi$ in $C$.
	By Condition~\ref{inv:cond17c}, $\pch{\pi'} \in \{ \refln{dsm:try:6}, \refln{dsm:try:16} \}$ in $C$ which continues to hold in $C'$.
	Therefore, it follows that Conditions~\ref{inv:cond17a}, \ref{inv:cond17b}, and \ref{inv:cond17g} hold for $\pi'$ in $C'$.
	It is easy to see that the rest of the sub-conditions hold in $C'$ as a result of the step.
	Therefore, the condition holds in $C'$.
	
	\item[\refln{dsm:exit:2}.] \label{invproof:st23}
	$\pi$ executes Line~\refln{dsm:exit:2}. \\
	In $C$ $\pcpi = \refln{dsm:exit:2}$ and $\pcpih = \refln{dsm:exit:2}$. \\
	The step executes $\mynodepi.\cssignal.\signal()$ so that $\mynodepi.\cssignal = \present$ as a result of the step.
	It also updates $\pcpi$ and $\pcpih$ to \refln{dsm:exit:3}. \\
	{\underline{Condition~\ref{inv:cond14}}:} 
	As discussed above, $\mynodepi.\cssignal = \present$ and $\pcpih = \refln{dsm:exit:3}$ in $C'$.
	Therefore, the condition holds in $C'$.
	
	\item[\refln{dsm:exit:3}.] \label{invproof:st24}
	$\pi$ executes Line~\refln{dsm:exit:3}. \\
	In $C$ $\pcpi = \refln{dsm:exit:3}$ and $\pcpih = \refln{dsm:exit:3}$. \\
	The step sets $\nodes[\portpih]$ to $\nil$, sets $\pcpi$ to \refln{dsm:try:1}, and $\pcpih$ to \refln{dsm:try:2}.\\
	{\underline{Condition~\ref{inv:cond1}}:} 
	As argued above, $\nodes[\portpih] = \nil$ and $\pcpih = \refln{dsm:try:2}$ in $C'$.
	Therefore, the condition holds in $C'$.\\
	{\underline{Condition~\ref{inv:cond57}}:} 
	By Condition~\ref{inv:cond54} implies that $\forall \pi' \in \Pi, \pi' \neq \pi \limplies \node_{\pi'} \neq \nodepi$ in $C$.
	By Condition~\ref{inv:cond1}, $\nodepi.\pred = \& \token$ in $C$ which holds in $C'$.
	By Condition~\ref{inv:cond14}, $\nodepi.\cssignal = \present$ and $\nodepi.\nonnilsignal = \present$ in $C$, which holds in $C'$.
	Since $\nodepi = \nil$ in $C'$, it follows that for the $\qnode$ pointed to by $\nodepi$ in $C$ the condition holds in $C'$.\\
	{\underline{Condition~\ref{inv:cond7}}:} 
	Suppose there is a $\pi'' \in \Pi, \pi'' \neq \pi$ such that $\mypredpi = \node_{\pi''}$ and $\pc{\pi''} = \refln{dsm:try:5}$ in $C$.
	It follows that $\forall p' \in \calp, \nodes[p'] \neq \mypredpi$ in $C'$.
	Therefore, the condition holds for $\pi''$ and vacuously for other processes in $C'$.
	
	\item[\refln{dsm:rep:1} (a).] \label{invproof:st25}
	$\pi$ executes Line~\refln{dsm:rep:1} when $\pcpih \in \{ \refln{dsm:try:4}, \refln{dsm:try:5} \}$. \\
	In $C$, $\pcpi = \refln{dsm:rep:1}$ and $\pcpih \in \{ \refln{dsm:try:4}, \refln{dsm:try:5} \}$.
	By Condition~\ref{inv:cond8}, $\nodepi.\pred = \& \fail$.
	By Condition~\ref{inv:cond2}, $\mypredpi = \nodepi.\pred$ in $C$. \\
	The \ifcode condition at Line~\refln{dsm:rep:1} is not met since $\nodepi.\pred = \& \fail$, therefore, $\pcpi$ changes to \refln{dsm:rep:2}. \\
	{\underline{Condition~\ref{inv:cond47}}:} 
	Since $\pcpi = \refln{dsm:rep:2}$ and $\pcpih \in \{ \refln{dsm:try:4}, \refln{dsm:try:5} \}$ in $C'$, the condition is satisfied. \\
	{\underline{Condition~\ref{inv:cond33}}:} 
	Suppose $\fraghead(\fragment(\tail)).\pred \in \{ \& \incs, \& \token \}$ in $C$.
	We have $\nodepi.\pred = \& \fail$, i.e., $\fraghead(\fragment(\nodepi)).\pred = \& \fail$.
	It follows that $\fragment(\nodepi) \neq \fragment(\tail)$ in $C$, which holds in $C'$.
	Therefore, the condition holds in $C'$.

	\item[\refln{dsm:rep:1} (b).] \label{invproof:st26}
	$\pi$ executes Line~\refln{dsm:rep:1} when $\pcpih \notin \{ \refln{dsm:try:4}, \refln{dsm:try:5} \}$. \\
	In $C$, $\pcpi = \refln{dsm:rep:1}$ and $\pcpih \notin \{ \refln{dsm:try:4}, \refln{dsm:try:5} \}$.
	By Condition~\ref{inv:cond47}, $\pcpih \in \{ \refln{dsm:try:6} \} \cup \{ \refln{dsm:try:16}, \refln{dsm:try:17} \}$.
	By Condition~\ref{inv:cond8}, $\nodepi.\pred \in \caln'$.
	By Condition~\ref{inv:cond2}, $\mypredpi = \nodepi.\pred$ in $C$. \\
	The \ifcode condition at Line~\refln{dsm:rep:1} is met since $\nodepi.\pred \neq \& \fail$, therefore, 
	$\pi$ executes the Exit section of $\rlock$.
	$\pi$ then changes $\pcpi$ and $\pcpih$ to \refln{dsm:try:16}. \\
	{\underline{Condition~\ref{inv:cond14}}:} 
	Applying Condition~\ref{inv:cond14} to $C$, we have $\nodepi.\nonnilsignal = \present$ since $\pcpi = \refln{dsm:rep:1}$ in $C$.
	Therefore, the condition holds in $C'$.
	
	\item[\refln{dsm:rep:2}.] \label{invproof:st27}
	$\pi$ executes Line~\refln{dsm:rep:2}. \\
	In $C$ $\pcpi = \refln{dsm:rep:2}$. \\
	The step initializes $\tailpi$ to $\tail$, the set $\Vpi$ and $\Epi$ as empty sets, $\tailpathpi$ to $\nil$, and $\headpathpi$ to $\nil$.
	Since the invariant requires that $\idxpi$ be between $[0, k]$ when $\pcpi = \refln{dsm:rep:3}$, 
	we assume that the step implicitly initializes $\idxpi$ to 0, although not noted in the code.
	Finally, the step sets $\pcpi$ to $\refln{dsm:rep:3}$.\\
	{\underline{Condition~\ref{inv:cond22}}:} 
	Follows immediately from the description of the step above.\\
	{\underline{Condition~\ref{inv:cond55}}:} 
	This condition follows immediately from Condition~\ref{inv:cond15}. \\
	{\underline{Condition~\ref{inv:cond32}}:} 
	Since $(\Vpi, \Epi)$ are initialized to be empty sets, the condition follows. \\
	{\underline{Condition~\ref{inv:cond28}}:} 
	Consider the fragments formed from the nodes pointed to by the cells in the $\nodes$ array.
	If all the fragments have the $\pred$ pointer of their head node to be in $\{ \nil, \& \fail \}$,
	then by definition of $\calq$ it is an empty set. Hence, Condition~\ref{inv:cond28d} holds.
	Otherwise, there is a fragment whose head node has its $\pred$ pointer to be one of $\{ \& \incs, \& \token \}$.
	It follows that Condition~\ref{inv:cond28c} holds. \\
	{\underline{Condition~\ref{inv:cond58}}:} 
	By Condition~\ref{inv:cond53} it follows that $\forall \node_{\pi'} \in \fragment(\nodepi), \exists i \in [0, k-1], \nodes[i] = \node_{\pi'}$.
	Therefore, the condition holds in $C'$. \\
	{\underline{Condition~\ref{inv:cond59}}:} 
	Since $(\Vpi, \Epi)$ is an empty set in $C'$, the condition holds vacuously.\\
	{\underline{Condition~\ref{inv:cond25}}:} 
	All the conditions holds vacuously since the graph is empty and $\idxpi = 0$.\\
	{\underline{Condition~\ref{inv:cond35}}:} 
	Suppose $\fraghead(\fragment(\tailpi)).\pred \in \{ \& \incs, \& \token \}$.
	Since $\nodepi.\pred = \& \fail$ in $C'$, it follows that the condition holds in $C'$.\\
	{\underline{Condition~\ref{inv:cond56}}:} 
	Immediate from the description of the step above.
	
	\item[\refln{dsm:rep:3} (a).] \label{invproof:st28}
	$\pi$ executes Line~\refln{dsm:rep:3} when $\idxpi < k$. \\
	In $C$, $\pcpi = \refln{dsm:rep:3}$ and $\idxpi < k$. \\
	In this step the correctness condition of the \forcode loop (i.e., $\idxpi \in [0, k-1]$) evaluates to \true\ and $\pcpi$ is updated to \refln{dsm:rep:4}. \\
	Since no shared variables are changed and no condition of the invariant is affected by the step, 
	all the conditions continue to hold in $C$ as they held in $C'$.
		
	\item[\refln{dsm:rep:3} (b).] \label{invproof:st29}
	$\pi$ executes Line~\refln{dsm:rep:3} when $\idxpi = k$. \\
	In $C$, $\pcpi = \refln{dsm:rep:3}$ and $\idxpi = k$. \\
	In this step the correctness condition of the \forcode loop (i.e., $\idxpi \in [0, k-1]$) evaluates to \false\ and $\pcpi$ is updated to \refln{dsm:rep:10}. \\
	{\underline{Condition~\ref{inv:cond35}}:} 
	From Condition~\ref{inv:cond55} it follows that either $\tailpi \in \Vpi \vee \tailpi.\pred = \& \token$ in $C'$.
	In either case the condition holds in $C'$. \\
	{\underline{Condition~\ref{inv:cond12}}:} 
	This follows immediately from Conditions~\ref{inv:cond32}, \ref{inv:cond58}, \ref{inv:cond25} and the definition of fragment.\\
	{\underline{Condition~\ref{inv:cond26}}:} 
	Follows immediately from Condition~\ref{inv:cond28} and the fact that $\idxpi = k$.\\
	{\underline{Condition~\ref{inv:cond27}}:} 
	Follows from Conditions~\ref{inv:cond32}, \ref{inv:cond28}, \ref{inv:cond59}, \ref{inv:cond25}, the definition of fragment, and the fact that $\nodepi.\pred = \& \fail$.
	
	\item[\refln{dsm:rep:4}.] \label{invproof:st30}
	$\pi$ executes Line~\refln{dsm:rep:4}. \\
	In $C$, $\pcpi = \refln{dsm:rep:4}$.
	By Condition~\ref{inv:cond22}, $\idxpi \in [0, k - 1]$. \\
	In this step, $\pi$ sets $\curpi$ to $\nodes[\idxpi]$.
	It then updates $\pcpi$ to \refln{dsm:rep:6}. \\
	{\underline{Condition~\ref{inv:cond13}}:}
	$\nodes[\idxpi]$ either has the value $\nil$ or it does not.
	If it is the first case, we are done.
	In the second the condition follows from Condition~\ref{inv:cond54}.
	
	\item[\refln{dsm:rep:5} (a).] \label{invproof:st31}
	$\pi$ executes Line~\refln{dsm:rep:5} when $\curpi = \nil$. \\
	In $C$, $\pcpi = \refln{dsm:rep:5}$. and $\curpi = \nil$. \\
	Since the \ifcode is met, $\pi$ is required to break the current iteration of the loop and start with its next iteration.
	Therefore, $\pi$ increments $\idxpi$ by $1$ and changes $\pcpi$ to \refln{dsm:rep:3}. \\
	Since no shared variables are changed and no condition of the invariant is affected by the step, 
	all the conditions continue to hold in $C$ as they held in $C'$.
	
	\item[\refln{dsm:rep:5} (b).] \label{invproof:st32}
	$\pi$ executes Line~\refln{dsm:rep:5} when $\curpi \neq \nil$. \\
	In $C$, $\pcpi = \refln{dsm:rep:5}$ and $\curpi \neq \nil$. \\
	Since the \ifcode is not met, $\pi$ changes $\pcpi$ to \refln{dsm:rep:6}. \\
	{\underline{Condition~\ref{inv:cond13}}:} The condition holds in $C'$ as it held in $C$.
	
	\item[\refln{dsm:rep:6}.] \label{invproof:st33}
	$\pi$ executes Line~\refln{dsm:rep:6}. \\
	In $C$ $\pcpi = \refln{dsm:rep:6}$. \\
	The step executes $\curpi.\nonnilsignal.\wait()$ so that the procedure call returns when $\curpi.\nonnilsignal = \present$.
	The step also updates $\pcpi$ to \refln{dsm:rep:7} when it returns from the procedure call. \\
	{\underline{Condition~\ref{inv:cond13}}:}
	Since $\curpi.\nonnilsignal = \present$ as a result of the step, by Condition~\ref{inv:cond57}, $\curpi.\pred \in \{ \& \fail, \& \incs, \& \token \}$ in $C'$.
	Therefore, the condition holds in $C'$.
	
	\item[\refln{dsm:rep:7}.] \label{invproof:st34}
	$\pi$ executes Line~\refln{dsm:rep:7}. \\
	In $C$ $\pcpi = \refln{dsm:rep:7}$. \\
	The step sets $\curpredpi$ to $\curpi.\pred$ and updates $\pcpi$ to \refln{dsm:rep:8}.\\
	Since no shared variables are changed and no condition of the invariant is affected by the step, 
	all the conditions continue to hold in $C$ as they held in $C'$.
		
	\item[\refln{dsm:rep:8} (a).] \label{invproof:st35}
	$\pi$ executes Line~\refln{dsm:rep:8} when $\curpredpi \in \{ \& \fail, \& \incs, \& \token \}$. \\
	In $C$, $\pcpi = \refln{dsm:rep:8}$ and $\curpredpi \in \{ \& \fail, \& \incs, \& \token \}$. \\
	The \ifcode condition at Line~\refln{dsm:rep:8} is met, therefore, the step adds $\curpi$ to the set $\Vpi$.
	It then increments $\idxpi$ by $1$ and updates $\pcpi$ to \refln{dsm:rep:3}. \\
	{\underline{Condition~\ref{inv:cond32}}:}
	Since the step adds only a vertex to the graph, the condition remains unaffected by the step. \\
	{\underline{Condition~\ref{inv:cond28}}:} 
	If $\curpi.\pred = \& \incs$, then Condition~\ref{inv:cond28b} is satisfied by the addition of $\curpi$ to $\Vpi$.
	Otherwise, the condition holds as it held in $C'$. \\
	{\underline{Condition~\ref{inv:cond59}}:}
	If $\curpi.\pred = \& \fail$, then the condition holds vacuously.
	Otherwise, $\curpi.\pred \in \{ \& \incs, \& \token \}$ and it follows that the condition holds in $C'$. \\
	{\underline{Condition~\ref{inv:cond25}}:}
	All sub-conditions are easy to argue, hence it follows that the condition holds in $C'$.
	
	\item[\refln{dsm:rep:8} (b).] \label{invproof:st36}
	$\pi$ executes Line~\refln{dsm:rep:8} when $\curpredpi \notin \{ \& \fail, \& \incs, \& \token \}$. \\
	In $C$, $\pcpi = \refln{dsm:rep:8}$ and $\curpredpi \notin \{ \& \fail, \& \incs, \& \token \}$. \\
	The \ifcode condition at Line~\refln{dsm:rep:8} is not met, therefore, $\pi$ updates $\pcpi$ to \refln{dsm:rep:9}. \\
	{\underline{Condition~\ref{inv:cond35}}:}
	Since $\curpredpi \notin \{ \& \fail, \& \incs, \& \token \}$, $\curpredpi$ was initialized from $\curpi.\pred$ at Line~\refln{dsm:rep:7}.
	It follows that $\curpredpi \in \caln'$.
	Therefore the condition holds in $C'$.
	
	\item[\refln{dsm:rep:9}.] \label{invproof:st37}
	$\pi$ executes Line~\refln{dsm:rep:9}. \\
	In $C$ $\pcpi = \refln{dsm:rep:9}$. \\
	The step adds the elements $\curpi$ and $\curpredpi$ to the set $\Vpi$ and the edge $(\curpi, \curpredpi)$ to the set $\Epi$.
	It then increments $\idxpi$ by $1$ and updates $\pcpi$ to \refln{dsm:rep:3}. \\
	{\underline{Condition~\ref{inv:cond32}}:}
	If $\curpi.\pred \notin \Vpi$ in $C$, then the condition holds in $C'$.
	Hence, assume $\curpi.\pred \in \Vpi$ in $C$ (note, $\curpi.\pred = \curpredpi$ in $C$).
	We have to argue that after the addition of the edge $(\curpi, \curpredpi)$ in $\Epi$, 
	the graph $(\Vpi, \Epi)$ still remains directed and acyclic and the maximal paths in it remain disjoint in $C'$.
	During the configuration $C$, let $\sigma_1$ be the path in the graph $(\Vpi, \Epi)$ containing $\curpi$ and $\sigma_2$ be the path containing $\curpredpi$.
	
	If $\sigma_1 \neq \sigma_2$, then the graph $(\Vpi, \Epi)$ continues to be directed and acyclic in $C'$.
	We argue that the maximal paths are disjoint as follows.
	Suppose for a contradiction	that after adding the edge $(\curpi, \curpredpi)$ there are two maximal paths $\sigma$ and $\sigma'$ that are not disjoint.
	It follows that this situation arises due to the addition of the edge $(\curpi, \curpredpi)$, hence $\sigma$ and $\sigma'$ either share $\curpi$ or $\curpredpi$.
	Suppose they share $\curpi$ as a common vertex.
	In $C$ there is an edge $(\curpi, u) \in \Epi$ (and therefore in the path $\sigma_1$) such that $u \neq \curpredpi$.
	Applying Condition~\ref{inv:cond25i} to $C$, $\curpi.\pred = u$ or $\curpi.\pred \in \{ \& \incs, \& \key \}$ which is impossible since $\curpi.\pred = \curpredpi$.
	Hence, assume that they share $\curpredpi$ as a common vertex.
	It follows that there is an edge $(v, \curpredpi) \in \Epi$ appearing in the path $\sigma_2$ in the configuration $C$.
	By Condition~\ref{inv:cond25c}, $\exists \idx' \in [0, \idxpi - 1], v = \nodes[\idx']$ or $v.\pred = \& \key \wedge \forall p' \in \calp, \nodes[p'] \neq v$.
	If $\exists \idx' \in [0, \idxpi - 1], v = \nodes[\idx']$, then $\nodes[\idx'].\pred = \nodes[\idxpi].\pred$, 
	which contradicts Condition~\ref{inv:cond54}.
	Otherwise, by Condition~\ref{inv:cond25h}, $\forall i' \in [0, k-1], \nodes[i'].\pred \neq \curpredpi$, a contradiction (since $\nodes[\idxpi].\pred = \curpredpi$ in $C$).
	Hence, it holds that if $\sigma_1 \neq \sigma_2$ in $C$, the maximal paths are disjoint in the graph in $C'$.
	
	Otherwise, $\sigma_1 = \sigma_2$.
	It follows that $\rear(\sigma_1) = \rear(\sigma_2) = \curpredpi$ and $\front(\sigma_1) = \front(\sigma_2) = \curpi$ (i.e., there is a path from $\curpredpi$ to $\curpi$) in $C$.
	Applying Condition~\ref{inv:cond25i} inductively we see that $\curpredpi.\pred \neq \& \key$, otherwise it would imply $\curpi.\pred = \& \key$.
	It follows by the contrapositive of Condition~\ref{inv:cond25c} that there is a distinct $\idx' \in [0, \idxpi - 1]$ for every vertex $w$ in the path $\sigma_1$ such that $\nodes[\idx'] = w$ in $C$.
	That is, every vertex $w$ in the path $\sigma_1$ is also a node $\node_{\pi'}$ for some $\pi' \in \Pi$.
	However, since $\curpi.\pred \notin \{ \nil, \& \fail, \& \incs, \& \key \}$ (because there is a cycle with the presence of $\sigma_1$ and the pointer $\curpi.\pred$), we have a contradiction to Condition~\ref{inv:cond3} since there is no $b \in \mathbb{N}$ for which the condition is satisfied.
	Therefore, $\sigma_1 \neq \sigma_2$ in $C$.
	
	From this argument it follows that the condition holds in $C'$. \\
	{\underline{Condition~\ref{inv:cond28}}:}
	If Condition~\ref{inv:cond28a} holds in $C$, it continues to hold in $C'$ and therefore the condition is satisfied.
	Similarly for Condition~\ref{inv:cond28b}, because $\curpi.\pred \neq \& \key$ and if $\curpi = \front(\sigma)$ for some maximal path which satisfied the condition, 
	then it continues to satisfy the condition in $C'$.
	If $\idxpi < k-1$ and the condition held in $C$ due to Condition~\ref{inv:cond28c}, then it continues to hold in $C'$ for the new value of $\idxpi$.
	If $\idxpi = k-1$ and the condition held in $C$ due to Condition~\ref{inv:cond28c}, it follows that $\nodes[k-1].\pred \neq \& \key$ (by assumption above) 
	and $\nodes[k-1].\pred.\pred \in \{ \& \incs, \& \key \}$.
	Therefore, by the step $(\nodes[k-1], \nodes[k-1].\pred)$ is added as an edge in the graph and we have a path that satisfies Condition~\ref{inv:cond28b} in $C'$.
	If Condition~\ref{inv:cond28d} holds in $C$, then it holds in $C'$ as well.
	\\
	{\underline{Condition~\ref{inv:cond58}}:}
	This condition holds by the definition of fragment and since the edge gets added to the graph.\\
	{\underline{Condition~\ref{inv:cond59}}:}
	It is easy to see that the second part of the condition holds because one of the nodes among $v$ and $v'$ was used up to enter the CS and hence even though the path runs through that node in the graph, the fragment is cut.
	Therefore, we argue the first part as follows.
	Suppose there is a $i < \idxpi$ such that $\nodes[i] = \node$ for a $\node \in \fragment(v)$.
	In a previous iteration the node was added in the graph, and if $\node.\pred$ was an actual node, then it also got added to the graph along with an edge between them.
	Therefore, the condition holds in $C'$.\\
	{\underline{Condition~\ref{inv:cond25}}:}
	As argued above, $\curpi.\pred \in \caln'$ (i.e., $\curpredpi \in \caln'$),
	it follows that $\forall v \in \Vpi, v \in \caln'$ in $C'$.
	It is easy to see that the rest part of Condition~\ref{inv:cond25a} holds.
	Condition~\ref{inv:cond25i} holds because if $\curpi.\pred \in \{ \& \incs, \& \token \}$, then the owner of $\curpredpi$ already completed Line~\refln{dsm:exit:2} to let the owner of $\curpi$ into CS.
	Condition~\ref{inv:cond25h} holds from Condition~\ref{inv:cond54}.
	It is easy to see that the remaining sub-conditions hold in $C'$. 
	
	\item[\refln{dsm:rep:10}.] \label{invproof:st38}
	$\pi$ executes Line~\refln{dsm:rep:10}. \\
	In $C$, $\pcpi = \refln{dsm:rep:10}$. \\
	The step computes the maximal paths in the graph $(\Vpi, \Epi)$ and the set $\pathspi$ contains every such maximal path.
	The step then sets $\pcpi$ to \refln{dsm:rep:11}. \\
	{\underline{Condition~\ref{inv:cond43}}:}
	If there is a maximal path $\sigma$ in $(\Vpi, \Epi)$ such that $\front(\sigma).\pred \in \{ \& \incs, \& \key\}$ and $\rear(\sigma).\pred \neq \& \key$,
	then the condition holds vacuously.
	Otherwise there is no maximal path $\sigma$ for which $\front(\sigma).\pred \in \{ \& \incs, \& \key\}$ and $\rear(\sigma).\pred \neq \& \key$.
	By Condition~\ref{inv:cond26}, $\fraghead(\fragment(\tailpi)).\pred \in \{ \& \incs, \& \key \} \vee |\calq| = 0$ in $C$, which continues to hold in $C'$.
	Since $\pathspi$ is a set of all maximal paths in $(\Vpi, \Epi)$, there is no path $\sigma \in \pathspi$ for which $\front(\sigma).\pred \in \{ \& \incs, \& \key\}$ and $\rear(\sigma).\pred \neq \& \key$.
	Therefore, the condition holds in $C'$. \\

	\item[\refln{dsm:rep:11}.] \label{invproof:st39}
	$\pi$ executes Line~\refln{dsm:rep:11}. \\
	In $C$, $\pcpi = \refln{dsm:rep:11}$.
	By Condition~\ref{inv:cond25a}, $\nodepi \in \Vpi$ and by Condition~\ref{inv:cond2}, $\nodepi = \mynodepi$. 
	By Condition~\ref{inv:cond32b}, all maximal paths in the graph are disjoint, therefore, every vertex in $\Vpi$ appears in a unique path in $\pathspi$.\\
	As argued above, $\mynodepi \in \Vpi$ in $C$, therefore, there is a path $\sigma$ in $\pathspi$ such that $\mynodepi \in \sigma$.
	The step sets $\mypathpi$ to be the unique path in $\pathspi$ in which $\mynodepi$ appears. 
	It then updates $\pcpi$ to \refln{dsm:rep:12}.\\
	{\underline{Condition~\ref{inv:cond60}}:}
	As argued above, in $C'$ $\mypathpi$ is the unique path in $\pathspi$ in which $\mynodepi$ appears.
	Therefore, the condition holds in $C'$.
	
	\item[\refln{dsm:rep:12}.] \label{invproof:st40}
	$\pi$ executes Line~\refln{dsm:rep:12}. \\	
	In $C$, $\pcpi = \refln{dsm:rep:12}$.
	By Condition~\ref{inv:cond22}, $\tailpathpi = \nil$ in $C$. 
	By Condition~\ref{inv:cond32b}, all maximal paths in the graph are disjoint, therefore, every vertex in $\Vpi$ appears in a unique path in $\pathspi$.\\
	In this step $\pi$ checks if $\tailpi \in \Vpi$. 
	If so, it sets $\tailpathpi$ to be the unique path in $\pathspi$ in which $\tailpi$ appears.
	Otherwise, it just updates $\pcpi$ to \refln{dsm:rep:13}. \\
	{\underline{Condition~\ref{inv:cond35}}:}
	As argued above, the step sets $\tailpathpi$ to be the unique path in $\pathspi$ in which $\tailpi$ appears. 
	Therefore, the condition holds in $C'$.\\
	{\underline{Condition~\ref{inv:cond63}}:}
	If $\tailpathpi \neq \nil$ and $\front(\tailpathpi).\pred \notin \{ \& \incs, \& \token \}$,
	it from Condition~\ref{inv:cond25} that $\fraghead(\fragment(\tailpi)).\pred \notin \{ \& \incs, \& \token \}$.
	Hence, the condition follows from Condition~\ref{inv:cond56}.
	
	\item[\refln{dsm:rep:13} (a).] \label{invproof:st41}
	$\pi$ executes Line~\refln{dsm:rep:13} when there is a path in $\pathspi$ not iterated on already. \\
	In $C$, $\pcpi = \refln{dsm:rep:13}$ and there is a path in $\pathspi$ not iterated on already. \\
	In this step $\pi$ picks a path $\mseqpi$ from $\pathspi$ that it didn't iterate on already in the loop on Lines~\refln{dsm:rep:13}-\refln{dsm:rep:16}.
	It then sets $\pcpi$ to \refln{dsm:rep:14}. \\
	Since no shared variables are changed and no condition of the invariant is affected by the step, 
	all the conditions continue to hold in $C$ as they held in $C'$. 
	
	\item[\refln{dsm:rep:13} (b).] \label{invproof:st42}
	$\pi$ executes Line~\refln{dsm:rep:13} when there is no path in $\pathspi$ not iterated on already. \\
	In $C$, $\pcpi = \refln{dsm:rep:13}$ and there is no path in $\pathspi$ not iterated on already. \\
	In this step $\pi$ finds that it has already iterated on all the paths from $\pathspi$ hence it just updates $\pcpi$ to \refln{dsm:rep:17}. \\
	Since no shared variables are changed and no condition of the invariant is affected by the step, 
	all the conditions continue to hold in $C$ as they held in $C'$. 

	\item[\refln{dsm:rep:14}.] \label{invproof:st43}
	$\pi$ executes Line~\refln{dsm:rep:14}. \\
	In $C$, $\pcpi = \refln{dsm:rep:14}$. \\
	In this step, $\pi$ checks if $\front(\mseqpi).\pred \in \{ \& \incs, \& \key \}$.
	If so, it updates $\pcpi$ to \refln{dsm:rep:15}; otherwise it updates $\pcpi$ to \refln{dsm:rep:13}. \\
	{\underline{Condition~\ref{inv:cond61}}:}
	If $\pcpi = \refln{dsm:rep:15}$ in $C'$, it is because of the \ifcode condition at Line~\refln{dsm:rep:14} succeeded.
	From the description of the step given above, it follows that the condition holds in $C'$.

	\item[\refln{dsm:rep:15}.] \label{invproof:st44}
	$\pi$ executes Line~\refln{dsm:rep:15}. \\
	In $C$, $\pcpi = \refln{dsm:rep:15}$. \\
	In this step, $\pi$ checks if $\rear(\mseqpi).\pred \neq \& \key$.
	If so, it updates $\pcpi$ to \refln{dsm:rep:16}; otherwise it updates $\pcpi$ to \refln{dsm:rep:13}. \\
	{\underline{Condition~\ref{inv:cond61}}:}
	If $\pcpi = \refln{dsm:rep:16}$ in $C'$, it is because of the \ifcode condition at Line~\refln{dsm:rep:15} succeeded.
	Therefore, $\front(\mseqpi).\pred \in \{ \& \incs, \& \key \}$ and $\rear(\mseqpi).\pred \neq \& \key$.
	It follows that the length of the path is more than 1 and $\rear(\mseqpi) = \fragtail(\fragment(\rear(\mseqpi)))$.
	Therefore, from the description of the step given above, it follows that the condition holds in $C'$.
	
	\item[\refln{dsm:rep:16}.] \label{invproof:st45}
	$\pi$ executes Line~\refln{dsm:rep:16}. \\
	In $C$, $\pcpi = \refln{dsm:rep:16}$.\\
	The step sets $\headpathpi = \mseqpi$ and updates $\pcpi$ to \refln{dsm:rep:17}. \\
	{\underline{Condition~\ref{inv:cond44}}:}
	This condition follows as a result of Conditions~\ref{inv:cond1}, \ref{inv:cond26}, \ref{inv:cond27} and since the step sets $\headpathpi = \mseqpi$.
	
	\item[\refln{dsm:rep:17} (a).] \label{invproof:st46}
	$\pi$ executes Line~\refln{dsm:rep:17} when $\tailpathpi \neq \nil \wedge \front(\tailpathpi).\pred \notin \{ \& \incs, \& \key \}$. \\
	In $C$, $\pcpi = \refln{dsm:rep:17}$ and $\tailpathpi \neq \nil \wedge \front(\tailpathpi).\pred \notin \{ \& \incs, \& \key \}$. \\
	In this step $\pi$ checks for the \ifcode condition at Line~\refln{dsm:rep:17} to be met.
	Since $\tailpathpi \neq \nil \wedge \front(\tailpathpi).\pred \notin \{ \& \incs, \& \key \}$, 
	the \ifcode condition is not met, hence, $\pi$ updates $\pcpi$ to \refln{dsm:rep:19}. \\
	Since no shared variables are changed and no condition of the invariant is affected by the step, 
	all the conditions continue to hold in $C$ as they held in $C'$. 
	
	\item[\refln{dsm:rep:17} (b).] \label{invproof:st47}
	$\pi$ executes Line~\refln{dsm:rep:17} when $\tailpathpi = \nil \vee \front(\tailpathpi).\pred \in \{ \& \incs, \& \key \}$. \\
	In $C$, $\pcpi = \refln{dsm:rep:17}$ and $\tailpathpi = \nil \vee \front(\tailpathpi).\pred \in \{ \& \incs, \& \key \}$.  \\
	In this step $\pi$ checks for the \ifcode condition at Line~\refln{dsm:rep:17} to be met.
	Since $\tailpathpi = \nil \vee \front(\tailpathpi).\pred \in \{ \& \incs, \& \key \}$, 
	the \ifcode condition is met, hence, $\pi$ updates $\pcpi$ to \refln{dsm:rep:18}. \\
	Since no shared variables are changed and no condition of the invariant is affected by the step, 
	all the conditions continue to hold in $C$ as they held in $C'$. 
	
	\item[\refln{dsm:rep:18}.] \label{invproof:st48}
	$\pi$ executes Line~\refln{dsm:rep:18}. \\
	In $C$, $\pcpi = \refln{dsm:rep:18}$. \\
	In this step $\pi$ performs a \fas\ on $\tail$ with the node $\rear(\mypathpi)$ and stores the returned value of the \fas\ into $\mypredpi$.
	It then updates $\pcpih$ to \refln{dsm:try:5} and $\pcpi$ to \refln{dsm:rep:20}. \\	
	{\underline{Condition~\ref{inv:cond45}}:}
	By Condition~\ref{inv:cond35}, $\fragment(\nodepi) \neq \fragment(\tail)$ in $C$.
	Applying Condition~\ref{inv:cond15} to $C$ and $\tail$ we see that the condition holds in $C'$.
	
	\item[\refln{dsm:rep:19}.] \label{invproof:st49}
	$\pi$ executes Line~\refln{dsm:rep:19}. \\
	In $C$, $\pcpi = \refln{dsm:rep:19}$.  \\
	If $\headpathpi \neq \nil$ then the step sets $\mypredpi = \rear(\headpathpi)$; otherwise it sets $\mypredpi = \& \spclnode$.
	The step then updates $\pcpih = \refln{dsm:try:5}$ and $\pcpi = \refln{dsm:rep:20}$. \\
	{\underline{Condition~\ref{inv:cond45}}:}
	For any value that $\mypredpi$ takes in the step, we note that $\fraghead(\fragment(\mypredpi)).\pred \in \{ \& \incs, \& \token \}$.
	If $\headpathpi \neq \nil$, then all the parts of the condition are satisfied in $C'$ which can be verified from Condition~\ref{inv:cond44} holding in $C$.
	Also, $\fragment(\nodepi) \neq \fragment(\first(\headpathpi))$ in $C'$, since $\fraghead(\fragment(\first(\headpathpi))).\pred \in \{ \& \incs, \& \key \}$.
	If $\headpathpi = \nil$ then $\mypredpi = \& \spclnode$ in $C'$ and it is easy to verify again that the condition holds in $C'$.
	Therefore, the condition holds in $C'$.
	
	\item[\refln{dsm:rep:20}.] \label{invproof:st50}
	$\pi$ executes Line~\refln{dsm:rep:20}. \\
	In $C$ $\pcpi = \refln{dsm:rep:20}$. \\
	As a result of the step, $\mynodepi.\pred$ to $\mypredpi$ and updates $\pcpih$ to \refln{dsm:try:16} in $C'$. 
	$\pi$ also executes the Exit section of $\rlock$, hence $\pcpi = \refln{dsm:try:16}$ in $C'$. \\
	The argument for correctness for this step is similar to that of the argument given for execution of Line~\refln{dsm:try:5}.
	Therefore, we refer the reader to those arguments above.
	
	\item[Crash.] \label{invproof:st51}
	$\pi$ executes a crash step. \\
	This step changes $\pcpi$ to \refln{dsm:try:1} and sets the rest of the local variables to arbitrary values.
	The values of the shared variables remain the same as before the crash. \\
	The step does not affect any condition, so the invariant continues to hold in $C'$.
\end{itemize}

Thus, by induction it follows that the invariant holds in every configuration of every run of the algorithm.
\end{proof}

\end{document}